\newtheorem{proposition}{Proposition}[section]
\newtheorem{theorem}[proposition]{Theorem}
\newtheorem{lemma}[proposition]{Lemma}
\newtheorem{remark}[proposition]{Remark}
\newtheorem{thm}[proposition]{Theorem}
\newtheorem{lem}[proposition]{Lemma}
\newtheorem{assumption}[proposition]{Assumption}
\newcommand{\NN}{N,n}
\newcommand{\la}{\left \langle}
\newcommand{\ra}{\right\rangle}
\newcommand\myeq{\mathrel{\overset{\makebox[0pt]{\mbox{\normalfont\tiny\sffamily def}}}{=}}}
\begin{document}

\title[ Network effects in default clustering for large systems]{ Network effects in default clustering for large systems}

\author{Konstantinos Spiliopoulos and Jia Yang}
\address{Department of Mathematics and Statistics \\
Boston University \\
Boston, MA 02215}
\email[Konstantinos Spiliopoulos]{kspiliop@math.bu.edu}
\email[Jia Yang]{jiayang@bu.edu}
\thanks{The present research was partially supported by the National Science Foundation (DMS 1412529 and DMS 1550918). We would like to thank Kay Giesecke and Paolo Guasoni for discussions on this project. }
\date{\today}

\begin{abstract}
We consider a large collection of dynamically interacting components defined on a weighted directed graph determining the impact of default of one component to another one. We prove a law of large numbers for the empirical measure capturing the evolution of the different components in the pool and from this we extract important information for quantities such as the loss rate in the overall pool as well as the mean impact on a given component from system wide defaults. A singular value decomposition of the adjacency matrix of the graph  allows to coarse-grain the system by focusing on the highest eigenvalues which also correspond to the components with the highest contagion impact on the pool. Numerical simulations demonstrate the theoretical findings.
\end{abstract}

\maketitle
\section{Introduction}

The financial crisis of 2007-2009 made clear to the mathematical finance community that connectedness and  network effects in financial systems need to be better understood and modelled. Risk can propagate through the system and network topology can affect its propagation.

Exogenous risks acting as initial shocks, such  as devaluation of mortgage-backed securities, changes in interest rates or commodity prices cannot fully explain crisis events, but can lead to contagion effects, \cite{Meinerding2012,azizpour-giesecke-schwenkler}. In particular, shocks can lead to spiral events within the system and the topology and connectedness of the system can then affect how these spiral events unfold and propagate.  This can then lead to systemic risk events, see for example \cite{Pedersen2009}, which has been by now widely accepted to be a dynamic event, \cite{azizpour-giesecke-schwenkler,Brunnermeier2012}.

In the past ten years researchers have tried to understand and model such behavior in different ways. A significant body of literature has emerged that is aiming at understanding and modeling complex financial systems. Before describing the main contributions of this paper, let us first briefly describe the three main different lines of research that have emerged in the study of systemic risk. Firstly, there is the network models for clustering and contagion that follow the earlier work of \cite{AllenGale2000,EisenbergNoe2001}, see also \cite{GlassermanYoung2016} for a review. Secondly, there is the dynamic mean field type of models literature, see for example \cite{BoCapponi2013,BoCapponi2015,CapponiSunYao2019,daipra-tolotti,FouqueIchiba2013,giesecke-weber,hambly2011,HamblySojmark2018,PapanicolaouSystemicRisk}. Thirdly, there is the reduced form credit and portfolio risk literature that is using intensity models of correlated default, \cite{CMZ,GSS,GSSS,Spiliopoulos2015,SpiliopoulosSiriganoGiesecke2013,SpiliopoulosSowers2013}. Despite this significant progress, many questions are still wide open.

 Our work falls in the last category, i.e., in the reduced form credit risk literature. Motivated by the empirical work of \cite{azizpour-giesecke-schwenkler} and following \cite{GSS,GSSS}, the intensity to default process for each individual name in the pool is characterized by three terms: an idiosyncratic term, which is specific to each name, a contagion term, which is responsible for clustering of defaults, and an exogenous risk term common to all names in the pool. When considering a large system, we will often refer to this as pool of names where names are the system's components.  As it has been established in \cite{azizpour-giesecke-schwenkler,GSS, GSSS}, see also \cite{Spiliopoulos2015} for a review, these terms give important insights on how risk propagates and on how defaults cluster. Due to the interconnectedness of the system, the failure of a single component increases the likelihood of failure of other components in the system.  Uncertainty becomes an issue which then leads market participants to fear even more losses in asset prices disproportional  to the magnitude of the crisis. Reduce-form point process models of correlated default are many times used to assess portfolio credit risk and are based on counting processes. We use dynamic portfolio credit risk models to understand  large financial systems asymptotics and default clustering.

Our contribution in this paper is twofold. Firstly, we consider network effects, a feature missing from the earlier work of \cite{CMZ,GSS} and its follow ups.
To be more precise, we specify the interaction of names  by a weighted, directed graph $G(\Gamma,\mathcal{E}, \omega)$ where $\Gamma$ is the set of vertices (i.e., names), $\mathcal{E}$ is the set of (directed) edges and $\omega \,:\, \mathcal{E} \to (0,\infty)$ is a function assigning weights to edges
(as a convention we could define $\omega(i,j) = 0$ whenever $(i,j) \notin \mathcal{E}$). An edge $(i,j) \in \mathcal{E}$ implies a directed interaction, the impact that the default of name $i$ has on name $j$. The weight $\omega(i,j)$ measures the strength
of the interaction. For example, $\omega(i,j)$ could represent the loss of name $j$ at the default of counterparty $i$ (the loss is usually the positive part of the mark-to-market value of the contract at default). As we shall see, the weight $\omega(i,j)$ also represents the magnitude of the increase in the default intensity of name $j$ due to counterparty losses at the default of name $i$. Let $\Delta$ be the matrix with elements $\omega(i,j)$ for $i,j=1,\cdots, N$. As it turns out, a singular value decomposition (SVD) of $\Delta$ allows us to quantify contagion effects. In addition, the SVD  allows us to quantify the levels of interaction (this is the number of non-zero eigenvalues of $\Delta$ and it will become precise in Section \ref{S:Model}) that we need in order to  effectively coarse-grain the heterogeneous system. It also allows us to reduce the dimensionality of the system via appropriate low-rank approximations. In this paper, we theoretically analyze the limit of the empirical measure of surviving names as $N\rightarrow\infty$ and we also showcase the different cases by numerical studies. We demonstrate numerically that if there is sufficient spectral gap in the eigenvalues of $\Delta$ from the SVD, then the probability distribution of stochastic processes of interest is very well approximated by appropriate low rank approximations. This becomes practically useful, since without the low-rank approximation, as we will see, the computation of the quantities of interest can become prohibitively expensive.

In this paper, we assume that we are given an adjacency matrix $\Delta$ with sufficiently regular behavior (see Sections \ref{S:Model}-\ref{S:Assumptions} for details). Then, our goals are to study the typical behavior of the loss rate both in the overall pool and within names of the same type. In addition, we study the mean impact to default on a given name from system wide defaults as the number of components $N\rightarrow\infty$. We allow the pool to be heterogeneous with stochastic intensity that evolves dynamically in time and with different weights $\omega(i,j)$ for different $i,j$. In addition, the  loss rate (either overall in the pool or for names of specific types) and the mean default impact  on a given name from system wide defaults are dynamic quantities and their computations can be numerically cumbersome. We show numerically that low-rank approximations motivated through the SVD can be very effective in accurately reducing the dimension of the system and thus making their evaluation numerically tractable.

Therefore, the procedure developed in this paper allows to quantify the effect of the given adjacency matrix $\Delta$ on dynamic quantities that are of interest, such as distribution  of the loss rate in the pool, distribution of the loss rate within names of specific types, mean effect on given names from system wide defaults, etc.  Note that evaluation of quantities such as loss rate of the whole pool and loss rate within names of specific types offers additional insights into the possibility of many names defaulting within short periods of time from each other (i.e. of default clustering). Indeed, an increase of the  mean of the loss rate of the pool at a given time signals higher likelihood  of many defaults. Then, studying the loss rates within names of specific types indicates which types of names are more likely to default. Naturally, names of types with larger mean loss rate will be more likely to default, revealing the structure of the cascade event. In addition, we find,  via the SVD, that the mean loss rate in the pool is positively correlated with a specific coefficient, later on called the contagion coefficient. In particular, the contagion coefficient is a function of the corresponding singular values and of the orthogonal vector coefficients capturing the exposure of the network to contagion. We demonstrate these findings in our numerical studies of Section \ref{S:Numerics}, where we demonstrate how these issues can be quantified.

Secondly, we consider general stochastic intensity-to-default processes where the drift coefficient of the idiosyncratic component is only required to satisfy appropriate dissipative properties instead of requiring it to be affine. We prove well-posedness of the related stochastic intensity models and rigorously characterize the limit of the empirical survival distribution of the names in the pool as their number grows to infinity.

In the recent related work \cite{CapponiSunYao2019}, that falls in the literature of dynamic mean field models, the authors consider a model of interbank lending (not a reduced form credit risk model that we consider here) which  accounts for network topology and propagation of systemic risk and perform asymptotic analysis as the number of banks $N\rightarrow\infty$ as well. In the present paper we also look at the limit as $N\rightarrow\infty$ and account for network topology, but we focus on the impact of the network matrix through spectral analysis on the evolution of default intensities and on specific statistics of interest such as loss rate in the pool and mean impact on specific types of network components.

At this point, we want to mention that even though our primary motivation comes from interacting particle systems in financial mathematics, our results are broader applicable. In a given system with many different components, not all components are equally connected to other components or equally affected by the default of other components. The failure of one component due to external forcing giving rise to failure of other components of a given system is of broader interest.

The rest of the paper is organized as follows. In Section \ref{S:Model} we describe our model in detail. In Section \ref{S:Assumptions} we lay down our assumptions that are assumed to hold throughout the paper. Section \ref{S:MainResults} contains the main results of this paper. The proof of the main theorem is in the subsequent sections. In particular, tightness and characterization of the limit points of the empirical measure is discussed in 
Section \ref{S:LimitCharacterization} followed by uniqueness of the limiting point in Section \ref{S:Uniqueness}. Section \ref{S:Numerics} contains our simulation studies and numerical results on low-rank approximations. Technical results and their proofs have been gathered in Appendix \ref{A:Appendix}. Section \ref{S:Conclusions} is about our conclusions and outlook for future work.
\section{Model description}\label{S:Model}

The model considered in this paper models the evolution of a system consisting of $N$ names which are subject to default risk. The model for the  default risk takes into account three terms:  an idiosyncratic risk (specific to a given name), a systematic risk (common to all names) and a term modeling default contagion and spiral events. The last term takes into account the network topology.

Fix a probability space $(\Omega,\mathcal{F},\mathbb{P})$ where all random variables are defined. Let $\{W^n\}_{n\in\mathbb{N}}$ be a collection of i.i.d. standard Brownian motions which are used to model the idiosyncratic risk for each component of the pool. Let $V$ be a standard Brownian motion independent from  the $W^n$'s,  driving the randomness of  the systematic risk factor process $X$. Let $\mathcal{V}_t=\sigma(V_s,0\leq s \leq t)\vee \mathcal{N}$, where $\mathcal{N}$ is the set of null sets. Let $\{\mathfrak{e}_n\}_{n\in\mathbb{N}}$ be a collection of independent standard exponential random variables.

For $N\in \mathbb{N}$ and $n\in \{1,2,\ldots,N\}$, denote by $\tau^{\NN}$  the stopping time at which the $n$-th component of the system fails. The failure time $\tau^{\NN}$ has stochastic intensity process $\lambda^{\NN}$ to be described below. The default time $\tau^{N,n}$ is
\[
\tau^{N,n} = \mbox{inf} \Big\{t\geq0: \int_{0}^t \lambda_s^{N,n} ds \geq \mathfrak{e}_n\Big\}.
\]

We can also write
\[
\chi_{\{\tau^{N,n} \leq t\}}=\chi_{[\mathfrak{e}_n,\infty)}\Big(\int_{0}^t\lambda_s^{N,n} ds\Big),
\]
where $\chi_B$ is the indicator function for a set $B$.

Recall the network structure of the system, which is described by a directed graph $G(\Gamma,\mathcal{E},\omega)$ where $\Gamma$ is the set of components in the system, $\mathcal{E}$ is the set of directed edges and $\omega:\mathcal{E}\to (0,\infty)$ is the function assigning weights to edges. $\omega(i,j)$ represents the default impact the $i$-th name has on the $j$-th firm.

Then, the total loss experienced by name $j$ due to system wide defaults by time $t$ is
\begin{align}
  \sum_{i =1}^N \omega(i,j) \chi_{\{\tau^{N,i}\leq t\}},
 \end{align}
 and, as we shall see, it also represents the total increase in the default intensity of the $n$-th name in the pool. Let $\Delta$ be the adjacency matrix of $G$, i.e. the $(i,j)$-th
entry of $\Delta$ is given by $\omega(i,j)$ for $(i,j) \in \mathcal{E}$ and
$0$ if $(i,j) \notin \mathcal{E}$.

 Then, the classical  singular value decomposition (SVD in short) yields
\begin{align} \label{SVD}
  \Delta = \sum_{j = 1}^d \xi^{2}_j \, \ell_j u_j^\top
\end{align}
where $\{\ell_1, \dots, \ell_d\}$ are orthonormal vectors
(spanning columns of $\Delta$), $\{u_1, \dots, u_d\}$ are orthonormal vectors
(spanning rows of $\Delta$) and $\xi^{2}_1 > \xi^{2}_2 > \dots >
\xi^{2}_d > 0$ are real numbers known as the singular values.

Here, $d\leq N$ is called the rank of $\Delta$.
In a sense $d$ represents the complexity of the system. The larger $d$ is, the more complex the structure of the interaction becomes.

Let $\ell_{i,j}$ be the
$i$-th entry of $\ell_j$ in (\ref{SVD}) and similarly let $u_{i,j}$ be the $i$-th entry of the vector $u_j$.
The mean default impact on the $n$-th name from system wide defaults up to time $t$, can be written as
\begin{align}
Q_t^{N,n,\Delta}&=\frac{1}{N}\sum_{i=1}^N\omega(i,n)\chi_{\{\tau^{N,i} \leq t\}}=\sum_{j=1}^d \xi_j^2 u_{n,j} \frac{1}{N} \sum_{i=1}^N \ell_{i,j} \chi_{\{\tau^{N,i} \leq t\}}=\beta_{n}^{C,\Delta} \cdot L_t^{N,\Delta},\label{Eq:MeanImpact0}
\end{align}
where $\beta_{n}^{C,\Delta}=(\xi_1^2 u_{n,1}, \xi_2^2 u_{n,2}, \ldots, \xi_d^2 u_{n,d})^T$ and the  vector-valued process $L^{N,\Delta}_t=(L^{N,1}_t,L^{N,2}_t,\dots,L^{N,d}_t)^T$ has elements
\[
L_t^{N,j}=\frac{1}{N}\sum_{i=1}^N \ell_{i,j} \chi_{\{\tau^{N,i} \leq t\}}.
\]

The $j$-th entry of $L_t^{N,\Delta}$ can be loosely interpreted as the stochastic loss rate of the $j$-th level of interaction of the network.

The element $\ell_{i,j}$ of the vector $\ell_{j}$ can be interpreted as the contribution of the $i$-th bank on the $j$-th level of interaction, for $j=1,\cdots, r$. Analogously, the element $\beta_{n,i}^{C,\Delta}$ of the vector $\beta_{n}^{C,\Delta}$ can be interpreted as the exposure of the $n$-th bank on the $i$-th level of interaction for $i=1,\cdots, r$.

Notice that $Q_t^{N,n,\Delta}$ can be interpreted as the mean increase over the $n$-th bank's intensity to default due to the default of other banks by time $t$.

We are interested in the behavior of quantities like $Q_t^{N,n,\Delta}$ when the system is large, i.e. when $N\rightarrow\infty$. As we elaborate in more detail in Remark \ref{R:NetworkRemark}, in large systems it is reasonable to rely on a low rank approximation. In addition, for purposes of computational feasibility one would like to approximate $\Delta$ by an appropriate low rank approximation.

One popular way to do so, is to use a classical result from matrix algebra  stating that
if $0<r<d$ is a positive integer, then the minimal value of the $L^{2}$ distance $\|D-B\|_{2}$ (the standard Frobenius norm) over all matrices $B$ with rank less or equal to $r$ is achieved at
\[
 A= \sum_{j = 1}^r \xi^{2}_j \, \ell_j u_j^\top
\]
with $\xi^{2}_j$ in decreasing order. In addition, we actually have
\begin{align}
 \|\Delta-A\|_{2}&=\sum_{i=r+1}^{d}\xi^{2}_{i}.\label{Eq:SVDreduction}
\end{align}

Such a reduction is especially meaningful if the rank of $\Delta$, $d$, is large but there are only a few dominant eigenvalues. In such a situation one typically would like to take advantage of this. This is the practical perspective that we take here. In fact, given a large matrix $\Delta$ one would first investigate the possibility of a good low rank approximation, then choose a certain low rank approximation that is comfortable with and work with that.  As we shall see in Section \ref{S:Numerics}, such an approximation in combination with the coars-graining achieved by Theorem \ref{T:MainTheorem}  makes the problem computationally more tractable.

At this point, let us also mention that while the elements of the original matrix $\Delta$, i.e. $\omega(i,j)$, are nonnegative, it is likely that an arbitrarily chosen low-rank approximation $A$ to $\Delta$ could have some of its elements to be negative. Therefore, some financial meaning could be lost sometimes depending on the chosen low-rank approximation. However, one does not expect this to be the case if the spectral gap in the eigenvalues is sufficiently large and one chooses a low rank approximation consistent with the spectral gap (i.e. one that corresponds to (\ref{Eq:SVDreduction}) with small right hand side). The numerical examples in Section \ref{S:Numerics} demonstrate that in this case the value of the statistics of interest (financial indicators of interest), up to negligible approximation errors, are not affected by such a good low-rank approximation.

The previous discussion then motivates us to replace $\Delta$ by $A$ and to subsequently define the quantity
\begin{align}
Q_t^{N,n,A}&=\beta_{n}^{C,A} \cdot L_t^{N,A},\label{Eq:MeanImpact}
\end{align}
where $\beta_{n}^{C,A}=(\xi_1^2 u_{n,1}, \xi_2^2 u_{n,2}, \ldots, \xi_r^2 u_{n,r})^T$ and the  vector-valued process $L^{N,A}_t=(L^{N,1}_t,L^{N,2}_t,\dots,L^{N,r}_t)^T$. For simplicity of notation we use the same notation for the components $L^{N,i}_{t}$ for both A and $\Delta$, even though we always work with $L^{N,A}$, so there should be no confusion.

Now that we have discussed the matrix $\Delta$ defining the network structure, let us be more specific in regards to the dynamics. An intensity is driven by an idiosyncratic risk represented by a Brownian motion $W^n$, a systematic risk represented by the process $X$, and spillover risk represented by the process $Q_t^{N,n,A}=\beta_n^{C,A} \cdot L^{N,A}_t$ (defined via $A$, the low-rank approximation to $\Delta$).
In particular, we consider the following interacting system
\begin{eqnarray}
d\lambda^{N,n}_t&=& b(\lambda^{N,n}_t,a_n)dt + \sigma_n\cdot(\lambda^{N,n}_t)^\rho dW^n_t +\beta^{C,A}_{n}\cdot dL^{N,A}_t + \beta^S_{n} \lambda^{N,n}_t dX_t\nonumber\\
\lambda^{N,n}_0&=&\lambda_{0,N,n}\nonumber\\
dX_t&=&b_0(X_t)dt + \sigma_0(X_t)dV_t\nonumber\\
X_0&=&x_0\nonumber\\
L_t^{N,j}&=&\frac{1}{N} \sum_{n=1}^N \ell_{n,j} \chi_{\{\tau^{N,n} \leq t\}}. \quad \quad j=1,2,\dots,r\label{Eq:MainModel}
\end{eqnarray}

Notice that (\ref{Eq:MainModel}) has been defined in terms of $A$ and not in terms of the original $\Delta$. This represents what one would do in practice, in order to simplify the system as it will become clearer in Sections \ref{S:Assumptions} and \ref{S:MainResults}.

In addition, we allow for a heterogeneous pool, which means that the intensity dynamics of different names can be different. In the model $\sigma_n \in \mathbb{R}_{+}$,$a_n \in \mathbb{R}^k$ for some $k>0$, $\beta_{n}^S \in \mathbb{R}$ are constants and $1/2\leq \rho < 1$. Let us set $\mathcal{P}=\mathbb{R}_{+}\times\mathbb{R}^{k+2r+1}$ and $\hat{\mathcal{P}}=\mathcal{P}\times\mathbb{R}_{+}$. For all $n\in \{1,2,\dots, N\}$, we capture these different dynamics by defining the ``types"
\begin{equation}
p^{n} = (\sigma_{n},a_n,\beta^C_{n,1},\cdots,\beta^C_{n,r},\beta^S_n, \ell_{n,1},\cdots,\ell_{n,r})\in \mathcal{P}\label{Eq:ParameterVector1}
\end{equation}
and
\begin{equation}
 \hat{p}^{n} = (p^{n},\lambda_{0,N,n})\in \hat{\mathcal{P}}.\label{Eq:ParameterVector2}
\end{equation}
Furthermore, we let  $\hat{p}^{n}_{t} = (p^{n},\lambda_{t}^{\NN})\in \hat{\mathcal{P}}$.

From now on we suppress the superindex $A$, and we simply write $Q_t^{N,n},\beta_{n}^{C}, L^{N}_t$ in place of $Q_t^{N,n,A},\beta_{n}^{C,A}, L^{N,A}_t$. It will always be clear from context which matrix is being used.

As just mentioned $Q_t^{N,n}=\beta_{n}^C \cdot L_t^N$ represents  the (approximate, due to the potential low-rank approximation) mean impact on the $n$-th name from system wide defaults up to time $t$. The vector $\beta_{n}^{C}=(\beta_{n,1}^{C},\cdots, \beta_{n,r}^{C})$ with $\beta_{n,i}^{C}=\xi_i^2 u_{n,i}$ will be interpreted as a contagion coefficient vector. Higher values of $\beta_{n,i}^C$ imply higher impact on the default intensity of the $n$-th name. This is natural to expect as the $n$-th column of the matrix $A$ represents the impact from defaults when claims of the $n-$th institution towards all other institutions are present.  Other network performance indicators of interest are $D_t^N = \frac{1}{N} \sum_{n=1}^N \chi_{\{\tau^{N,n} \leq t\}}$ and $D_t^{N}(p_B) =\frac{1}{N_B} \sum_{n=1}^N \chi_{\{\tau^{N,n} \leq t\}} \chi_{\{p^{N,n}=p_B\}}$  with $N_{B}=\sum_{n=1}^{N}\chi_{\{p^{N,n}=p_B\}}$ (we use the notation $\{p^{N,n}=p_B\}$ to distinguish names of type $B$), the overall loss rate in the pool and the loss rate for names of the same type, say type $B$, respectively. When $N$ is large, numerical approximation of the distribution of these quantities becomes possible through the approximation theorem (Theorem \ref{T:MainTheorem}) of this paper. As we shall see in Section \ref{S:Numerics}, names of types with large contagion coefficients will tend to have larger mean losses.

In addition, $d$ for $\Delta$ or $r$ for its low-rank approximation $A$ reveals a hierarchical structure of $d$ or $r$ levels respectively. For example, a rank one $(r=1)$  approximation of the matrix $\Delta$ will have a  more homogeneous structure than a rank two $(r=2)$ approximation of the matrix $\Delta$. In particular,  names that are of the same type in a rank one approximation of $\Delta$ (in terms of the dynamic evolution of their intensity process from (\ref{Eq:MainModel})), may be of different type in a rank two approximation (and thus have different intensity to default process in terms of (\ref{Eq:MainModel})). Said otherwise, a network system corresponding to a matrix $\Delta$ with a large number of non-zero eigenvalues $r$ will have a finer structure than a system with a smaller number of  $r$.   One can interpret  $r$ as the number of levels of interaction in the system. We will discuss this again in Sections \ref{S:Assumptions} and \ref{S:Numerics}.

 Our paper extends significantly the result of \cite{GSSS}. Firstly, the drift term $b(\lambda,a)$ only needs to have certain dissipative properties with respect to $\lambda$.  Secondly, we now have a network structure described through the adjacency matrix $\Delta$. As we shall see, the analysis of this model is not only more challenging, but it also requires new arguments and ideas.  While the main arguments and overall proof strategy is based on the  methods of \cite{GSS,GSSS}, the new mathematical arguments that are needed, are presented in the Appendix \ref{A:Appendix}. The introduction of the network structure through the adjacency matrix $\Delta$, allows for a far richer set of questions to be asked.

\section{Notation and Assumptions}\label{S:Assumptions}
In this section, we go over our assumptions that are assumed to hold throughout the paper.

We start with Assumptions \ref{A:Assumption2}, \ref{A:Assumption1b} and \ref{A:Assumption1} that are related to the importance of having sufficiently regular behavior of the adjacency matrix $\Delta$, or more specifically of its low-rank approximation $A$, and of the vector of parameters  $p^{n}$ and $\hat{p}^{n}$ defined via (\ref{Eq:ParameterVector1}) and (\ref{Eq:ParameterVector2}) respectively. In addition to the rest of the assumptions,  Assumptions \ref{A:Assumption2}, \ref{A:Assumption1b} and \ref{A:Assumption1} guarantee  well defined limits later on as well as computational feasibility of the limit equation.

\begin{assumption}\label{A:Assumption2}
	Assume that there is a constant $K_{\ref{bdd}}>0$ such that all the coefficients $\sigma_n$, $a_n$, $||\beta_n^C||$, $|\beta_n^S|$ and $|\ell_{n,j}|$ $j=1,2,\ldots,d$ and $n=1,2,\cdots$ are bounded by $K_{\ref{bdd}}$ and there exists a $\bar{\sigma}>0$ that $\inf_{n} \sigma_n^2 \geq \bar{\sigma}^2>0$.\label{bdd}
\end{assumption}

Assumptions \ref{A:Assumption1b} and \ref{A:Assumption1} that follow are phrased in terms of $A$ because the model (\ref{Eq:MainModel}) is based on $A$. Clearly, if they already hold for the ordinal matrix $\Delta$, then $\Delta$ can be used directly in place of $A$ in (\ref{Eq:MainModel}).

 In practice one is typically given a large matrix $\Delta$, chooses a good low-rank approximation to $\Delta$ and works with that specific approximation. In other words, for all practical purposes, one would like  to be able to work with low-rank approximations  $A$.   In fact, for theoretical reasons, we will assume a little bit more as Assumption \ref{A:Assumption1b} specifies.
\begin{assumption}\label{A:Assumption1b}
We assume that as $N$ grows, the rank $r$ of the matrix $A$ that is used in the model (\ref{Eq:MainModel}) stays bounded.
\end{assumption}

Next, let us define
\[
\pi^N=\frac{1}{N}\sum_{n=1}^N \delta_{p^{n}},\text{ and }\Lambda_0^N=\frac{1}{N}\sum_{n=1}^N \delta_{\lambda_{0,N,n}}.
\]

The measures $\pi^N$ and $\Lambda_0^N$ belong to  the space of  Borel probability measures on $\mathcal{P}$ and $\mathbb{R}$ respectively. These spaces will be denoted by $\mathfrak{P}(\mathcal{P})$ and $\mathfrak{P}(\mathbb{R})$ respectively.
\begin{assumption}\label{A:Assumption1}
Assume that the limits
	$$\pi=\lim_{N\to \infty}\pi^N$$
	$$\Lambda=\lim_{N\to \infty}\Lambda_0^N$$
	exist on $\mathfrak{P}(\mathcal{P})$ and $\mathfrak{P}(\mathbb{R})$ respectively.
\end{assumption}

Undoubtedly Assumptions \ref{A:Assumption2}, \ref{A:Assumption1b} and \ref{A:Assumption1} imply certain behavior of the network of institutions. The following Remarks \ref{R:NetworkRemark} and \ref{R:NetworkRemark2} are related.
\begin{remark}\label{R:NetworkRemark}
Assumption \ref{A:Assumption2} on the boundedness of $||\beta_n^C||$ and $|\ell_{n,j}|$ for $j=1,2,\ldots,d$ and all $n\in\mathbb{N}$ allows us to prove tightness of the measure valued process keeping track of the defaults (see Section \ref{S:MainResults}) but it also implies that the original matrix $\Delta$ can be very well approximated by setting equal to zero singular values lower than a given threshold, see \cite{Chatterjee2015}. In particular, \cite{Chatterjee2015} shows that for given $\epsilon>0$, the $\epsilon$-rank of $\Delta$ (i.e. the smallest possible rank of matrices whose distance from $\Delta$ in
terms of the maximum absolute entry norm is less than $\epsilon$) is at most of order $\sqrt{N}$. This result is then strengthened in \cite{UdelTownsend2019} to order of $\log N$, if in addition each element of the matrix $\Delta$ can be generated by applying a piecewise analytic function to potentially high dimensional but bounded latent variables.

These results imply that sufficiently large data sets tend to have low rank structure even if there may be no underlying physical reason, see \cite{UdelTownsend2019}.  These suggest that when $N$ is large one can reasonably expect that the matrix $\Delta$ is well approximated by a low-rank matrix $A$. This is the regime of interest in this paper. Low rank approximations are not new in the financial literature, see for example \cite{Jondeau}.
  Low rank structure is evident in block-models networks and low-rank approximations can be used to identify core-periphery structures (a well known financial network of interest) see \cite{Cucuringu}.

 The empirical results of \cite{CraigVonPeter2014} demonstrate  that  the core-periphery structure is a financial network of interest.  It is found empirically in \cite{CraigVonPeter2014} for the German interbank network that interbank markets are tiered which means that most banks do not lend to each other directly but through intermediaries. This phenomenon can be captured by a core-periphery model. The network observed in \cite{CraigVonPeter2014} is sparse, directed and valued.

In this paper, we are interested in studying the limit behavior of dynamic quantities such as  $Q_{t}^{n,N,\Delta}$ and loss rate in  the pool or within names of given type as $N\rightarrow\infty$ and in order to be able to do so, both mathematically and numerically, we need to assume that we can work with a matrix $\Delta$ (or an appropriate low-rank approximation $A$) such that its rank can be taken, or approximately considered to be bounded as $N\rightarrow\infty$. Assumption \ref{A:Assumption1b} makes this restriction precise, in which case the theoretical results of Section \ref{S:MainResults} hold. In addition, Assumption \ref{A:Assumption1b} also holds in the numerical examples, including the core-periphery one, that we numerically study in Section \ref{S:Numerics}. In the numerical experiments presented in Section \ref{S:Numerics}, it will be clear which matrix is being used to define $Q_{t}^{n,N}$ and consequently the model (\ref{Eq:MainModel}). The conclusions section \ref{S:Conclusions} discusses the possibility of treating the case where the rank increases with $N$ as well, but we do not elaborate more on this in this work.

Assumption \ref{A:Assumption1} on $p^{n}$ implies that the empirical distribution of the spanning columns and rows of the adjacency matrix have a well defined limit in distribution. For example, this assumption holds if there is only a finite number of non-zero entries in the vectors $\ell_{j},u_{j}$ for each $j$ with specific frequencies. This will be the case for example in all of the numerical studies of Section \ref{S:Numerics}. In practice given a specific large $N$, one would use Theorem \ref{T:MainTheorem} to approximate the probability distribution of  quantities of interest, but of course use the empirical distribution $\pi^{N}$ and $\Lambda_{0}^{N}$ as approximations to $\pi$ and $\Lambda_{0}$ respectively.
\end{remark}

\begin{remark}\label{R:NetworkRemark2}
For completeness, let us present a simple example of a core-periphery model that has bounded rank as $N$ grows to infinity, i.e., it satisfies Assumption \ref{A:Assumption1b}. The model presented here, also satisfies Assumption \ref{A:Assumption1}. Consider a base model
\begin{equation*}
\Delta_{N_{0}\times N_{0}} =
\begin{pmatrix}
C & CP  \\
PC & O
\end{pmatrix}
\end{equation*}
where $C$ is a $N_{c}\times N_{c}$ matrix (representing the base model for the core), $CP$ is  a $N_{c}\times N_{p}$ matrix, $PC$ is a $N_{p}\times N_{c}$ matrix (representing the base model for the interactions between core and periphery) and $N_{c}+N_{p}=N_{0}$. Then, for $k\in\mathbb{N}$, let $N=k\times N_{0}$ and consider the $N\times N$ network matrix $\Delta$
\begin{equation*}
\Delta=\Delta_{N\times N} =
\begin{pmatrix}
C & \cdots & C & CP &\cdots & CP \\
\vdots & \ddots & \vdots & \vdots & \ddots & \vdots \\
C & \cdots & C & CP &\cdots & CP \\
PC & \cdots & PC & O &\cdots & O \\
\vdots & \ddots & \vdots & \vdots & \ddots & \vdots \\
PC & \cdots & PC & O &\cdots & O
\end{pmatrix}
\end{equation*}
$\Delta=\Delta_{N\times N}$ is constructed via $\Delta_{N_{0}\times N_{0}}$ by extending the $N_{c}$ core banks in the base model to $k\times N_{c}$ banks and the $N_{p}$ periphery banks in the base model to $k\times N_{p}$ banks. The rank of the matrix $\Delta_{N_{0}\times N_{0}}$ is bounded form above by $ N_{0}$. In addition, a simple computation shows that for every $k\in\mathbb{N}$ and therefore for every $N\in\mathbb{N}$ the rank of $\Delta=\Delta_{N\times N}$ is also bounded from above by  $ N_{0}$.

For such a model we assume that each one of the $k$-copies of the original $N_{0}$ institutions has a corresponding intensity to default process $\lambda_{t}^{\NN}$ defined with the same values (or i.i.d copies if randomly chosen) for the defining parameters $(a_{n},\sigma_{n},\beta^{S}_{n},\lambda_{0,\NN})$ with the corresponding original institution in the base model.
\end{remark}

For the drift coefficient function $b(\lambda,a)$ we assume the following growth and regularity conditions.
\begin{assumption}\label{assumptionb}
The mapping  $\lambda\mapsto b(\lambda,\cdot)$ is  locally Lipschitz and there exists finite constants $d>1$, $q>1$, $K>0$ and positive bounded functions $\gamma$ and $k$ with $\gamma(a)>0$ and $k(a)>0$ such that $$\lambda b(\lambda,a)<-\gamma(a)|\lambda|^d, \quad \text{ for } |\lambda|\geq K$$ $$|b(\lambda,a)|\leq k(a) (1+|\lambda|^q),$$  and $$b(0,a)>0.$$
	Furthermore we assume that for any $\lambda\in \mathbb{R}_{+}$,  $a \mapsto b(\lambda,a)$ is a continuous function.
\end{assumption}

A remark in regards to Assumption \ref{assumptionb} follows.
\begin{remark}
If we take $a=(\bar{\alpha},\bar{\lambda})\in\mathbb{R}^{2}_{+}$ and $b(\lambda,a)=-\bar{a}(\lambda-\bar{\lambda})$, then the idiosyncratic part of the intensity process becomes the classical CEV model. Notice that in this case
$b(\lambda,a)=-\partial_{\lambda}V(\lambda,a)$ with $V(\lambda,a)=\frac{\bar{a}}{2}(\lambda-\bar{\lambda})^{2}$ and the function $V$ has a single minimum point at $\lambda=\bar{\lambda}$. In turn this mean reversion of $\lambda$  implies that the impact of a default fades away with time and the intensity will tend to revert back to the level $\lambda=\bar{\lambda}$.

Assumption \ref{assumptionb} relaxes the affine structure to a requirement about appropriate dissipativity of the drift coefficient $b(\lambda,a)$. This enlarges the class of drifts $b(\lambda,a)$ that one can consider. For example, one could consider situations where $b(\lambda,a)=-\partial_{\lambda}V(\lambda,a)$ with $V(\lambda,a)$ being a bistable potential. Such situations could correspond to situations where the creditworthiness of certain names might have two equilibria, corresponding to two different parts of the business cycle.

The goal of this paper is to explore (Section \ref{S:Numerics}) the potential effects of the network structure and of low-rank approximations on the distribution of dynamically evolving stochastic processes of interest. The aforementioned numerical exploration is based on the rigorous mean field limit of the empirical survival distribution of the names in the pool (Theorem \ref{T:MainTheorem}). Theorem \ref{T:MainTheorem} proves that appropriate dissipative conditions on the drift coefficient $b(\lambda,a)$ are enough to guarantee  well defined intensity-to-default processes and subsequently well defined mean field limits of the empirical survival distribution. See also Section \ref{S:Conclusions} for a more elaborate related discussion and potential future directions.
\end{remark}

The rest of the assumptions are related to the exogenous risk process $X$.
\begin{assumption}\label{s0}
	Assume that the function $\sigma_0(\cdot)$ is bounded, that is there exists a constant $K_{\ref{s0}}$ such that  $|\sigma_0(x)|<K_{\ref{s0}}$. For $b_0$ assume $\sup_{t<\infty}\mathbb{E}|b_0(X_t)|^{4p}<\infty$ for some $p\geq 1$.
\end{assumption}

Let us define
\[
\Gamma_t=-\beta^S \int_0^t b_0(X_s) ds.
\]
\begin{assumption}\label{Xt}
	Assume that for some $p\geq1$, $\sup_{t<\infty}\mathbb{E}[X_t^{2p}]$ and $\sup_{t<\infty}\mathbb{E}[e^{4p|\Gamma_t|}]$ are bounded.
\end{assumption}

The last Assumption \ref{unbdd_b} makes sure that we can extend some technical lemmas from bounded drifts $b_0(x)$ to potentially unbounded ones.

\begin{assumption}\label{unbdd_b}
Assume there is a function $u(x)$ such that $\sigma_0(x)u(x)=-b_0(x)$ and for any $T>0$ we have
	\[
\mathbb{E}\left[e^{1/2 \int_0^T |u(X_s)|^2 ds}\right] <\infty,
\]
and that for any $T$ there is a $p>1$ such that
	$$\mathbb{E} \left[\left|e^{-\int_0^T u(X_s) dV_s - 1/2 \int_0^T |u(X_s)|^2 ds} \right|^p \right] < \infty.
$$
\end{assumption}

An example where Assumptions \ref{s0}, \ref{Xt} and \ref{unbdd_b} hold is to take $b_{0}(x)=-\gamma x$ and $\sigma_{0}(x)=1$, which is the mean reverting example that is studied in \cite{GSS}.

\section{Well-posedness of the model and main results}\label{S:MainResults}

In this section we prove that the model is well-possed and we present our main results. Let us begin with well-posedness of the model, Lemma \ref{sol}.
\begin{lem}\label{sol}
	Let $\xi$ be  a vector of processes having $r$ components, predictable, right-continuous, monotone and bounded with $\xi_{0}=0$. Let Assumptions \ref{A:Assumption2}-\ref{unbdd_b} hold. There exists a unique nonnegative solution $\lambda$ of the following SDE:
	\begin{align}
	d\lambda_t&=b(\lambda_t,a)dt + \sigma(\lambda_t\vee 0)^{\rho} dW_t + \beta^C \cdot d \xi_t+ \beta^S \lambda_t dX_t\nonumber\\
	\lambda_0&=\lambda_o\nonumber\\
	dX_t&=b_0(X_t)dt+\sigma_0(X_t)dV_t.\nonumber
	\end{align}
\end{lem}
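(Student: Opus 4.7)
The plan is to reduce the SDE to one without the multiplicative $dX$ term by an exponential change of variable, to solve the resulting equation pathwise between successive jumps of $\xi$ using classical one-dimensional SDE techniques, and to concatenate the pieces into a global nonnegative solution. Throughout, the exogenous process $X$ is treated as given data, so all coefficients are viewed as adapted random functions of time, and the heavy lifting is done by a combination of Yamada--Watanabe pathwise uniqueness (which tolerates the H\"older-$\rho$ diffusion) and a dissipative Lyapunov estimate (which tolerates the superlinear locally Lipschitz drift).

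The first step is to set $Y_t := \lambda_t e^{-\beta^S X_t}$. A direct application of It\^o's formula -- in which the cross-variation between $\lambda$ and $e^{-\beta^S X}$ cancels exactly the $\beta^S \lambda\,dX$ contribution -- yields
\[
dY_t = e^{-\beta^S X_t}\Bigl[b(\lambda_t,a)\,dt + \sigma(\lambda_t\vee 0)^\rho dW_t + \beta^C\cdot d\xi_t\Bigr] - \tfrac{1}{2}(\beta^S)^2\sigma_0^2(X_t)\,Y_t\,dt,
\]
with $\lambda_t = Y_t e^{\beta^S X_t}$. Since $(\lambda_t\vee 0)^\rho = (Y_t\vee 0)^\rho e^{\rho\beta^S X_t}$, the diffusion in $Y$ has the form $\sigma e^{(\rho-1)\beta^S X_t}(Y_t\vee 0)^\rho$: it is H\"older-$\rho$ in $Y$ with a random amplitude that is bounded on the localizing event $\{\sup_{s\le t}|X_s|\le M\}$, and the multiplicative $dX$ noise is gone.

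Next I localize by $\tau_M := \inf\{t\ge 0 : |X_t|\vee|Y_t|\ge M\}$ and analyze $Y$ on each subinterval $[T_{k-1},T_k)$ between consecutive jumps of the bounded monotone process $\xi$; since $\xi$ has finite variation, it has at most countably many jumps on any compact time window. On such a subinterval the equation for $Y$ is a one-dimensional It\^o equation driven by $W$ alone, with drift locally Lipschitz in $Y$ (from Assumption \ref{assumptionb} together with boundedness of the $X$-dependent coefficients on $\{|X|\le M\}$) and H\"older-$\rho$ diffusion with $\rho\in[1/2,1)$. Pathwise uniqueness then follows from the Yamada--Watanabe criterion, weak existence by drift truncation and a standard martingale-problem argument, and the two together give strong existence on the subinterval; at each jump time $T_k$ one updates $Y_{T_k}=Y_{T_k^-}+e^{-\beta^S X_{T_k}}\beta^C\cdot\Delta\xi_{T_k}$ and restarts. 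Removing the localization is done with a Lyapunov function $V(\lambda)=1+\lambda^2$: the dissipativity $\lambda b(\lambda,a)\le -\gamma(a)|\lambda|^d$ for $|\lambda|\ge K$ gives a one-sided drift bound that dominates the $2\rho<d+1$ diffusion contribution, and combining it with Assumption \ref{unbdd_b} (Girsanov-change the drift of $X$ away using the density $Z_t=\exp(-\int_0^t u(X_s)\,dV_s-\tfrac{1}{2}\int_0^t u(X_s)^2\,ds)$) and the $X$-moment bounds from Assumptions \ref{s0}--\ref{Xt} yields $\sup_{t\le T}\mathbb{E}[\lambda_t^2]<\infty$, whence $\tau_M\uparrow\infty$ a.s.

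Nonnegativity is checked last. Whenever $\lambda_t = 0$ the diffusion vanishes because of the $\lambda\vee 0$ convention, while $b(0,a)>0$ strictly pushes $\lambda$ into $(0,\infty)$; a comparison with the deterministic ODE $\dot\ell = b(\ell,a)$, $\ell_0=0$, then shows that the continuous part of $\lambda$ cannot cross $0$ from above, and componentwise monotonicity of $\xi$ handles the sign at jump times. The main obstacle in the whole argument is the simultaneous presence of three non-standard features -- a H\"older-only diffusion coefficient, a superlinear only locally Lipschitz drift, and a multiplicative $dX$ noise with possibly unbounded $b_0$. The exponential change of variable in the first step is the crucial idea that makes these features compatible on each localizing event $\{|X|\le M\}$; without it, one would have to extend Yamada--Watanabe to SDEs with two driving noises, one of which is multiplicative, which is substantially harder.
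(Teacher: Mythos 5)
Your overall strategy—an exponential change of variable to kill the $\beta^S\lambda\,dX$ term, localization, a Yamada--Watanabe argument with the function $\psi_\eta$ for pathwise uniqueness, a priori moment bounds to let the localization level go to infinity, and a separate nonnegativity argument driven by $b(0,a)>0$—is essentially the paper's. The two substantive differences are as follows. First, the paper uses the factor $e^{\Gamma_t}$ with $\Gamma_t=-\beta^S\int_0^t b_0(X_s)\,ds$, which removes only the finite-variation part of $\beta^S\lambda\,dX_t$; the martingale part $\beta^S\lambda\sigma_0(X)\,dV$ survives in the transformed equation and is handled directly inside the $\psi_\eta$ computation (the term $D_t^{3,M}$), which is unproblematic because that coefficient is linear, hence locally Lipschitz, in $Y$. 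Your factor $e^{-\beta^S X_t}$ removes both parts at once, at the price of random amplitudes $e^{\pm\beta^S X_t}$ that you must control by localizing $|X|$. Both routes work, but your closing remark that handling two driving noises "is substantially harder" overstates the issue—this is exactly what the paper does, cheaply.

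The genuine gap is your treatment of $\xi$. You propose to solve the equation on the intervals between consecutive jumps of $\xi$ and update $Y$ at each jump time, which implicitly assumes $\xi$ is a pure-jump process. The hypothesis is only that $\xi$ is predictable, right-continuous, monotone and bounded, so $d\xi$ may have a nontrivial continuous (even singular continuous) part that your concatenation never sees. The repair is the paper's device: absorb the whole additive term into $Z_t=\lambda_0+\beta^C\cdot\int_0^t e^{\Gamma_s}\,d\xi_s$ and work with $Y_t=\bar Y_t-Z_t$, so no case analysis on the jump structure of $\xi$ is needed. Two smaller points: (i) your nonnegativity step by "comparison with the ODE $\dot\ell=b(\ell,a)$" is only a sketch—comparison theorems with H\"older diffusion coefficients are not free; the rigorous version is precisely an It\^o computation on $\psi_\eta(\bar Y_t)\chi_{\mathbb{R}_-}(\bar Y_t)$, using that the diffusion vanishes on $\{\bar Y<0\}$, that $\psi_\eta'\le 0$ there, and that $b(0,a)>0$; (ii) invoking "weak existence plus pathwise uniqueness" for an equation whose coefficients are random adapted functionals of $X$ needs a word of justification beyond the classical Yamada--Watanabe statement, which is why the paper instead follows the direct construction of Rogers--Williams, Ch.~V.11.
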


Lemma \ref{lambda} is about an essential a-priori bound that will be used in many places of the subsequent proofs.
\begin{lem} \label{lambda}
	Let $\lambda^{N,n}$ be the unique solution to (\ref{Eq:MainModel}), guaranteed under the assumptions of Lemma \ref{sol}. Let  $p\geq1$ be such that Assumptions \ref{s0} and \ref{Xt} hold. Then, for such $p \geq 1$ and for every $T \geq 0$,
	$$K_{\ref{lambda}} \myeq \sup_{0\leq t \leq T, N \in \mathbb{N}} \frac{1}{N} \sum_{n=1}^N \mathbb{E}[|\lambda_t^{N,n}|^p]$$
	is finite.
\end{lem}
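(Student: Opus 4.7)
The plan is to apply Itô's formula to a suitably transformed process, derive a self-bounding inequality, and close the argument by Gronwall's lemma. Since $\lambda_t^{N,n}\ge 0$ by Lemma~\ref{sol}, and by Jensen's inequality for $p\ge 1$, it suffices to bound $\frac{1}{N}\sum_n\mathbb{E}[(\lambda_t^{N,n})^{2p}]$ for the same $p$ as in the statement. The two structural difficulties are (i) the systematic term $\beta_n^S\lambda_t^{N,n}dX_t$ involves the potentially unbounded drift $b_0$ of $X$, and (ii) the contagion term $\beta_n^C\cdot dL_t^N$ contributes jumps that couple $\lambda^{N,n}$ to every other component of the pool.

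To handle (i), I would use the integrating factor $e^{\Gamma_t^n}$ with $\Gamma_t^n:=-\beta_n^S\int_0^t b_0(X_s)\,ds$ and set $Z_t^{N,n}:=e^{\Gamma_t^n}\lambda_t^{N,n}$. Since $\Gamma^n$ has bounded variation, no Itô correction arises in the product and a direct calculation gives
\begin{align*}
dZ_t^{N,n}=e^{\Gamma_t^n}b(\lambda_t^{N,n},a_n)\,dt+e^{\Gamma_t^n}\sigma_n(\lambda_t^{N,n})^{\rho}\,dW_t^n+e^{\Gamma_t^n}\beta_n^C\!\cdot\!dL_t^N+\beta_n^S\sigma_0(X_t)Z_t^{N,n}\,dV_t,
\end{align*}
eliminating the $b_0$-drift at the price of a bounded exponential prefactor. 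Applying the jump-diffusion Itô formula to $(Z^{N,n})^{2p}$, localising, and taking expectations produces four contributions. The drift, by Assumption~\ref{assumptionb}, is bounded above by $2pC_K\mathbb{E}[e^{2p\Gamma_s^n}]$ (the nonpositive correction $-2p\gamma_{\min}e^{2p\Gamma_s^n}(\lambda_s^{N,n})^{2p+d-2}$ is discarded). The $dW^n$-quadratic variation, using $\rho<1$ and the elementary bound $x^{2p+2\rho-2}\le 1+x^{2p}$, is at most $C\mathbb{E}[e^{2p\Gamma_s^n}]+C\mathbb{E}[(Z_s^{N,n})^{2p}]$. The $dV$-quadratic variation is at most $C\mathbb{E}[(Z_s^{N,n})^{2p}]$ by boundedness of $\sigma_0$ (Assumption~\ref{s0}). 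All exponential factors are uniformly bounded in $s\le T$ and $n$ by Assumption~\ref{Xt}.

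For (ii) I rewrite $\beta_n^C\!\cdot\!dL_t^N$ in compensated form using that $\mathbf{1}_{\{\tau^{N,i}\le t\}}-\int_0^{t\wedge\tau^{N,i}}\lambda_s^{N,i}\,ds$ is a martingale, so $dL_t^{N,j}=\frac{1}{N}\sum_i\ell_{i,j}\lambda_t^{N,i}\mathbf{1}_{\{\tau^{N,i}>t\}}\,dt+dM_t^{N,j}$. Combined with Taylor's expansion of $z\mapsto z^{2p}$, the leading jump contribution to $\mathbb{E}[(Z^{N,n})^{2p}]$ becomes $\frac{C}{N}\sum_i\mathbb{E}[e^{2p\Gamma_s^n}(\lambda_s^{N,n})^{2p-1}\lambda_s^{N,i}]$, with the quadratic Taylor remainder of order $O(1/N)$ uniformly (because each jump has size $O(1/N)$ and the total variation of $\beta^C\!\cdot\!L^N$ is bounded by $K_{\ref{bdd}}^2 r$ uniformly in $N$). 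After averaging over $n\in\{1,\dots,N\}$, Young's inequality $x^{2p-1}y\le\frac{2p-1}{2p}x^{2p}+\frac{1}{2p}y^{2p}$ symmetrises the $(n,i)$ coupling and, together with Assumption~\ref{Xt} to absorb $e^{2p\Gamma^n}$ factors via Hölder, controls the whole expression in terms of $\Psi_t^N:=\frac{1}{N}\sum_n\mathbb{E}[(Z_t^{N,n})^{2p}]$ alone. Collecting all bounds produces $\Psi_t^N\le C+C\int_0^t\Psi_s^N\,ds$ for some $C=C(T)$; Gronwall then yields $\sup_{t\le T,\,N}\Psi_t^N<\infty$, and a final Cauchy--Schwarz step $\frac{1}{N}\sum_n\mathbb{E}[(\lambda_t^{N,n})^p]\le\bigl(\sup_n\mathbb{E}[e^{-2p\Gamma_t^n}]\bigr)^{1/2}(\Psi_t^N)^{1/2}$, combined once more with Assumption~\ref{Xt}, transfers the bound back to $\lambda$.

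The main obstacle is (ii): the compensator of $L^N$ intrinsically couples $\lambda^{N,n}$ to every $\lambda^{N,i}$ through the pool intensities, so a pointwise-in-$n$ Gronwall estimate for $\mathbb{E}[(\lambda_t^{N,n})^{2p}]$ would require controlling all components simultaneously. The averaging over $n$ is what closes the loop: it produces a scalar quantity $\Psi^N$ that is self-bounding via Young's inequality, exploiting the $1/N$ normalisation which makes both the jump sizes and the compensator-coupling coefficients uniformly summable. Without the bounded total variation of $\beta^C\!\cdot\!L^N$ (an $N$-uniform consequence of Assumption~\ref{A:Assumption2}), the jump term would not fit into the Gronwall framework.
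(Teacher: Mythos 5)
Your proof is correct and follows essentially the same route as the paper: the integrating factor $e^{\Gamma^{N,n}}$ to eliminate the $b_0$-drift, It\^{o}'s formula applied to the $2p$-th power, the dissipativity of $b$ to discard the drift, Young-type inequalities for the sublinear diffusion and for the compensated contagion term, and averaging over $n$ to produce a self-bounding scalar quantity closed by Gronwall. The only cosmetic difference is that the paper writes $e^{\Gamma^{N,n}_t}\lambda^{N,n}_t=Y^{N,n}_t+Z^{N,n}_t$ and bounds the contagion integral $Z^{N,n}$ separately in Lemma~\ref{Zt}, whereas you keep the transformed process in one piece; the two closures are equivalent.
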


Proofs of Lemmas \ref{sol} and \ref{lambda} are in Appendix \ref{A:Appendix}. Let us denote the survival indicator process for a given name in the pool by
	$$M_t^{N,n}=\chi_{\{\tau^{N,n}>t\}}$$
and define the empirical  distribution of the $\hat{p}^n$'s corresponding to the names that have survived up to time $t$ as follows:
	$$\mu_t^{N}=\frac{1}{N} \sum_{n=1}^N \delta_{\hat{p}_t^{n}} M_t^{N,n}.$$
Notice that $\mu_t^{N}$ captures the entire dynamics of the model (including the effect of the heterogeneities and network topology).

In order to study the convergence of $\mu^N$, we need to set up the appropriate topological framework. That is, let $E$ be the collection of sub-probability measures on $\hat{\mathcal{P}}$,  i.e., $E$ consists
of those Borel measures $\nu$ on $\hat{\mathcal{P}}$ such that $\nu(\hat{\mathcal{P}})\le 1$. Then fix a point $\star$ which is not in $\hat{\mathcal{P}}$ and let $\hat{\mathcal{P}}^{+}=\hat{\mathcal{P}} \cup \{\star\}$ (the so-called  one-point compactification of $\hat{\mathcal{P}}$). Open sets are those which are open subsets of $\hat{\mathcal{P}}$ (endowed with the original topology) or complements in $\hat{\mathcal{P}}^+$ of closed subsets
of $\hat{\mathcal{P}}$ (again, in the original topology of $\hat{\mathcal{P}}$).

Define a bijection $\zeta$ from $E$ to the Borel probability measures on $\hat{\mathcal{P}}^{+}$ as
\[
(\zeta \nu) (Z) = \nu (Z\cap \mathcal{P}) + (1- \nu (\mathcal{P}))\delta_{\star}(Z),
\]	
for any $Z \in \mathscr{B} (\hat{\mathcal{P}}^{+})$. Then we can make $E$ a Polish space.

 We define the Skorokhod topology on $\mathfrak{P}(\hat{\mathcal{P}}^+)$, and define a corresponding metric on $E$ by requiring $\zeta$ to be an isometry.  Then, the space $E$ will be  Polish.

Thus, $\mu^N$ is an element of $D_E[0,\infty)$, i.e., is a map from $[0,\infty)$ into $E$ which
is right-continuous and has left-hand limits.  The space $D_E[0,\infty)$ will be endowed with the Skorohod metric, which we  denote by $d_E$, see \cite{EithierKurtz}.

Next, for each $f \in C^{\infty}(\hat{\mathcal{P}})$ define
\[
<f,\mu>_E=\int_{\hat{p} \in \hat{\mathcal{P}} } f(\hat{p}) \mu(d\hat{p}).
\]

In addition, define the generators
\begin{align}
\mathcal{L}_1 f(\hat{p})&=\frac{1}{2} \sigma^2 \lambda^{2\rho} \frac{\partial^2 f}{\partial \lambda^2} (\hat{p}) + b(\lambda,a) \frac{\partial f}{\partial \lambda}(\hat{p})-\lambda f(\hat{p})\nonumber\\
\mathcal{L}_2^x f(\hat{p})&=\frac{1}{2} (\beta^S)^2 \lambda^2 {\sigma_0}^2(x) \frac{\partial^2 f}{\partial \lambda^2} (\hat{p}) + \beta^S \lambda b_0(x) \frac{\partial f}{\partial \lambda}(\hat{p})\nonumber\\
\mathcal{L}_3^x f(\hat{p})&= \beta^S \lambda \sigma_0(x) \frac{\partial f}{\partial \lambda}(\hat{p})\nonumber\\
\mathcal{L}_4 f &= \beta^C \frac{\partial f}{\partial \lambda} (\hat{p})\nonumber\\
\iota(\hat{p})&=\lambda\nonumber\\
\nu(\hat{p})&=\ell\label{Eq:Operators}
\end{align}
	with $\beta^{C},\ell$ are vector valued, of the form $\beta^{C}=(\beta^{C}_1, \beta^{C}_2,\cdots,\beta^{C}_r)$ and  $\ell=(l_1, l_2,\cdots,l_r)$ respectively. We write $\nu_j(\hat{p})=l_j$ for $j=1, 2, \dots,r$.  After presenting Theorem \ref{T:MainTheorem} we shall elaborate on the meaning of the operators defined in (\ref{Eq:Operators}).

Recall that $\mathcal{V}_t=\sigma(V_s,0\leq s \leq t)\vee \mathcal{N}$, where $\mathcal{N}$ is the set of null sets. Introduce the notation
$\mathbb{E}_{\mathcal{V}_{t}}[\cdot]=\mathbb{E}[\cdot | \mathcal{V}_{t}]$.

Now, we are in position to state the main result of the paper.
\begin{theorem}\label{T:MainTheorem}
Let Assumptions \ref{A:Assumption2}-\ref{unbdd_b} hold. We have that $\mu^{N}_{\cdot}$ converges in distribution to the measure valued process $\bar{\mu}_{\cdot}$ with values in
$D_E[0,T]$. The evolution of
$\bar{\mu}_{\cdot}$ is given by the measure evolution equation
\begin{align}
d\la f,\bar \mu_t\ra_E &=  \left\{ \langle\mathcal{L}_1 f, \bar{\mu}_t\rangle_E + \langle\mathcal{L}_2^{X_t} f, \bar{\mu}_t\rangle_E +  \langle\iota\nu, \bar{\mu}_t\rangle_E \cdot \langle \mathcal{L}_4 f, \bar{\mu}_t\rangle_E \right\}dt\nonumber\\
&\qquad+\langle\mathcal{L}_3^{X_t} f, \bar{\mu}_t\rangle_E dV_t,\quad
\forall f\in C^\infty(\hat{\mathcal{P}}) \text{ a.s.}\label{Eq:LLN}
\end{align}
In addition, if  $(Q_i(t),\lambda_t^{*}(\hat{p}),i=1, \ldots,r)$, where $\hat{p}=(p,\lambda_{0})$, is the unique pair satisfying
	$$Q_i(t)=\int_{\hat{p} \in \hat{\mathcal{P}}} l_i \mathbb{E}_{\mathcal{V}_t} \left[\lambda_t^*(\hat{p})\exp\left[-\int_{0}^t \lambda_s^*(\hat{p})ds\right]\right]\pi(dp)\Lambda_0(d\lambda_{0}).$$
	$$\lambda_t^*(\hat{p})=\lambda_0+\int_{0}^t b(\lambda_s^*(\hat{p}),a)ds + \sigma\cdot(\lambda_s^*(\hat{p}))^{\rho} dW^*_s + \int_{0}^t \sum_{i=1}^r \beta_i^C Q_i(s) ds + \beta^S \int_{0}^t \lambda_s^*(\hat{p}) dX_s.$$
then for any $A \in \mathfrak{B}(\mathcal{P})$ and $B \in \mathfrak{B}({\mathbb{R}_{+}})$, $\bar{\mu}$ is given by
\begin{align}
\bar{\mu}_t(A \times B)&= \int_{\hat{p}\in \mathcal{P}} \chi_A(p) \mathbb{E}_{\mathcal{V}_t} \left[ \chi _B(\lambda_t^*(\hat{p}))\exp\left[-\int_{0}^t\lambda_s^*(\hat{p})ds \right] \right] \pi(dp) \Lambda_0(d\lambda_{0}).\label{Eq:LLN_Characterization}
\end{align}
\end{theorem}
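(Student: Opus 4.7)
The plan follows the standard three-step programme for mean field limits of interacting particle systems: tightness of $\{\mu^N\}$ in $D_E[0,T]$, identification of limit points as solutions of the measure evolution equation \eqref{Eq:LLN}, and uniqueness of that solution (which in particular pins down the explicit form \eqref{Eq:LLN_Characterization}).

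For tightness I would invoke Jakubowski's criterion on the Polish space $E$: it is enough to exhibit a countable family $\{f_k\}\subset C^\infty(\hat{\mathcal{P}})$ for which each $\langle f_k,\mu^N_\cdot\rangle_E$ is tight in $D_\mathbb{R}[0,T]$, plus a compact containment condition. Applying It\^o's formula to $f(\hat{p}^n_t)M_t^{N,n}$ and summing over $n$ (using that the compensator of $M^{N,n}$ is $-\int_0^\cdot \lambda^{N,n}_s M^{N,n}_s\,ds$) yields a semimartingale decomposition of $\langle f,\mu^N_t\rangle_E$ whose finite variation and martingale parts can be controlled uniformly in $N$ via Lemma \ref{lambda} together with Assumption \ref{A:Assumption2}. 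Aldous' criterion and Doob's inequality then deliver the required modulus estimates.

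To identify the limit I pass to the semimartingale decomposition itself. A first order Taylor expansion of $f$ across the jumps of $\lambda^{N,n}$ triggered by $dL^N_t$ (each of size $O(1/N)$, totalling $O(N)$ events) produces a $\partial_\lambda f$ term whose driver $\beta^C_n\cdot dL^N_t$ has compensator $\beta^C_n\cdot \langle\iota\nu,\mu^N_t\rangle_E\,dt$; after averaging, this is precisely the nonlocal coupling $\langle\iota\nu,\mu^N_t\rangle_E\cdot\langle\mathcal{L}_4 f,\mu^N_t\rangle_E\,dt$ in \eqref{Eq:LLN}. The operator $\mathcal{L}_1$ absorbs both the It\^o correction in $\lambda^{N,n}$ and the $-\lambda f$ term arising from the compensator of $M^{N,n}$; $\mathcal{L}_2^{X_t}$ comes from the common factor $X$ and $\mathcal{L}_3^{X_t}$ from its Brownian component $dV_t$. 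The remaining terms (martingales from the $W^n$'s and from the compensated default counting processes, plus higher order jump remainders) have quadratic variation of order $1/N$ by Lemma \ref{lambda} and vanish in $L^2$ on $[0,T]$. Any subsequential limit $\bar{\mu}$ therefore satisfies \eqref{Eq:LLN}.

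The main obstacle is uniqueness, which I would decouple through the conditional intensity. Setting $Q_j(t):=\langle\iota\nu_j,\bar{\mu}_t\rangle_E$, equation \eqref{Eq:LLN} becomes affine in $\bar{\mu}$ once $(Q,X)$ is prescribed, and its solution admits the Feynman--Kac representation \eqref{Eq:LLN_Characterization} driven by the McKean--Vlasov SDE for $\lambda^*(\hat{p})$. Uniqueness of $\bar{\mu}$ therefore reduces to uniqueness of the fixed point for $Q$. Given two candidates $Q^1,Q^2$, I would couple the corresponding processes $\lambda^{*,1}(\hat{p}),\lambda^{*,2}(\hat{p})$ through the same Brownian drivers $W^*,V$, estimate $\mathbb{E}_{\mathcal{V}_t}|\lambda^{*,1}_t(\hat{p})-\lambda^{*,2}_t(\hat{p})|^2$ by combining Yamada--Watanabe for the H\"older-$\rho$ diffusion, the dissipativity in Assumption \ref{assumptionb}, and the Girsanov/moment control from Assumptions \ref{s0}--\ref{unbdd_b}, and then close a Gronwall estimate on $\sum_j \sup_{s\le t}|Q^1_j(s)-Q^2_j(s)|$. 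The identity \eqref{Eq:LLN_Characterization} is finally verified by applying It\^o's formula to $\mathbb{E}_{\mathcal{V}_t}\bigl[f(p,\lambda^*_t(\hat{p}))\exp\bigl(-\int_0^t\lambda^*_s(\hat{p})\,ds\bigr)\bigr]$, integrating against $\pi(dp)\Lambda_0(d\lambda)$, and recognizing the coupling term by the very definition of $Q_j$.
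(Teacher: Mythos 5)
Your proposal is correct and follows essentially the same three-step route as the paper: tightness (the paper invokes Kurtz's criterion, Theorem 8.6 of \cite{KS}, via a conditional modulus estimate with $q(x,y)=\min\{|x-y|,1\}$ rather than Jakubowski/Aldous, but this is cosmetic), identification of limit points through the semimartingale decomposition and an associated martingale problem, and uniqueness reduced to a fixed point for $(Q,\lambda^*)$ proved by a Yamada--Watanabe-type $\psi_\eta$-approximation, truncation, and a Gronwall/contraction estimate, exactly as in the paper's Lemma \ref{uniqueQ}. The one step you compress is the assertion that, with $Q$ prescribed, the linearized measure evolution equation has the Feynman--Kac formula as its \emph{unique} solution --- that is precisely the duality argument the paper imports from Lemma 7.1 of \cite{GSS} and it does require its own proof, but your overall architecture matches the paper's.
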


\begin{proof}[Proof of Theorem \ref{T:MainTheorem}]
The ingredients of the proof are in Sections \ref{S:LimitCharacterization} and \ref{S:Uniqueness}. In Section \ref{S:LimitCharacterization} we state that the family $\{\mu^N\}_{N\in \mathbb{N}}$ is relatively compact (as a $D_E[0,\infty)$-valued random variable). Therefore $\{X,\mu^N\}_{N\in \mathbb{N}}$ is also relatively compact.   Hence, it (or a subsequence) converges in distribution to a stochastic process $(X,\bar{\mu})$. By the Skorokhod representation theorem, one can find a probability space and realizations, still denoted for convenience, $\{X,\mu^N\}_{N\in \mathbb{N}}$ and $(X,\bar{\mu})$ such that $(X,\mu^N)$ converges almost surely to $(X,\bar{\mu})$. By the calculations in Section \ref{S:LimitCharacterization} we obtain that $(X,\bar{\mu})$ will satisfy (\ref{Eq:LLN}). The results of Section \ref{S:Uniqueness} show that $\bar{\mu}$ is actually unique and  given by (\ref{Eq:LLN_Characterization}). The pair $(Q_i(t),\lambda_t^{*}(\hat{p}),i=1, \ldots,r)$ exists and is unique via Lemma \ref{uniqueQ}. These results complete the proof of the theorem.
\end{proof}

 The operator $\mathcal{L}_1 f$ represents the idiosyncratic risk of the default intensity and notice that a killing term $-\lambda f$ is also included due to the defaults. The operators $\mathcal{L}_2^{x} f$ and $\mathcal{L}_3^{x} f$ represent the effect of the exogenous risk $x=X_t$. The most intriguing term, perhaps, is the nonlinear term of the equation $\langle\iota\nu, \bar{\mu}_t\rangle_E \cdot \langle \mathcal{L}_4 f, \bar{\mu_t}\rangle_E$, which is the term responsible for the contagion effects and possible default clusters. In particular, as we shall also see in the numerical experiments of Section \ref{S:Numerics}, larger values of the contagion vector parameter (element-wise) $\beta^C$ lead to larger mean losses in the overall pool as well as in individual levels of interaction. In addition, the mean impact on given names from system wide defaults is larger when the associated contagion parameter $\beta^C$ is larger. The limiting term $\langle\iota\nu, \bar{\mu}_t\rangle_E \cdot \langle \mathcal{L}_4 f, \bar{\mu}_t\rangle_E$ is a sum of $r$ components, which shows the need to have $r$ bounded for the limiting procedure to go through.  Potential weakening of this is discussed in the Conclusions Section \ref{S:Conclusions}.

We end this section with a discussion on Theorem \ref{T:MainTheorem}.
\begin{remark}\label{R:UseOfTheorem}
Theorem \ref{T:MainTheorem} will be used in Section \ref{S:Numerics} to approximate dynamic quantities of interest such as $Q_{t}^{n,N}$, the overall loss rate in the pool $D_t^N = \frac{1}{N} \sum_{n=1}^N \chi_{\{\tau^{N,n} \leq t\}}$ or the loss within collections of names of the same type as $N\rightarrow\infty$.  To be more specific, let the $n$-th name be of type $p_{i}$. Then, Theorem \ref{T:MainTheorem} implies that as $N\rightarrow\infty$ one can approximate quantities like $Q_{t}^{n,N}$ and $D_t^N$  by the corresponding limiting objects $Q_{t}(p_{i})$ and $D_{t}$. Theorem \ref{T:MainTheorem} allows for a more efficient numerical computation because it replaces a system of SDE's, model (\ref{Eq:MainModel}), by a single limiting equation, (\ref{Eq:LLN}),  that can be efficiently computed (see Section \ref{S:Numerics} for details). The limit object (\ref{Eq:LLN}) is the weak formulation of a non-local PDE for the density of the measure $\bar{\mu}_{t}$, say $v(t,\hat{p})$. Due to its non-local form, it involves computation of integral terms coming from  the product term $\langle\iota\nu, \bar{\mu_t}\rangle_E \cdot \langle \mathcal{L}_4 f, \bar{\mu_t}\rangle_E$. In turn, the term $\langle\iota\nu, \bar{\mu_t}\rangle_E$ is an integral over the whole parameter space $\hat{\mathcal{P}}$ that also includes the vectors $\ell_{j}$, for $j=1,\cdots,r$, arising from the SVD. In order to compute the latter with an exogenously given adjacency matrix $A$ that has a large, but finite, dimension $N$, and its SVD, we approach the computation of the integral term $\langle\iota\nu, \bar{\mu_t}\rangle_E$ as a finite sum based on the empirical distribution of $\{\ell_{n,j}\}_{n\in\{1,\cdots, N\}}$, for each $j=1,\cdots, r$.  In Section \ref{S:Numerics} we make this precise on specific examples of interest and  collect the main findings of our numerical studies.
\end{remark}

\section{Numerical studies and simulation results}\label{S:Numerics}
In this section we demonstrate numerically the theoretical results of the paper. Before presenting the numerical studies, we first describe the numerical method that we follow and we also comment on general aspects and issues that are common in all examples.

One of the quantities that we are interested in is  the overall loss rate in the pool, defined by
\[
D_t^N = \frac{1}{N} \sum_{n=1}^N \chi_{\{\tau^{N,n} \leq t\}} = \frac{1}{N} \sum_{n=1}^N (1-M_t^{N,n}) = 1-\mu_t^N(\hat{\mathcal{P}}).
\]

Related to this quantity is also the loss rate for names of the same type, say type $B$, denoted by $p_B$:
\begin{align}
	D_t^{N}(p_B) &=\frac{1}{N_B} \sum_{n=1}^N \chi_{\{\tau^{N,n} \leq t\}} \chi_{\{p^{N,n}=p_B\}} = \frac{1}{N_B} \sum_{n=1}^N (1-M_t^{N,n})\chi_{\{p^{N,n}=p_B\}}\nonumber\\
	&=1- {\frac{N}{N_B}}\mu_t^N(\{\hat{p}:p=p_B\}),\nonumber
	\end{align}
where $N_B$ is the total number of names of type $B$ in the pool.

We are also interested in the mean impact on name $n\in\{1,\cdots, N\}$ from system wide defaults by time $t$, which is $Q_t^{N,n}$ defined by
$$Q_t^{N,n}=\beta_n^C \cdot L_t^N$$

with the contagion coefficient vector being
$$\beta_n^C = (\xi_1^2 u_{n1}, \xi_2^2 u_{n2}, \dots, \xi_r^2 u_{nr})$$
and the $r$-dimensional vector $L_t^{N}=(L_t^{N,1},L_t^{N,2},\dots,L_t^{N,r})$.

Recall that $Q_t^{N,n}$ can be interpreted as the mean increase of the $n$-th default intensity due to defaults of other banks by time $t$.

In order to be able to compute $Q_t^{N,n}$ we need to be  able to  compute $L_t^{N,j}$, which is associated to the   $j$-th level of interaction of the network
\begin{align}
L_t^{N,j} &= \frac{1}{N} \sum_{n=1}^N \ell_{n,j} \chi_{\{\tau^{N,n} \leq t\}}= \frac{1}{N}\sum_{n=1}^N \ell_{n,j} - \frac{1}{N}\sum_{n=1}^N \ell_{n,j} M_t^{N,n}\nonumber\\
&= \frac{1}{N}\sum_{n=1}^N \ell_{n,j} - \left<\nu_j, \mu_t^N\right>,\nonumber
\end{align}
where we recall that  $\nu_j(\hat{p})=l_j$.

The asymptotic result from Theorem \ref{T:MainTheorem} is used to evaluate network performance indicators such as $D_t^N$, $D_t^{N}(p_B)$ and $Q_t^{N,n}$. For large $N$,  quantities like $\mu_t^N(\hat{\mathcal{P}})$, $\mu_t^N(\{\hat{p}:p=p_B\})$ and $\left<\nu_j, \mu_t^N\right>$ are approximated by $\bar{\mu}_t (\hat{\mathcal{P}})$, $\bar{\mu}_t(\{\hat{p}:p=p_B\})$ and $\left<\nu_j, \bar{\mu}_t\right>$ respectively; made possible via Theorem \ref{T:MainTheorem}. In order to be able to numerically compute the latter quantities, we first  write $\bar{\mu}_{t}(d\hat{p})= v(t,\hat{p})d\hat{p}$ with $\hat{p}=(p,\lambda)$. A formal integration by parts on the stochastic evolution equation that $\bar{\mu}_{t}(d\hat{p})$ satisfies gives that, in distributional sense, the density satisfies
\begin{align}
&dv(t,\hat{p}) =\left\{ \mathcal{L}^*_1 v(t,\hat{p}) + \mathcal{L}^{*,X_t}_2 v(t,\hat{p}) + \sum_{j=1}^r \left(\int_{\hat{p}' \in \hat{\mathcal{P}} } \nu_j(\hat{p}')\iota (\hat{p}')v(t,\hat{p}') d\hat{p}'\right) ({\mathcal{L}}^{*,j}_4) v(t,\hat{p})\right\} dt\nonumber\\
&\qquad \qquad+ \mathcal{L}^{*,X_t}_3 v(t,\hat{p}) dV_t\nonumber\\
&v(0,\hat{p})=\left(\pi\times\Lambda_{0}\right)(\hat{p})\nonumber\\
&v(t,p,\lambda=0)=\lim_{\lambda \to \infty} v(t,p,\lambda)=0,\nonumber
\end{align}

where the adjoint operators are given by:
\begin{align}
\mathcal{L}^*_1 v(t,\hat{p})&= \frac{\partial^2 }{\partial \lambda^2} (\frac{1}{2} \sigma^2 \lambda^{2\rho}  v(t,\hat{p}) ) -\frac{\partial }{\partial \lambda}( b(\lambda,a)  v(t,\hat{p}) )-\lambda v(t,\hat{p}),\nonumber\\
\mathcal{L}^{*,x}_2  v(t,\hat{p})&= \frac{\partial^2 }{\partial \lambda^2} (\frac{1}{2} (\beta^S)^2 \lambda^2 {\sigma_0}^2(x)v(t,\hat{p})) -  \frac{\partial }{\partial \lambda}(\beta^S \lambda b_0(x)v(t,\hat{p})),\nonumber\\
\mathcal{L}_3^{*,x} v(t,\hat{p})&=  -\frac{\partial }{\partial \lambda}(\beta^S \lambda \sigma_0(x)v(t,\hat{p})),\nonumber\\
\mathcal{L}^{*,j}_4 v(t,\hat{p})& = -\beta^C_j \frac{\partial v(t,\hat{p})}{\partial \lambda}, j=1,2,\dots,r,\nonumber\\
\iota(\hat{p})&=\lambda,\nonumber\\
\nu(\hat{p})&=(\nu_1(\hat{p}),\dots,\nu_r(\hat{p}))=\ell=(l_1,\dots,l_r).\nonumber
\end{align}

Now, motivated by Theorem \ref{T:MainTheorem} we approximate,
\begin{align}
D_t^N \approx D_t &= 1-\bar{\mu}_t(\hat{\mathcal{P}}) = 1- \int_{\hat{p} \in \hat{\mathcal{P}}} \bar{\mu}_t (d\hat{p})
 = 1-\int_{p \in \mathcal{P}} \int_{\lambda=0}^{\infty} v(t,p,\lambda) d\lambda \ \ \pi(dp)\nonumber
\end{align}

\begin{align}
D_t^{N}(p_B) \approx D_t(p_B) &= 1- \kappa_B \bar{\mu}_t (\{\hat{p}:p=p_B\})  = 1 - \kappa_B \int_{\hat{p}: p=p_B} \bar{\mu}_t (d\hat{p})\nonumber\\
&= 1-  \int_{\lambda=0}^{\infty} v(t,p_B,\lambda) d\lambda \nonumber
\end{align}
where $\kappa_B = \lim_{N \to \infty} \frac{N}{N_B} =[\pi(\{p_B\})]^{-1}$ if the limit exists.

\begin{align}
L_t^{N,j} \approx  \frac{1}{N} \sum_{n=1}^N \ell_{n,j} - \left<\nu_j, \bar{\mu}_t\right> &= \frac{1}{N} \sum_{n=1}^N \ell_{n,j} - \int_{\hat{p} \in \hat{\mathcal{P}}} \nu_j(\hat{p}) \bar{\mu}_t (d\hat{p}) \nonumber\\
& = \frac{1}{N} \sum_{n=1}^N \ell_{n,j} - \int_{p \in \mathcal{P}} l_j \int_{\lambda=0}^{\infty}v(t,p,\lambda) d\lambda \ \ \pi(dp)\nonumber
\end{align}

Hence, it is  enough to be able to compute
$u_0(t,p)= \int_0^\infty  v(t,\hat{p}) d\lambda$. In order to do so, we first define the  $k$-th moment to be, see also \cite{GSSS},
\begin{equation}
u_k(t,p)= \int_0^\infty \lambda^k v(t,\hat{p}) d\lambda.\label{Eq:MomentCond}
\end{equation}

The moment $u_k(t,p)$ can be calculated from the evolution function of $d v(t,\hat{p})$, by multiplying it with $\lambda^k$ and integrating by parts over $[0,\infty)$. As it will become clearer in the examples that follow, $u_k(t,p)$ will satisfy a system of equations. However, this system is not a closed system in that for any $k\in\mathbb{N}$, $u_{k}$ depends on $u_{k+1}$. To resolve this, we follow the method of truncation and in particular for a large enough $K$, we set $u_{K+1}=u_{K}$ and then we solve backwards. As we shall see later on (see also \cite{GSSS} for related results)  this truncation is a sufficiently good and computationally efficient approximation of $u_0(t,p)= \int_0^\infty  v(t,\hat{p}) d\lambda$ and, in addition that $K$ can typically be taken to be small. Our numerical studies showed that choosing $K=20$ is more than sufficient to guarantee good approximation properties,  at least for the numerical examples studied in this paper. In addition, as it is demonstrated numerically in the Appendix A of \cite{SpiliopoulosSiriganoGiesecke2013}, in the simpler case without the network structure, such a mean field type of approximation is advantageous from numerical point of view as opposed to direct simulation of the finite $N$ system.

Now, if the number of levels of interaction $d$ is large or if the pool has a large degree of heterogeneity, then the number of equations $u_k(t,p)$ in the system can be prohibitively large. To resolve this and make the computation numerically feasible one can result in appropriate low-rank approximations as dictated by the SVD. The SVD facilitates the decomposition of the network interaction into
$r$ mean-field type levels of interaction. 

This singles out the contribution of the most important level of interaction. To support this claim further
note that the orthonormality of the vectors $\{u_j, j=1,\cdots,r\}$ and the definition $\beta^{C}_{n,j}=\xi^{2}_{j}u_{n,j}$ gives that for every $j=1,\cdots,r$
\[
\|\beta^{C}_{\cdot,j}\|_{2}= \xi^{2}_{j}\|u_{\cdot,j}\|_{2}=\xi^{2}_{j},
\]
 which immediately gives a ranking of $\|\beta^{C}_{\cdot,j}\|_{2}$ based on the eigenvalues $\xi_{j}$.

We will see the power of the low-rank approximation in the examples that follow. In particular,  if there is enough of spectral gap in the eigenvalues given by the SVD, then the limiting  loss rate $D_{t}$ as well as the limiting mean impact on a given name $n$, $Q^{n}_{t}$, are very well approximated by only considering the levels of interaction associated to the first few large eigenvalues and ignoring the rest.

Before presenting the numerical studies, let us collect here their main findings  and state some useful observations  (see Figures \ref{F:LimitingLoss1rank}-\ref{F:MeanImpactCorePeri1rankApprox_2lambdabar}):
\begin{itemize}
\item{A rank one approximation to $\Delta$ is a coarser approximation to the network structure than a rank two approximation in terms of the description of the intensity-to-default dynamics of the model (\ref{Eq:MainModel}). Similarly a rank two approximation is a coarser approximation to the network structure than a rank three approximation, and so on and so forth,  leading to a hierarchical structure.  This is a simple consequence of the SVD.}
\item{The ranking of the eigenvalues of $\Delta$, $\xi_{j}$, $j=1,\cdots, r,\cdots,
d$ gives a clear ranking of the importance of the different levels of interaction in explaining the heterogeneity of the pool. }
\item{The ranking  of the corresponding contagion parameter, $\beta^{C}_{n,j}$, gives a clear ranking of the mean impact on names belonging to the same level of interaction from system wide defaults.
    }
\item{Given that the other parameters of the model are the same, names of a type with larger value for $\beta^{C}_{n,j}$, the contagion coefficient, will have larger mean default rate than names of types with smaller value for $\beta^{C}_{n,j}$. This means that if the overall loss rate in the pool is large, signaling the existence of contagion clustering, names of types with large values for $\beta^{C}_{n,j}$ will be more prone to default if the rest of the parameters in the model description are the same.}
\item{Larger values of $\beta^{C}_{n,j}$ imply larger mean impact on the $n^{th}$ name from system wide defaults and, as we see in the subsequent sub sections, we are able to quantify this precisely, see for example the Figures in the example presented in Subsection \ref{Ex:Core-Periphery1}.}
\item{The  level of mean reversion $\bar{\lambda}$ also has an important effect on the losses experienced by names of the same type, see Example \ref{Ex:Core-Periphery2}.  Names with smaller mean reversion rate $\bar{\lambda}$ will be less likely to default.}
\item{In complicated networks with many different levels of interaction or high degree of heterogeneity, the numerical computation of quantities like $D_{t}$ or $Q_{t}$ can be prohibitively large. The singular value decomposition together with the limiting result Theorem \ref{T:MainTheorem} allow us to reduce the dimension of the system making such computations feasible, while maintaining accuracy, via low-rank approximations and large $N$ approximations. }
\end{itemize}

The effect of the exogenous risk component $X_t$ is quantified via the parameter $\beta^{S}$. As in \cite{GSSS} larger values of $\beta^S$ naturally lead to larger losses, due to an increase in the default intensities. Given that this effect here is analogous to what was observed in \cite{GSSS} and because in this paper our focus is on studying network effects through the contagion term, we do not study the effect of $\beta^S$ further here.

In all the numerical examples that follow, we consider for simplicity a specific form of function $b(\lambda,\alpha)=-\bar{\alpha}(\lambda-\bar{\lambda})$ and $\rho=1/2$, and take the systematic risk process to be a CIR process
$dX_t=\kappa(\theta-X_t)dt + \epsilon \sqrt{  X_t} dV_t$. For the numerical purposes of this paper, we have restricted attention to the aforementioned choices as we want to be able to compare and draw intuition from the existing literature which is largely based on the affine model (see also the Conclusions Section \ref{S:Conclusions} for related future directions).

We consider below four different numerical studies. The first example has one level of interaction, i.e. $d=1$ in the SVD, and the second example has two levels of interaction, i.e. $d=2$ in the SVD. The third and fourth  examples are motivated by the well documented core-periphery network structure for financial models, see for example \cite{CraigVonPeter2014,HojmanSzeidl2008}. In the third example all the names have the same mean-reversion coefficient. In example four, we choose different mean reversion coefficient for the core and for the periphery institutions. Notice that names of different types may belong to the same level of interaction. Namely each level of interaction does not need to be homogenous. This becomes clear in the specific examples below.  The matrix $\Delta$ for the core-periphery examples is chosen to reflect the empirical evidence \cite{CraigVonPeter2014,FrickeLux2015} that periphery banks are smaller and less active than core banks.

\subsection{One level of interaction case}

In this example, we consider a situation where the adjacency matrix $\Delta$ has only one positive eigenvalue. This corresponds to having one level of interaction, $d=r=1$, but of course the pool can still be heterogenous.

Let us start by fixing some values for the parameters  $\kappa=4$,  $\theta=0.5$, $\epsilon=0.5 $, $X_0=0.2$,
$\sigma=0.9$, $\bar{\alpha}=4$, $\bar{\lambda}=0.2$, $\lambda_0=0.2$ and $\beta^S=2$. Also, let us consider a pool of $N=1000$ names.

In addition, assume that 50\% of the $\beta^C_{n,1}$'s are taking the value $\beta_{1}^{C,1}=1.2361$ and the rest 50\% of the $\beta^C_{n,1}$'s are taking the value $\beta_1^{C,2}=0.6362$, while all $\ell_{n,1}$'s take value $l_1^1=0.0316$.  To describe this more effectively, we slightly abuse notation and consider discrete random variables $\tilde{\beta}_1^C$ and $\tilde{\ell}_1$ defined by
 \[
 \mathbb{P}(\tilde{\beta}_1^C=\beta_1^{C,1})=0.5, \quad \mathbb{P}(\tilde{\beta}_1^C=\beta_1^{C,2})=0.5 \text{ and } \mathbb{P}(\tilde{\ell}_1=l_1^1)=1.
 \]

The corresponding adjacency matrix $\Delta$ has a singular value decomposition with only one nonnegative eigenvalue 10. The first column of the left matrix takes one value 0.0316. The first column of the right matrix takes two values 0.12361 and 0.06362 with same frequencies. Notice that we indeed have $\beta_{1}^{C,1}=0.12361 \cdot 10=1.2361$ and $\beta_1^{C,2}=0.06362 \cdot 10=0.6362$, as expected.

Hence, we have a heterogeneous pool with two different types, where however both of them belong to the same level of interaction.

In this case, the moments, as defined by (\ref{Eq:MomentCond}) satisfy the following pair of coupled equations
\begin{align}
&d u_k(t,p_1) =\left\{u_k(t,p_1)(-\bar{\alpha} k + \beta^S \kappa(\theta-X_t) k + 0.5 (\beta^S)^2 \epsilon^{2} X_t k(k-1)) - u_{k+1}(t,p_1)\right\} dt \nonumber\\
 &\quad+ \left\{u_{k-1}(t,p_1) (0.5 \sigma^2 k(k-1) + \bar{\alpha} \bar{\lambda} k + l_1^1 \beta_1^{C,1} k (1/2u_1(t,p_1)+1/2u_1(t,p_2)))\right\}dt\nonumber\\
 &\quad+ \beta^S \epsilon\sqrt{ X_t} k u_k(t, p_1) dV_t\nonumber
\end{align}
\begin{align}
&d u_k(t,p_2) =\left\{u_k(t,p_2)(-\bar{\alpha} k + \beta^S \kappa(\theta-X_t) k + 0.5 (\beta^S)^2 \epsilon^{2} X_t k(k-1)) - u_{k+1}(t,p_2)\right\} dt \nonumber\\
&\quad + \left\{u_{k-1}(t,p_2) (0.5 \sigma^2 k(k-1) + \bar{\alpha} \bar{\lambda} k + l_1^1 \beta_1^{C,2} k (1/2u_1(t,p_1)+1/2u_1(t,p_2)))\right\}dt\nonumber\\
&\quad + \beta^S \epsilon \sqrt{ X_t} k u_k(t, p_2) dV_t\nonumber
\end{align}
with
$u_k(0,p)= \int_0^\infty \lambda^k (\pi\times \Lambda_0)(\hat{p})d\lambda$.

Then, we have that the overall loss rate, for large $N$, is
\[
D_t^N \approx D_t= 1- (1/2 u_0(t,p_1) + 1/2 u_0(t,p_2))
\]

The loss rate for type $p_i$,  $i=1 \text{ or } 2$ is
\[
D_t^{N}(p_i) \approx D_t(p_i) = 1- u_0(t,p_i),  \  \    i=1,2
\]

The mean impact, from the system wide defaults by time t, on name $n$, which comes from type $p_i$, $i=1 \text{ or } 2$ is
\[
Q_t^{N,n} \approx Q_t(p_i) = \beta_1^{C,i} L_t, \  \    i=1,2
\]

where
\[
L_t=l_1^1-l_1^1\left(1/2u_0(t,p_1)+1/2u_0(t,p_2)\right).
\]

Now notice that the system which the moments satisfy is a non-closed system, since the equation for the $k$-th moment depends on the $(k+1)$ moment. In order to solve this we truncate the system at a certain level $K$, by setting $u_{K}(t,p)=u_{K+1}(t,p)$ and solve backwards. This will then give us $u_1(t,p)$ and $u_0(t,p)$ for any time $t$. Here we choose the time endpoint to be $T=1$. We do the numerical iteration with time step being 0.01. We run 50,000 Carlo trials and plot the overall limiting loss $D_t$ at different truncation levels $K=5, 10, 20, 50$ in Figure \ref{F:LimitingLoss1rankKvary}. It is clear from  Figure \ref{F:LimitingLoss1rankKvary} that the results are visually indistinguishable for all those different truncation levels, meaning that the truncation mechanism is reliable even for a low level of truncation.
\begin{figure}[h!]
	\centering
	\includegraphics[scale=0.6]{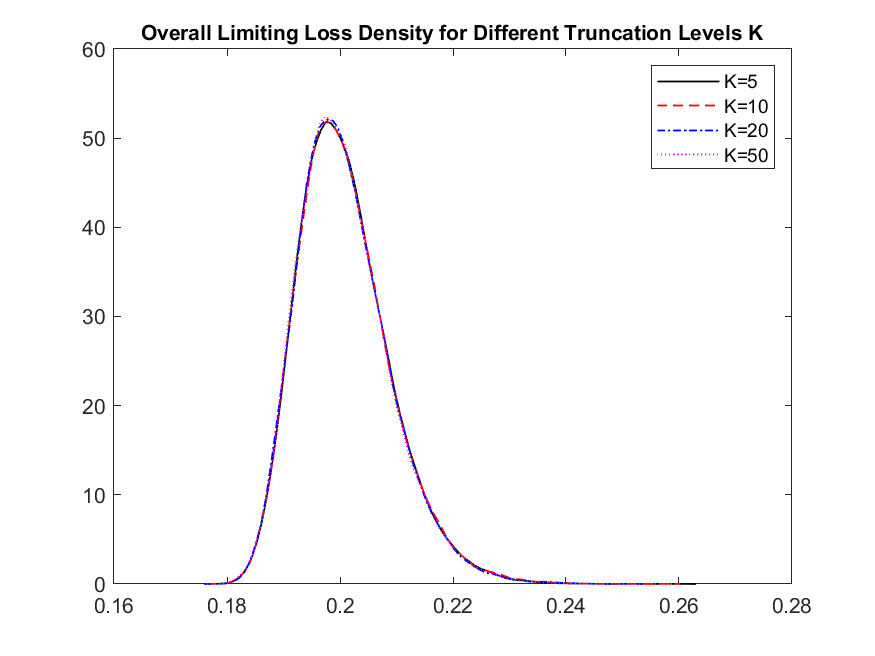}
	\small \caption{Density for overall limiting loss $D_T$ at different truncation levels $K=5, 10, 20, 50$.}\label{F:LimitingLoss1rankKvary}
\end{figure}

In the following experiments we will use $K=20$ with the same number of Monte Carlo trials. We plot the overall limiting loss $D_t$ and limiting loss for Type $p_i$, $D_t(p_i)$, $i=1,2$ in Figure \ref{F:LimitingLoss1rank} left plot. We also plot the empirical mean of overall limiting loss rate $D_T$ and the empirical mean of limiting loss rate for two types $D_T(p_{i})$, $i=1,2$, up to time $T=1$ in Figure \ref{F:LimitingLoss1rank} right plot.
\begin{figure}[h!]
	\centering
	\includegraphics[scale=0.6]{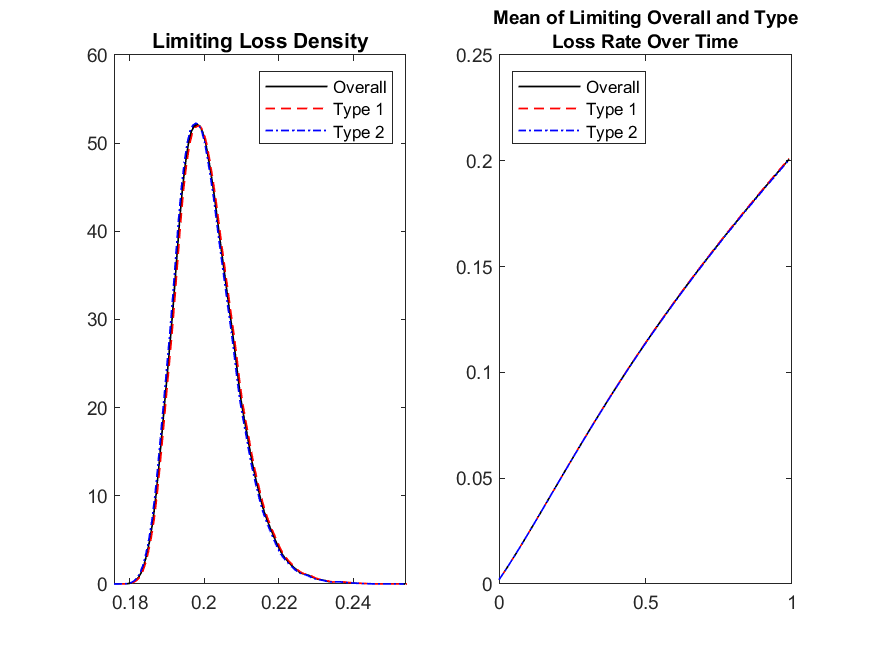}
	\small \caption{Left: Density for overall limiting loss $D_T$ and Type $i$ limiting loss for $D_T(p_i)$, $i=1,2$ at $T=1$; Right: Empirical mean of overall limiting loss $D_T$ and empirical mean of limiting loss for types $D_T(p_i)$, $i=1,2$ up to time $T=1$.}\label{F:LimitingLoss1rank}
\end{figure}

In Figure \ref{F:MeanImpact1rank}, we plot the mean impact on a name $n$, i.e., $Q_t(p^n)$, from system wide default as a function of time $t$ for the two different types of names. Here the name $n$, can be one of two types, type $1$ or type $2$, as indicated by the parameters $\beta^{C,1}_{1},\beta^{C,2}_{1}$. It is instructive to notice from the plots that $Q_t(p_1)\geq Q_{t}(p_2)$, which is to be expected due to the relation $\beta^{C,1}_{1}>\beta^{C,2}_{1}$ of the contagion coefficients.
\begin{figure}[h!]
	\centering
	\includegraphics[scale=0.6]{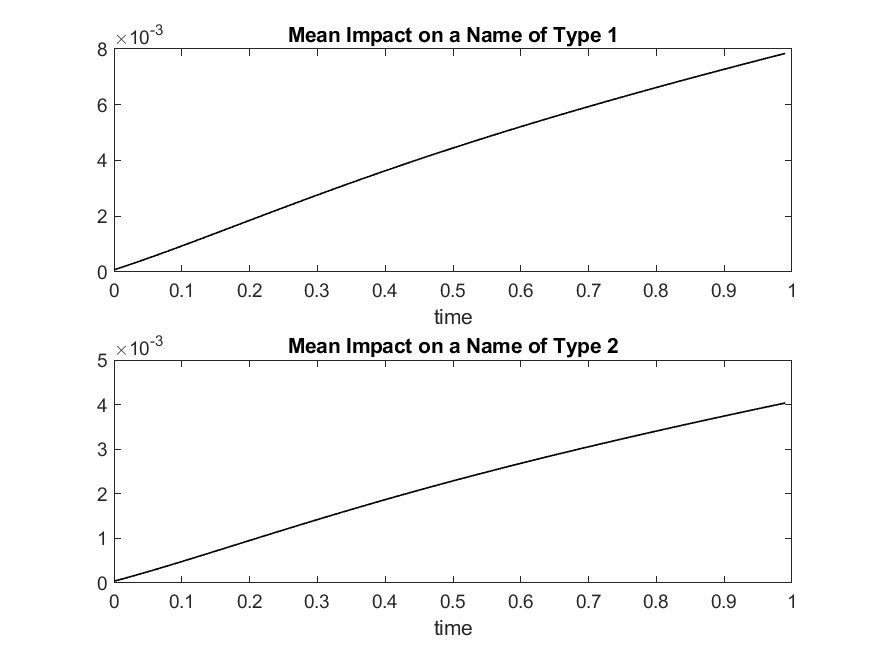}
	\small \caption{Mean impact on names of type 1 $Q_t(p_1)$ and type 2 $Q_t(p_2)$ from system wide default up to time $T=1$.} \label{F:MeanImpact1rank}
\end{figure}

\subsection{Two levels of interaction case}

In this example now we consider the case where $\Delta$ has two positive eigenvalues. This corresponds to having a heterogeneous pool with two levels of interaction, $d=2$. In this example, we will also test numerically the effect of the low-rank approximation on the limiting loss and on the mean impact on given names by system wide defaults.

Let us choose the following values for the parameters
 $\kappa=4$, $\theta=0.5$, $\epsilon=0.5$, $X_0=0.2$,
$\sigma=0.9$, $\bar{\alpha}=4$, $\bar{\lambda}=0.2$, $\lambda_0=0.2$ and $\beta^S=2$. Also, let us consider a pool of $N=1000$ names.

Furthermore, we assume that 50\% of the $\beta^{C}_{n,1}$'s (first level of interaction) are taking the value $\beta_{1}^{C,1}=0.2050$ and the rest 50\% of the $\beta^{C}_{n,1}$'s are taking $\beta_{1}^{C,2}=0.3980$. All the $l_{n,1}$'s take the value $l_{1}^1=0.0316$.

In addition, $2/3$ of the $\beta^{C}_{n,2}$'s (second level of interaction) are taking the value $\beta_{2}^{C,1}=0.0009$ and the rest 1/3 of the $\beta^{C}_{n,2}$'s are taking the value $\beta_{2}^{C,2}=0.0022$. Finally, 50\% of the $l_{n,2}$'s are taking the value $l_{2}^1=0.0043$ whereas the rest 50\% of the $l_{n,2}$'s are taking the value $l_{2}^2=-0.0022$.

As with the previous example, we slightly abuse notation and define discrete random variables $\tilde{\beta}_1^C$, $\tilde{\beta}_2^C$, $\tilde{\ell}_1$ and $\tilde{\ell}_2$ such that
 \begin{align}
 \mathbb{P}(\tilde{\beta}_1^C &= \beta_1^{C,1})=1/2, \quad \mathbb{P}(\tilde{\beta}_1^C = \beta_1^{C,2})=1/2,\nonumber\\
 \mathbb{P}(\tilde{\ell}_1 &= l_1^1)=1,\nonumber\\
  \mathbb{P}(\tilde{\beta}_2^C &= \beta_2^{C,1})=2/3,\quad \mathbb{P}(\tilde{\beta}_2^C = \beta_2^{C,2})=1/3\nonumber\\
  \mathbb{P}(\tilde{\ell}_2 &= l_2^1)=1/2,\quad  \mathbb{P}(\tilde{\ell}_2 = l_2^2)=1/2.\nonumber
  \end{align}

We assume that the random variables $\tilde{\beta}_1^C$, $\tilde{\beta}_2^C$, $\tilde{\ell}_1$, $\tilde{\ell}_2$ are independent.

For the corresponding adjacency matrix $\Delta$, the SVD has two nonnegative eigenvalues 10 and 1. The first column of the right matrix takes two values 0.0205 and 0.0398 with same frequencies. This indeed corresponds to the two values $\beta^{C,1}_{1}=0.0205 \cdot 10= 0.2050$ and $\beta^{C,2}_{1}=0.0398 \cdot 10 = 0.3980$.
The second column of the right matrix takes two values 0.0009 and 0.0022 with ratio of frequencies being 2:1. This indeed corresponds to the two values $\beta^{C,1}_{2}=0.0009 \cdot 1 = 0.0009$ and $\beta^{C,2}_{2}=0.0022 \cdot 1= 0.0022$.
The first column of the left matrix takes only one value 0.0316. The second column of the left matrix takes two values 0.0043 and -0.0022 with equal frequencies.

Let us now denote by $u_{k}(t;k_1,k_2,k_3)$ to be the $k$th moment at time t with $k_1,k_2,k_3\in\{1,2\}$ being the choice index for $\tilde{\beta}_1^C$, $\tilde{\beta}_2^C$ and $\tilde{\ell}_2$ respectively. For example, $k_1=1, k_2=1, k_3=2$ corresponds to the choice $\tilde{\beta}_1^C=\beta_{1}^{C,1}$, $\tilde{\beta}_2^C=\beta_{2}^{C,1}$ and $\tilde{\ell}_{2}=l_{2}^2$. Then there will totally be $2^3=8$ equations in the coupled system. However because of the special structure we end up with only 4 different equations. In particular, for $k_{1},k_{2},k_{3}\in\{1,2\}$ we have
\begin{align}
	&d u_{k}(t;k_1,k_2,k_3) =\left\{u_{k}(t;k_1,k_2,k_3)(-\bar{\alpha} k + \beta^S \kappa(\theta-X_t) k + 0.5 (\beta^S)^2 \epsilon^{2} X_t k(k-1)) \right. \nonumber\\
	&\quad \left.- u_{k+1}(t;k_1,k_2,k_3)\right\} dt+ u_{k-1}(t;k_1,k_2,k_3) \left\{ (0.5 \sigma^2 k(k-1) + \bar{\alpha} \bar{\lambda} k) + G_{k}(t;k_1,k_2) \right\}dt\nonumber\\
	&\quad+ \beta^S \epsilon\sqrt{ X_t} k u_{k}(t;k_1,k_2,k_3) dV_t\nonumber
\end{align}

Notice that $u_{k}(t;k_1,k_2,1)=u_{k}(t;k_1,k_2,2)$ for $k_{1},k_{2}=1,2$. We supplement $u_{k}(t;k_1,k_2,k_{3})$ with initial conditions
together with $u_{k}(0;k_1,k_2,k_3)= \int_0^\infty \lambda^k (\pi\times\Lambda_0)(\hat{p}) d\lambda$ and we define
\begin{align}
G_{k}(t;k_1,k_2) &= k l_1^1 \beta_1^{C,k_1} \sum_{i_1,i_2,i_3} u_1(t;i_1,i_2,i_3) \mathbb{P}(\tilde{\beta}_1^C=\beta_1^{C,i_1}, \tilde{\beta}_2^C=\beta_2^{C,i_2}, \tilde{\ell}_2=l_2^{i_3})\nonumber\\
&+ k \beta_2^{C,k_2} \sum_{i_1,i_2,i_3} l_2^{i_3}  u_1(t;i_1,i_2,k_3) \mathbb{P}(\tilde{\beta}_1^C=\beta_1^{C,i_1}, \tilde{\beta}_2^C=\beta_2^{C,i_2}, \tilde{\ell}_2=l_2^{i_3}),\nonumber
\end{align}
where $k_1,k_2=1,2$. Then we have that the overall loss rate is
\[
D_t^N \approx D_t = 1- \sum_{k_1,k_2,k_3} u_0(t;k_1,k_2,k_3) \mathbb{P}(\tilde{\beta}_1^C=\beta_1^{C,k_1}, \tilde{\beta}_2^C=\beta_2^{C,k_2},
\tilde{\ell}_2=l_2^{k_3}).
\]

The loss rate for type $(k_1,k_2,k_3)$, where $k_1,k_2,k_3=1,2$ essentially changes only with $k_1$ and $k_2$ and takes the form,
\[
D_t^{N}(k_1,k_2,1) \approx D_t(k_1,k_2,1) = 1-u_0(t;k_1,k_2,1),\quad D_t(k_1,k_2,1)=D_t(k_1,k_2,2).
\]

The mean impact on name $n$ from system wide defaults up to time $t$ is determined only via the choices for $k_1$ and $k_2$ through $\tilde{\beta}_1^C$ and $\tilde{\beta}_2^C$ respectively. In particular, we have
\[
Q_t^{N,n} \approx Q_t(k_1,k_2) = \beta_1^{C,k_1} L_t^1 + \beta_2^{C,k_2} L_t^2,
\]
where  for the $j-$th level of interaction, $j=1,2$, we have
$$L_t^1= l_1^1 - l_1^1 \sum_{k_1,k_2,k_3} u_0(t;k_1,k_2,k_3)\mathbb{P}(\tilde{\beta}_1^C=\beta_1^{C,k_1}, \tilde{\beta}_2^C=\beta_2^{C,k_2}, \tilde{\ell}_2=l_2^{k_3})$$
\begin{align}
L_t^2= &\sum_{k_3} l_2^{k_3} \mathbb{P}(\tilde{\ell}_2=l_2^{k_3}) \nonumber\\
& + \sum_{k_1,k_2,k_3} l_2^{k_3}  u_0(t;k_1,k_2,k_3)\mathbb{P}(\tilde{\beta}_1^C=\beta_1^{C,k_1}, \tilde{\beta}_2^C=\beta_2^{C,k_2}, \tilde{\ell}_2=l_2^{k_3})\nonumber
\end{align}

Due to the assumed independence, all the joint probabilities can be written as the product of marginals, for example,  $\mathbb{P}(\tilde{\beta}_1^C=\beta_1^{C,k_1}, \tilde{\beta}_2^C=\beta_2^{C,k_2}, \tilde{\ell}_2=l_2^{k_3})=\mathbb{P}(\tilde{\beta}_1^C=\beta_1^{C,k_1})\mathbb{P}(\tilde{\beta}_2^C=\beta_2^{C,k_2})\mathbb{P}(\tilde{\ell}_2=l_2^{k_3})$.

As with the previous example, we choose the time endpoint to be $T=1$. We do the numerical iteration with time step being 0.01. We run 50,000 Monte Carlo trials. In Figure \ref{F:LimitingLoss2rankKvary}, we show the densities for the overall limiting loss rate in the pool for different truncation levels $K=5, 10, 20, 50$. Again, the results are visually indistinguishable for all those different truncation levels, meaning that the truncation mechanism is reliable even for a low level of truncation.
\begin{figure}[h!]
	\centering
	\includegraphics[scale=0.6]{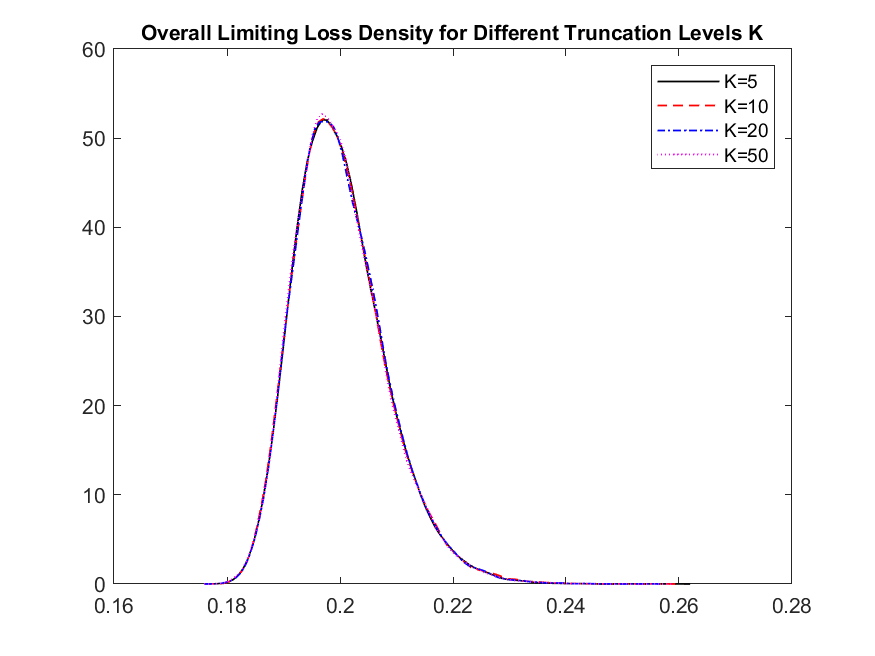}
	\small \caption{Density for overall limiting loss $D_T$ at different truncation levels $K=5, 10, 20, 50$.}\label{F:LimitingLoss2rankKvary}
\end{figure}

In the following experiments we still choose the truncation level $K=20$ and plot overall limiting loss rate $D_t$ and the limiting loss rate for different types $D_t(k_1,k_2)$, $k_1,k_1=1,2$ in the left plot of Figure \ref{F:LimitingLoss2rank}. We also plot the empirical mean of the overall limiting loss rate and the empirical mean of the loss rate $D_T$ for different types over time $D_T(k_1,k_2)$, $k_1,k_1=1,2$ in the right plot of Figure \ref{F:LimitingLoss2rank}.
\begin{figure}[h!]
	\centering
	\includegraphics[scale=0.6]{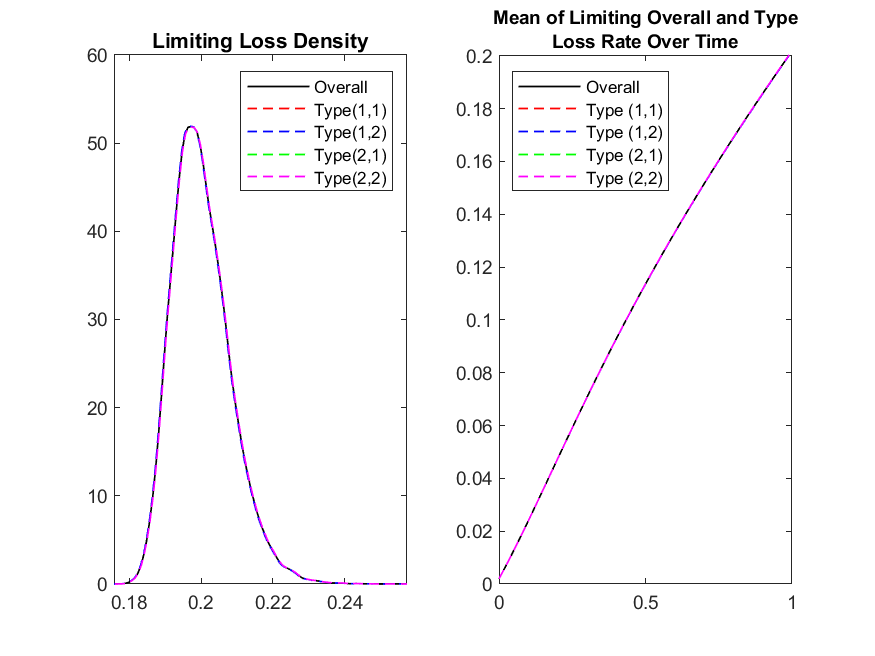}
	\small \caption{Left: Density for overall limiting loss $D_T$ and limiting loss for types $D_T(k_1,k_2)$ at $T=1$; Right: Empirical mean of overall limiting loss $D_T$ and empirical mean of limiting loss for types $D_T(k_1,k_2)$ up to time $T=1$.}\label{F:LimitingLoss2rank}
\end{figure}

In Figure \ref{F:MeanImpact2rank} we plot the mean impact on a name from type $(k_1,k_2)$, $k_1,k_2=1,2$ due to system wide defaults up to  time $T=1$ .
\begin{figure}[h!]
	\centering
	\includegraphics[scale=0.6]{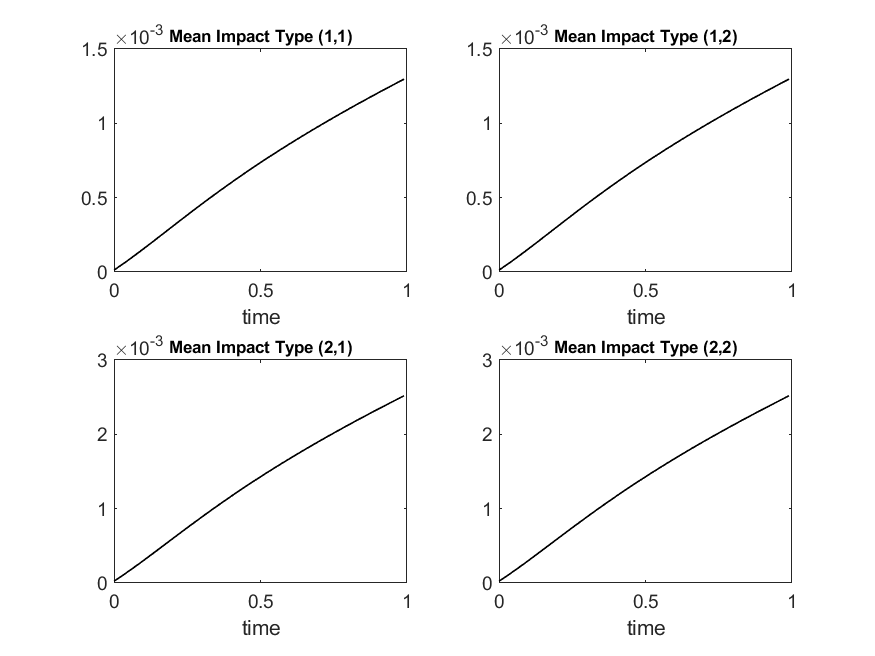}
	\small \caption{Mean impact on names of type $(k_1,k_2)$, $Q_t(k_1,k_2)$, from system wide default by up to time $T=1$.}\label{F:MeanImpact2rank}
\end{figure}

As we discussed in the beginning of this section,
the SVD facilitates the decomposition of the network interaction into
$r$ mean-field type levels of interaction.

We test the effect of the low-rank approximation by only keeping the first level of interaction. This singles out the contribution of the most important level of interaction.

 In other words, we replace  $\Delta$ by
$$A=\Delta_{\textrm{aprrox}}=\xi_1^2 \ell_1 u_1^T$$
which reduces the problem to a one level of interaction problem.
Comparing the overall limiting loss that we get from the two level of interaction case  $D_t$ and its first level of interaction approximation $D_{\textrm{approx},t}$, see left plot of Figure \ref{F:LimitingLossOverall2rank}, we get that the distribution of the limiting loss processes are practically indistinguishable. Similar conclusion can be made from the right plot of Figure \ref{F:LimitingLossOverall2rank}, where we plot the empirical mean of overall limiting loss rate over time in the two level of interaction example $D_T$ and it first-level of interaction approximation $D_{\textrm{approx},T}$. These in turn imply that the second level of interaction can be neglected for the purposes of these computations.
\begin{figure}[h!]
	\centering
	\includegraphics[scale=0.6]{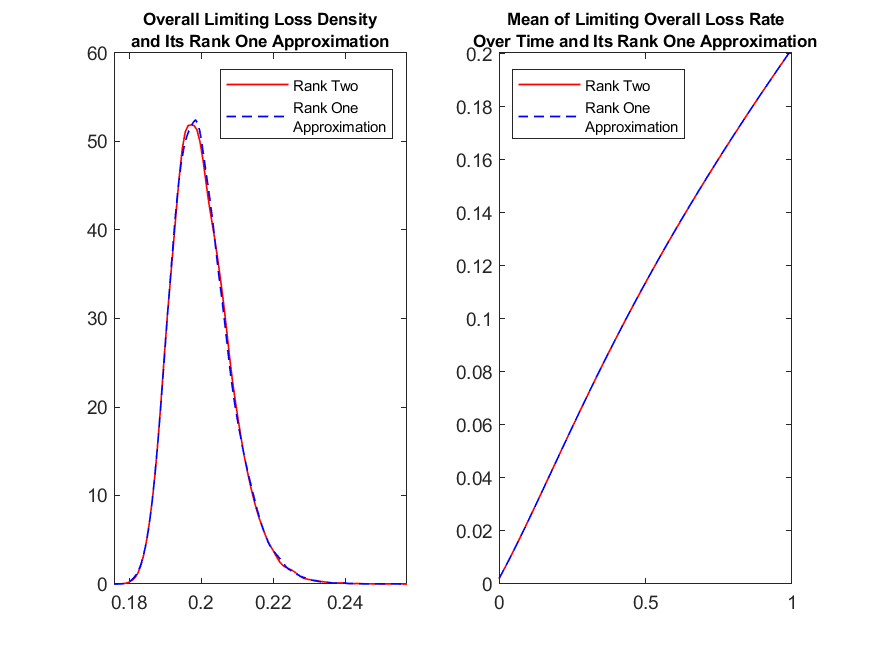}
	\small \caption{Left: Density for overall limiting loss $D_T$ and overall limiting loss $D_{\textrm{approx},T}$ from its rank one approximation at $T=1$; Right: Empirical mean of overall limiting loss $D_T$ and empirical mean of limiting loss in its rank one approximation $D_{\textrm{approx},T}$ up to time $T=1$.}\label{F:LimitingLossOverall2rank}
\end{figure}

Lastly, we investigate how the mean impact on a name from system wide defaults for the two level of interaction case and its one level of interaction approximated version compare. In Figure \ref{F:MeanImpact2rank}, we see that the mean impact on given names  depends mainly on $\tilde{\beta}_1^{C}$, and not so much on $\tilde{\beta}_2^{C}$. This will be further verified in the one level of interaction approximation case, where we calculate the approximated mean impacts on these two types by using the information only from first entries of $\beta^C$ and $L_t^N$, i.e., by using only the information from the first level of interaction, shown in Figure \ref{F:MeanImpact2rankApprox}.
\begin{figure}[h!]
	\centering
	\includegraphics[ scale=0.6]{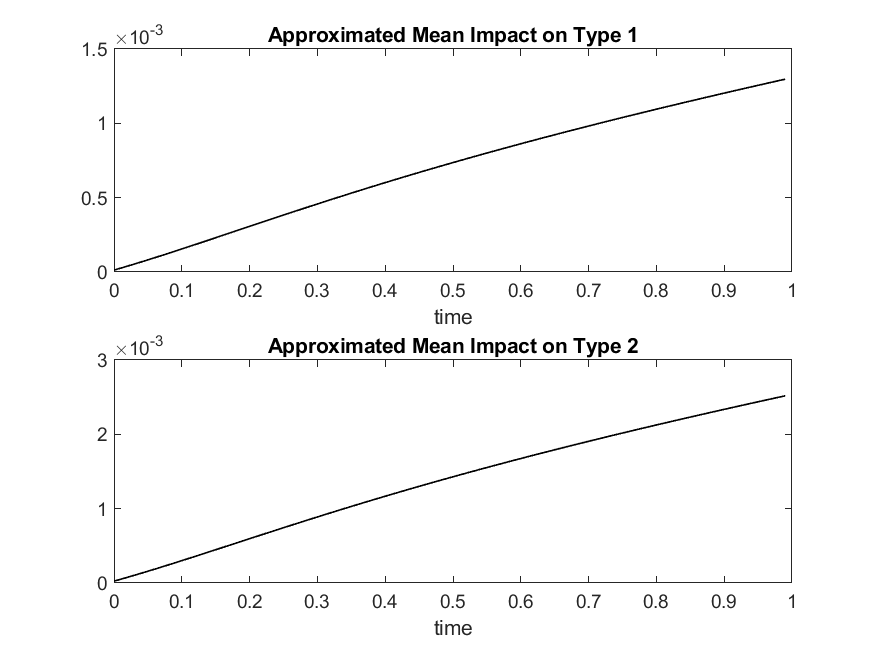}
	\small \caption{Approximated mean impact on names of different types from system wide default by time t in the coarse-grained case.}\label{F:MeanImpact2rankApprox}
\end{figure}
$$Q_{\text{approx},t}^{N,n}(p_i) \approx \beta_{1}^{C,i} L_{\text{approx},t}^{1}, \text{ for } i=1,2.$$

Comparing Figures \ref{F:MeanImpact2rank} and \ref{F:MeanImpact2rankApprox} we see that the first level of interaction, which has the largest eigenvalue,  indeed captures the behavior on the mean impact on a given name of type defined by $\tilde{\beta}_1^C$. In addition, notice that the mean default impact on names of type 2 is larger than the mean default impact on names of type 1 for all $t\in[0,1]$. This is to be expected due to the relation $\beta_{1}^{C,2}>\beta_{1}^{C,1}$.

\subsection{Core-Periphery example one: homogeneous mean-reverting coefficient}\label{Ex:Core-Periphery1}

A reasonably realistic model for financial related applications is the core-periphery case, see for example \cite{CraigVonPeter2014,HojmanSzeidl2008}. In a core-periphery model, one has a few names that constitute the core of the network and considerably depend on each other, in a sense forming the most influential part of the network, and the periphery which is composed by the rest of the names in the pool which depend less on each other.  Core institutions borrow from, and lend to, at least one institution in the periphery.

Motivated by this structure, let us consider the case of  $N=1000$ names and an appropriate adjacency matrix $\Delta$. For illustration purposes a $10\times 10$ block of $\Delta$ is given by:
$$\Delta_{10\times 10}=
\begin{bmatrix}
0  &  10  &  1  &  10  &  10  &  1  &  10  &  1  &  1  &  10 \\
10  &  0  &  1  &  1  &  10  &  10  &  10  &  1  &  10  &  1 \\
1  &  1  &  0  &  1  &  1  &  1  &  1  &  1  &  1  &  1 \\
5  &  1  &  1  &  0  &  1  &  1  &  1  &  1  &  1  &  1 \\
5  &  5  &  1  &  1  &  0  &  1  &  1  &  1  &  1  &  1 \\
1  &  5  &  1  &  1  &  1  &  0  &  1  &  1  &  1  &  1 \\
5  &  1  &  1  &  1  &  1  &  1  &  0  &  1  &  1  &  1 \\
1  &  1  &  1  &  1  &  1  &  1  &  1  &  0  &  1  &  1 \\
1  &  5  &  1  &  1  &  1  &  1  &  1  &  1  &  0  &  1 \\
1  &  1  &  1  &  1  &  1  &  1  &  1  &  1  &  1  &  0 \\
\end{bmatrix}
$$

  The SVD for such a  matrix gives 5 eigenvalues 1029, 143, 137.8, 59.9 and 58.5 significantly larger than the rest, with the first one being dominantly big. Therefore, motivated by the low rank approximation, we can use the first few levels of interaction to approximate the behavior of the network.

\subsubsection{One level of interaction approximation for core-periphery}
Let us choose the first eigenvalue to do the low rank approximation. Similarly to what was done for the previous examples, we define discrete random variables $\tilde{\beta}_1^C$ and $\tilde{\ell}_1$ taking values from the SVD with corresponding relative frequencies. It turns out that the SVD composition yields six different values for $\tilde{\beta}_1^C$ and three different values for $\tilde{\ell}_1$. We record the values  in Tables \ref{T:betaC1} and \ref{T:l1} respectively.
\begin{table}[h!]
	\begin{center}
		\begin{tabular}{c|c|c|c|c|c|c|c}
			$\tilde{\beta}_1^C$ & $\beta^{C,1}_{1}$ & $\beta^{C,2}_{1}$ & $\beta^{C,3}_{1}$ & $\beta^{C,4}_{1}$ & $\beta^{C,5}_{1}$ & $\beta^{C,6}_{1}$ & \\
			\hline
            value  & 31.0514 & 32.4883 & 32.5136 & 33.9505 & 73.6927 & 74.4088  \\
		\end{tabular}
			\caption{Possible values for $\tilde{\beta}_1^C$.}
		    \label{T:betaC1}
	\end{center}
\end{table}
\begin{table}[h!]
	\begin{center}
		\begin{tabular}{c|c|c|c|c}
			$\tilde{\ell}_1$ & $l_{1}^1$ & $l_{1}^2$ & $l_{1}^3$ & \\
			\hline
			value & 0.0308 & 0.1597 & 0.1625  \\
		\end{tabular}
		\caption{Possible values for $\tilde{\ell}_1$.}
		\label{T:l1}
	\end{center}
\end{table}

Let us choose the following values for the parameters
 $\kappa=4$, $\theta=0.5$, $\epsilon=0.5$, $X_0=0.2$,
$\sigma=0.9$, $\bar{\alpha}=4$, $\bar{\lambda}=0.2$, $\lambda_0=0.2$ and $\beta^S=2$.

Let us denote by $u_{k}(t;k_1,k_2)$ to be the $k$-th moment at time t with $k_1 \in\{1,2,\dots,6\}$ and $k_2 \in \{1,2,3\}$ being the  index choice for $\tilde{\beta}_1^C$, and $\tilde{\ell}_1$ respectively. For example, $k_1=1, k_2=2$ corresponds to the choice $\tilde{\beta}_1^C=\beta_{1}^{C,1}$ and $\tilde{\ell}_1=l_{1}^2$. The empirical joint distribution of $\tilde{\beta}_1^C$ and $\tilde{\ell}_1$ is summarized as follows.
\begin{table}[h!]
	\begin{center}
		\begin{tabular}{c|c|c|c}
			$k_1$ & $k_2$ & probability & \\
			\hline
			6 & 3 & 0.001 \\
			5 & 2 & 0.001 \\
			4 & 1 & 0.227 \\
			3 & 1 & 0.238 \\
			2 & 1 & 0.228 \\
			1 & 1 & 0.305 \\
		\end{tabular}
		\caption{Joint distribution for $\tilde{\beta}_1^C$ and $\tilde{\ell}_1$.}
		\label{T:jointrank1}
	\end{center}
\end{table}

In general there would have been  in total  $6 \times 3=18$ equations in the coupled system. However, because of the special structure we end up with only 6 different equations.  Based on the available combinations of $k_1,k_2$ as indicated in Table \ref{T:jointrank1} we have
\begin{align}
	&d u_{k}(t;k_1,k_2) =\left\{u_{k}(t;k_1,k_2)(-\bar{\alpha} k + \beta^S \kappa(\theta-X_t) k + 0.5 (\beta^S)^2 \epsilon^{2} X_t k(k-1)) \right. \nonumber\\
	&\quad \left.- u_{k+1}(t;k_1,k_2)\right\} dt+ u_{k-1}(t;k_1,k_2) \left\{ (0.5 \sigma^2 k(k-1) + \bar{\alpha} \bar{\lambda} k) + G_{k}(t;k_1) \right\}dt\nonumber\\
	&\quad+ \beta^S \epsilon\sqrt{ X_t} k u_{k}(t;k_1,k_2) dV_t,\nonumber
\end{align}
together with $u_{k}(0;k_1,k_2)= \int_0^\infty \lambda^k (\pi\times\Lambda_0)(\hat{p}) d\lambda$
and where we define
$$G_k(t;k_1)=\left( \sum_{i_1,i_2} l_{1}^{i_2}  u_1(t;i_1,i_2) \mathbb{P}(\tilde{\beta}_1^C=\beta_1^{C,i_1},\tilde{\ell}_1=l_1^{i_2}) \right) k \beta_1^{C,k_1}.$$

In particular $u_k(t;k_1,k_2)$ is only affected by the index $k_1$ through $G_k(t;k_1)$. The overall loss rate in the one-level of interaction approximation is
$$
D_{1\text{approx},t}^N \approx D_{1\text{approx},t} = 1- \sum_{k_1,k_2} u_0(t;k_1,k_2) \mathbb{P}(\tilde{\beta}_1^C=\beta_1^{C,k_1},\tilde{\ell}_1=l_1^{k_2}).
$$

The loss rate for type $(k_1,k_2)$ where $k_1=1,2,\dots,6$ and $k_2=1,2,3$ in the one-level of interaction approximation are actually falling into 6 distinct categories indexed by $k_1$, the choice of $\tilde{\beta}_1^C$.
\[
D_{1\text{approx},t}^{N}(k_1,k_2) \approx D_{1\text{approx},t}(k_1,k_2) = 1- u_0(t;k_1,k_2).
\]

The mean impact, from system wide defaults up to time t, on name $n$,   turns out to be characterized only by the first index $k_1$
\[
Q_{1\text{approx},t}^{N,n}(k_1,k_2) \approx Q_{1\text{approx},t}(k_1) = \beta_1^{C,k_1} L_{1\text{approx},t},
\]
for any $k_2=1,2,3$ with
\begin{align}
	L_{1\text{approx},t} = & \sum_{k_2} l_1^{k_2} \mathbb{P}(\tilde{\ell}_1=l_1^{k_2})\nonumber\\
	& - \sum_{k_1,k_2} l_1^{k_2}  u_0(t;k_1,k_2) \mathbb{P}(\tilde{\beta}_1^C=\beta_1^{C,k_1},\tilde{\ell}_1=l_1^{k_2})\nonumber
\end{align}

As with the previous two examples, we truncate at the level $K=20$, and choose the time endpoint to be $T=1$. We do the numerical iteration with time step being 0.01. We run 50,000 Monte Carlo trials and plot overall limiting loss rate $D_{1\text{approx},t}$ and the limiting loss rate for different types $D_{1\text{approx},t}^{k_1}$, $k_1=1,2,\dots,6$ in Figure \ref{F:LimitingLossCorePeriphery1rank}. Notice how the mean of the distribution shifts to the right as the value for $k_1$ increases, indicating an increase to the value that the random variable $\tilde{\beta}^{C}_{1}$ takes. We plot the mean of the loss rate over time for the whole pool and for individual types  in Figure \ref{F:LimitingLossOverTimeCorePeriphery1rank}. We observe that the plot indicates larger losses as the value for $k_1$ increases, signaling that names with large value for $\beta^{C}_{1}$ will be more likely to default and thus contribute more to a potential default clustering event.
\begin{figure}[h!]
	\centering
	\includegraphics[scale=0.6]{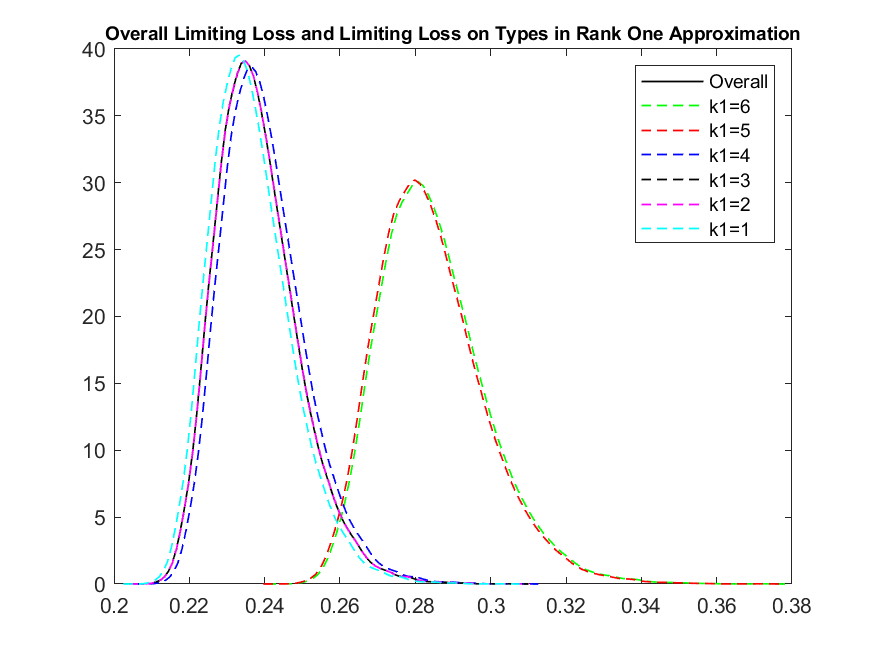}
	\small \caption{Density for overall limiting loss $D_{1\text{approx},T}$ and limiting loss for types $D_{1\text{approx},T}(k_1)$ at $T=1$.}\label{F:LimitingLossCorePeriphery1rank}
\end{figure}

\begin{figure}[h!]
	\centering
	\includegraphics[scale=0.6]{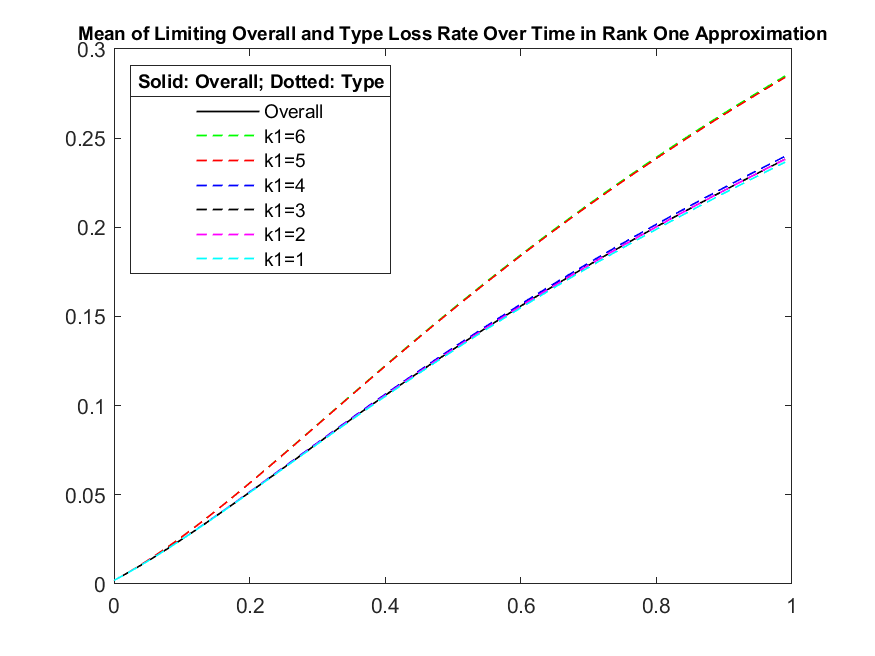}
	\small \caption{Empirical mean of overall limiting loss $D_{1\text{approx},T}$ and empirical mean of limiting loss for types $D_{1\text{approx},T}(k_1)$  up to time $T=1$}\label{F:LimitingLossOverTimeCorePeriphery1rank}
\end{figure}

In Figure \ref{F:MeanImpactCorePeri1rankApprox}, we plot the mean impact on a name from system wide defaults up to time $t$. There are totally 6 different categories indexed by $k_1$, the choice of $\tilde{\beta}_1^C$, as we discussed before.
\begin{figure}[h!]
	\centering
	\includegraphics[scale=0.6]{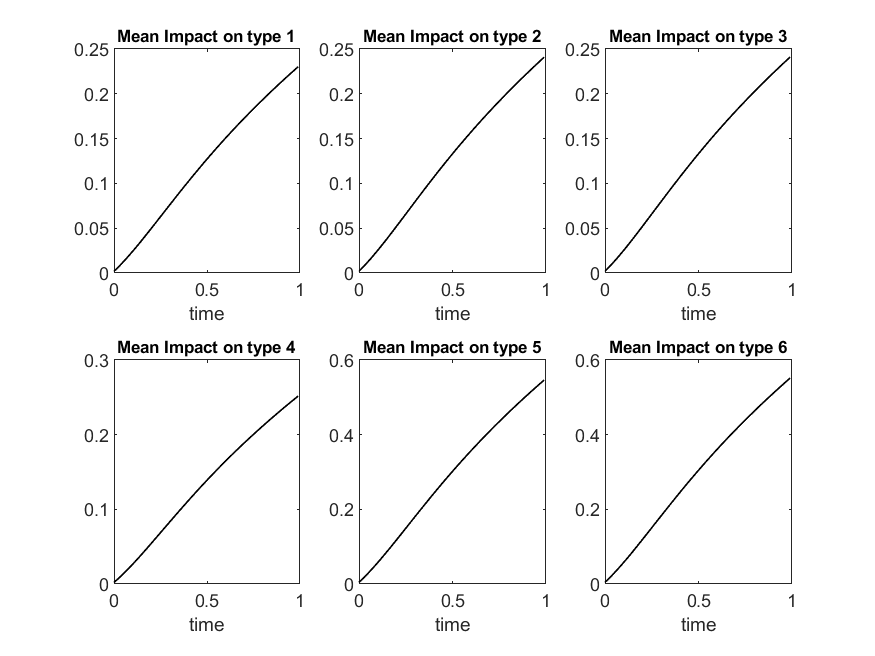}
	\small \caption{Mean impact on names of different types from system wide default by time t for the Core-Periphery case approximated by the first rank.}\label{F:MeanImpactCorePeri1rankApprox}
\end{figure}

\subsubsection{Two levels of interaction approximation for core-periphery}

Let us now investigate the core-periphery case by doing a low rank approximation based on the first two levels of interaction. From the SVD decomposition, the second largest eigenvalue is 143. Below, we summarize the empirical distributions of coefficients from the second columns of the matrices from the SVD decomposition. Table \ref{T:betaC2} is for coefficient $\tilde{\beta}_2^C$ and Table \ref{T:l2} is for coefficient $\tilde{\ell}_2$.
\begin{table}[h!]
	\begin{center}
		\begin{tabular}{c|c|c|c|c|c|c|c|c|c|c}
			$\tilde{\beta}_2^C$ & $\beta^{C,1}_{2}$ & $\beta^{C,2}_{2}$ & $\beta^{C,3}_{2}$ & $\beta^{C,4}_{2}$ & $\beta^{C,5}_{2}$ & $\beta^{C,6}_{2}$ & $\beta^{C,7}_{2}$ & $\beta^{C,8}_{2}$ & $\beta^{C,9}_{2}$ &\\
			\hline
            value & -12.7072 & -12.1454 & -5.7944 & 0.2753 & 0.2777 & 0.5080 & 0.5105 & 6.5777 & 6.5801 \\
		\end{tabular}
			\caption{Possible values for $\tilde{\beta}_2^C$.}
		    \label{T:betaC2}
	\end{center}
\end{table}

\begin{table}[h!]
	\begin{center}
		\begin{tabular}{c|c|c|c|c|c|c}
			$\tilde{\ell}_2$ & $l_{2}^1$ & $l_{2}^2$ & $l_{2}^3$ & $l_{2}^4$ & $l_{2}^5$ &\\
			\hline
			value & -0.0107 & -0.0081 & -0.0054 & 0.6674 & 0.7002\\
		\end{tabular}
		\caption{Possible values for $\tilde{\ell}_2$.}
		\label{T:l2}
	\end{center}
\end{table}

Let us now denote by $u_{k}(t;k_1,k_2,k_3,k_4)$ to be the $k$-th moment by time t with $k_1 \in\{1,2,\dots,6\}$, $k_2 \in\{1,2,\dots,9\}$, $k_3 \in \{1,2,3\} $, and $k_4\in\{1,\dots,5\}$ being the  index choice for $\tilde{\beta}_1^C$, $\tilde{\beta}_2^C$, $\tilde{\ell}_1$ and $\tilde{\ell}_2$ respectively. For example, $k_1=1, k_2=1, k_3=2, k_4=1$ corresponds to the choice $\tilde{\beta}_1^C=\beta_{1}^{C,1}$, $\tilde{\beta}_2^C=\beta_{2}^{C,1}$, $\tilde{\ell}_1=l_{1}^2$ and $\tilde{\ell}_2=l_{2}^1$.

The empirical joint distribution of $\tilde{\beta}_1^C$,  $\tilde{\beta}_2^C$, $\tilde{\ell}_1$ and $\tilde{\ell}_2$ is summarized as follows.
\begin{table}[h!]
	\begin{center}
		\begin{tabular}{c|c|c|c|c|c}
			$k_1$ & $k_2$ & $k_3$ & $k_4$ &probability & \\
			\hline
			6 & 1 & 3 & 5 & 0.001 \\
			5 & 2 & 2 & 4 & 0.001 \\
			4 & 9 & 1 & 2 & 0.089 \\
			4 & 9 & 1 & 1 & 0.120 \\
			4 & 8 & 1 & 3 & 0.018 \\
			3 & 7 & 1 & 2 & 0.171 \\
			3 & 6 & 1 & 3 & 0.067 \\
			2 & 5 & 1 & 2 & 0.172 \\
			2 & 4 & 1 & 3 & 0.056 \\
			1 & 3 & 1 & 3 & 0.305 \\
		\end{tabular}
		\caption{Joint distribution for $\tilde{\beta}_1^C$,  $\tilde{\beta}_2^C$, $\tilde{\ell}_1$ and $\tilde{\ell}_2$.}
		\label{T:jointrank2}
	\end{center}
\end{table}

In general there would have been in total $6\times9\times3\times5=810$ equations in the coupled system. However, because of the special structure we end up with only $10$ different equations. Based on the allowable choices for $k_1,k_2,k_3,k_4$  as indicated in Table \ref{T:jointrank2} we have
\begin{align}
	&d u_{k}(t;k_1,k_2,k_3,k_4) =\left\{u_{k}(t;k_1,k_2,k_3,k_4)(-\bar{\alpha} k + \beta^S \kappa(\theta-X_t) k + 0.5 (\beta^S)^2 \epsilon^{2} X_t k(k-1)) \right. \nonumber\\
	&\quad \left.- u_{k+1}(t;k_1,k_2,k_3,k_4)\right\} dt+ u_{k-1}(t;k_1,k_2,k_3,k_4) \left\{ (0.5 \sigma^2 k(k-1) + \bar{\alpha} \bar{\lambda} k) + G_{k}(t;k_1,k_2) \right\}dt\nonumber\\
	&\quad+ \beta^S \epsilon\sqrt{ X_t} k u_{k}(t;k_1,k_2,k_3,k_4) dV_t\nonumber
\end{align}
together with $u_{k}(0;k_1,k_2,k_3,k_4)= \int_0^\infty \lambda^k (\pi\times\Lambda_0)(\hat{p})d\lambda$ where we have defined
\begin{align}
	&G_k(t;k_1,k_2)\nonumber\\
	 = & k \beta_1^{C,k_1} \left[ \sum_{i_1,i_2,i_3,i_4} l_1^{i_3}  u_1(t;i_1,i_2,i_3,i_4) \mathbb{P}(\tilde{\beta}_1^C=\beta_1^{C,i_1},\tilde{\beta}_2^C=\beta_2^{C,i_2},\tilde{\ell}_1=l_1^{i_3},\tilde{\ell}_2=l_2^{i_4})\right]\nonumber\\
	 & + k \beta_2^{C,k_2} \left[ \sum_{i_1,i_2,i_3,i_4} l_2^{i_4}  u_1(t;i_1,i_2,i_3,i_4) \mathbb{P}(\tilde{\beta}_1^C=\beta_1^{C,i_1},\tilde{\beta}_2^C=\beta_2^{C,i_2},\tilde{\ell}_1=l_1^{i_3},\tilde{\ell}_2=l_2^{i_4})\right].\nonumber
\end{align}

In particular, $u_{k}(t;k_1,k_2,k_3,k_4)$ is only affected by the choices of $k_1,k_2$ through $G_k(t;k_1,k_2)$. The overall loss rate is
\begin{align}
D_{2\text{approx},t}^N \approx D_{2\text{approx},t} = &1- \sum_{k_1,k_2,k_3,k_4}  u_0(t;k_1,k_2,k_3,k_4)\nonumber\\
&\quad \quad \cdot \mathbb{P}(\tilde{\beta}_1^C=\beta_1^{C,k_1},\tilde{\beta}_2^C=\beta_2^{C,k_2},\tilde{\ell}_1=l_1^{k_3},\tilde{\ell}_2=l_2^{k_4})\nonumber
\end{align}
$$D_{2\text{approx},t}^{N}(k_1,k_2,k_3,k_4) \approx D_{2\text{approx},t}(k_1,k_2,k_3,k_4) = 1-u_0(k_1,k_2,k_3,k_4).$$

The mean impact on name $n$ from type $(k_1,k_2,k_3,k_4)$, where $k_1=1,2,\dots,6$, $k_2=1,2,\dots,9$, $k_3=1,2,3$ and $k_4=1,2,\dots,5$, is again determined by the choice $k_1$ and $k_2$ for $\tilde{\beta}_1^C$ and $\tilde{\beta}_2^C$ respectively
$$Q_{2\text{approx},t}^{N,n}(k_1,k_2,k_3,k_4) \approx Q_{2\text{approx},t} (k_1,k_2) = \beta_1^{C,k_1} L_{2\text{approx},t}^1 + \beta_2^{C,k_2} L_{2\text{approx},t}^2,$$
where  for the $j$-th level of interaction, $j=1,2$, in the two-level of interaction approximation we have
\begin{align}
L_{2\text{approx},t}^1 =& \sum_{k_3} l_1^{k_3} \mathbb{P}(\tilde{\ell}_1=l_1^{k_3})- \sum_{k_1,k_2,k_3,k_4} l_1^{k_3}  u_0(t;k_1,k_2,k_3,k_4)\nonumber\\
&   \quad \quad \cdot  \mathbb{P}(\tilde{\beta}_1^C=\beta_1^{C,k_1},\tilde{\beta}_2^C=\beta_2^{C,k_2},\tilde{\ell}_1=l_1^{k_3},\tilde{\ell}_2=l_2^{k_4})\nonumber
\end{align}

\begin{align}
	L_{2\text{approx},t}^2 =& \sum_{k_4} l_2^{k_4} \mathbb{P}(\tilde{\ell}_2=l_2^{k_4})- \sum_{k_1,k_2,k_3,k_4} l_2^{k_4}  u_0(t;k_1,k_2,k_3,k_4)\nonumber\\
	&  \quad \quad \cdot  \mathbb{P}(\tilde{\beta}_1^C=\beta_1^{C,k_1},\tilde{\beta}_2^C=\beta_2^{C,k_2},\tilde{\ell}_1=l_1^{k_3},\tilde{\ell}_2=l_2^{k_4}).\nonumber
\end{align}

As with the previous example, we truncate at the level $K=20$, and choose the time endpoint to be $T=1$. We do the numerical iteration with time step being 0.01. We run 50,000 Monte Carlo trials and plot the overall limiting loss $D_{2\text{approx},t}$ in the two level of interaction approximation. In the left plot of Figure \ref{F:LimitingLossCorePeriphery}, we see that the two approximations perform  similarly in estimating the overall loss rate. This can be also verified via the plot of the mean of overall loss rate over time for each one of the two approximations in the right plot of Figure \ref{F:LimitingLossCorePeriphery}.
\begin{figure}[h!]
	\centering
	\includegraphics[scale=0.6]{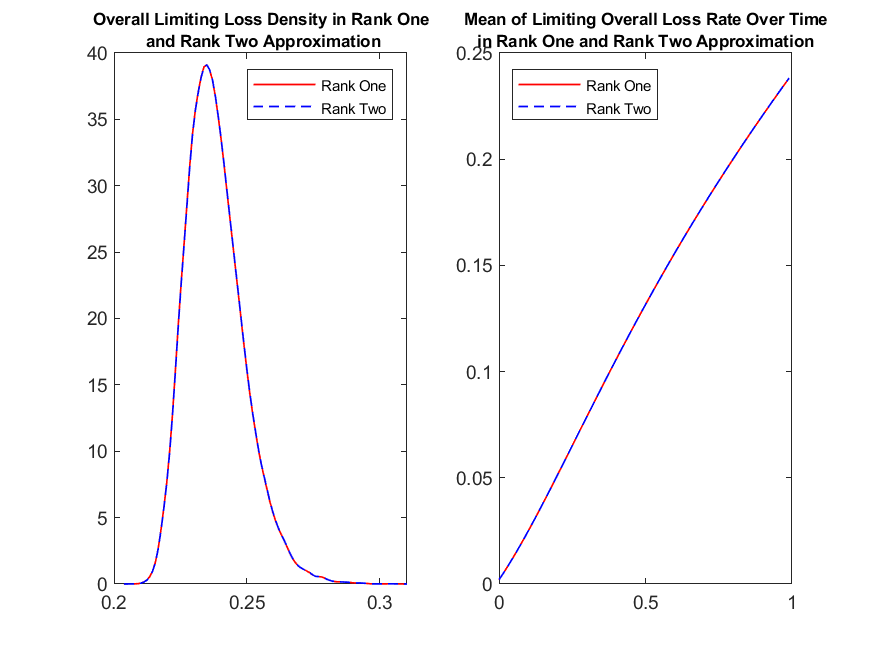}
	\small \caption{Left: Overall limiting loss from rank one approximation $D_{1\text{approx},T}$ and rank two approximation $D_{2\text{approx},T}$ at $T=1$; Right: Empirical mean of overall limiting loss for rank one approximation $D_{1\text{approx},T}$ and rank two approximation $D_{2\text{approx},T}$  up to time $T=1$.}\label{F:LimitingLossCorePeriphery}
\end{figure}

We can also investigate the mean impact on a name in the two-level of interaction approximation case. By Table \ref{T:jointrank2} we will have $10$ different types of mean impacts in the two-level of interaction approximation case. These are demonstrated in Figure \ref{F:MeanImpactCorePeriphery}.
\begin{figure}[h!]
	\centering
	\includegraphics[scale=0.6]{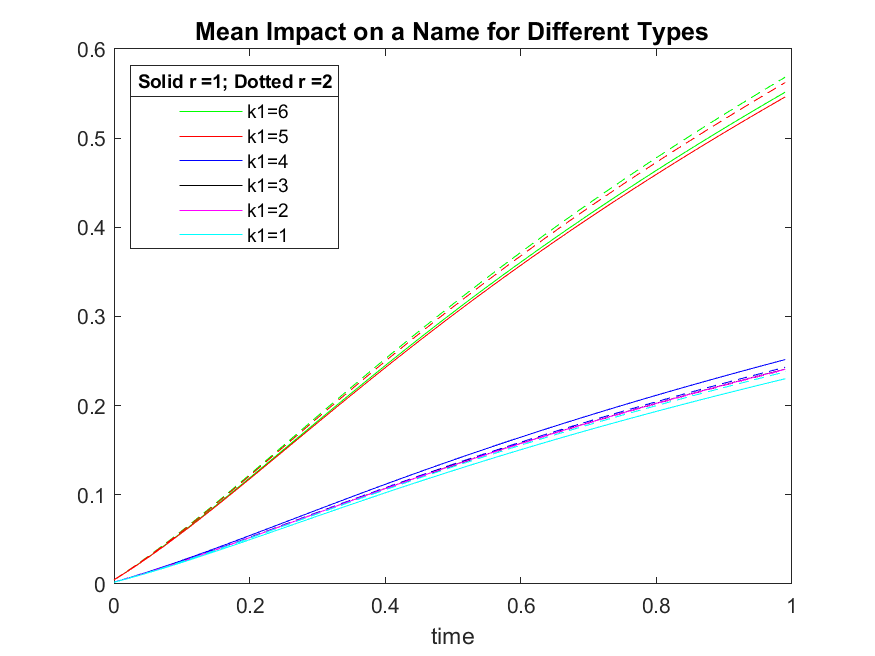}
	\small \caption{Mean impact on names of different types from system wide default by time t for the Core Periphery case approximated by rank one (solid line) and rank two (dotted line) with colors to distinguish the choice of $k_1$ in $\beta_1^{C,k_1}$ }\label{F:MeanImpactCorePeriphery}
\end{figure}

It is instructive to compare the low rank approximation based on just the first level of interaction with the low rank approximation based on the first two levels of interaction. The dotted lines are very well approximated by the solid line  in Figure \ref{F:MeanImpactCorePeriphery}. In fact, we computed numerically the percent error  of the mean impact on a name from the two different approximations, that is,
\[
\text{PE}_{t}(k_1,k_2)=|Q_{2\text{approx},t}(k_1,k_2)-Q_{1\text{approx},t}(k_1)|/Q_{2\text{approx},t}(k_1,k_2),
\]
and in all cases the percent error made by using the one-level of interaction approximation versus the two-level of interaction approximation was not greater than $1.7\%$ for all times $t\in[0,1]$. For comparison purposes we also mention that the computation of $D_{t}$ and $Q_{t}$ based on the two-level of interaction approximation took about two times larger than the their computation based on the one-level of interaction approximation, indicating savings in computational time while maintaining accuracy. Lastly, notice that the mean default impact on names of type $k_{1}=1,\cdots,6$  from system wide defaults is ordered according to the order of the corresponding contagion coefficients $\beta_{1}^{C,k_{1}}$ via Table \ref{T:betaC1}.

\subsection{Core-periphery example two: nonhomogeneous mean-reverting coefficients}\label{Ex:Core-Periphery2}
Now we investigate the core-periphery case with nonhomogeneous mean-reverting coefficients. We assume that the mean-reverting coefficient $\bar{\lambda}$ takes different values for names in the core and in the periphery component of the network: $\bar{\lambda}^1=\bar{\lambda}_{\text{core}}=0.02$ and $\bar{\lambda}^2=\bar{\lambda}_{\text{periphery}}=0.2$ and the rest of the coefficients as well as the network structure are the same from the one level approximation example of the previous Subsection \ref{Ex:Core-Periphery1}.

Notice that the choices $\bar{\lambda}_{\text{core}}=0.02$ and $\bar{\lambda}_{\text{periphery}}=0.2$ represent the anticipation that it is harder for a core institution to default than it is for a periphery institution. In the intensity model that we study, smaller mean-reverting parameter $\bar{\lambda}$ means smaller intensity to default process. In this example, we only investigate the rank one approximation. After all, as we showed in Subsection \ref{Ex:Core-Periphery1}, this approximation is sufficient to accurately capture the dynamical quantities we are interested in.

Let us denote by $u_{k}(t;k_1,k_2,k_3)$ to be the $k$-th moment by time t with $k_1 \in\{1,2,\dots,6\}$, $k_2 \in \{1,2,3\}$ and $k_3 \in \{1,2\}$ being the choice index for $\tilde{\beta}_1^C$, $\tilde{\ell}_1$ and $\bar{\lambda}$ respectively. For example, $k_1=1, k_2=2, k_3=1$ corresponds to the choice $\tilde{\beta}_1^C=\beta_{1}^{C,1}$, $\tilde{\ell}_1=l_{1}^2$ and $\tilde{\bar{\lambda}}=\bar{\lambda}^1=\bar{\lambda}_{\text{core}}=0.02$. The empirical joint distribution of $\tilde{\beta}_1^C$, $\tilde{\ell}_1$ and $\tilde{\bar{\lambda}}$ is summarized as follows.
\begin{table}[h!]
	\begin{center}
		\begin{tabular}{c|c|c|c|c}
			$k_1$ & $k_2$ & $k_3$ & probability & \\
			\hline
			6 & 3 & 1 & 0.001 \\
			5 & 2 & 1 & 0.001 \\
			4 & 1 & 2 & 0.227 \\
			3 & 1 & 2 & 0.238 \\
			2 & 1 & 2 & 0.228 \\
			1 & 1 & 2 & 0.305 \\
		\end{tabular}
		\caption{Joint distribution for $\tilde{\beta}_1^C$, $\tilde{\ell}_1$ and $\tilde{\bar{\lambda}}$.}
		\label{T:jointrank3}
	\end{center}
\end{table}

Because of the special structure of our system we end up with  6 different equations as indicated by Table \ref{T:jointrank3}:
\begin{align}
	&d u_{k}(t;k_1,k_2,k_3) =\left\{u_{k}(t;k_1,k_2,k_3)(-\bar{\alpha} k + \beta^S \kappa(\theta-X_t) k + 0.5 (\beta^S)^2 \epsilon^{2} X_t k(k-1)) \right. \nonumber\\
	&\quad \left.- u_{k+1}(t;k_1,k_2)\right\} dt+ u_{k-1}(t;k_1,k_2,k_3) \left\{ (0.5 \sigma^2 k(k-1) + \bar{\alpha} \bar{\lambda}_{k_3} k) + G_{k}(t;k_1) \right\}dt\nonumber\\
	&\quad+ \beta^S \epsilon\sqrt{ X_t} k u_{k}(t;k_1,k_2,k_3) dV_t,\nonumber
\end{align}
together with $u_{k}(0;k_1,k_2,k_3)= \int_0^\infty \lambda^k (\pi\times\Lambda_0)(\hat{p}) d\lambda$
and where we define
$$G_k(t;k_1)=\left( \sum_{i_1,i_2,i_3} l_{1}^{i_2}  u_1(t;i_1,i_2,i_3) \mathbb{P}(\tilde{\beta}_1^C=\beta_1^{C,i_1},\tilde{\ell}_1=l_1^{i_2},\tilde{\bar{\lambda}}=\bar{\lambda}^{i_3}) \right) k \beta_1^{C,k_1}.$$

In particular, $u_k(t;k_1,k_2,k_3)$ depends only on $k_1,k_3$ via $G_k(t;k_1)$ and $\bar{\lambda}_{k_3}$. The overall loss rate in the one-level of interaction approximation is
$$
D_{1\text{approx},t}^N \approx D_{1\text{approx},t} = 1-\sum_{k_1,k_2,k_3} u_0(t;k_1,k_2,k_3) \mathbb{P}(\tilde{\beta}_1^C=\beta_1^{C,k_1},\tilde{\ell}_1=l_1^{k_2},\tilde{\bar{\lambda}}=\bar{\lambda}^{k_3}).
$$

The loss rate for type $(k_1,k_2,k_3)$ where $k_1=1,2,\dots,6$, $k_2=1,2,3$ and $k_3=1,2$ in the one-level of interaction approximation are actually falling into 6 distinct categories indexed by $k_1$, the choice of $\tilde{\beta}_1^C$.
\[
D_{1\text{approx},t}^{N}(k_1,k_2,k_3) \approx D_{1\text{approx},t}(k_1,k_2,k_3) = 1- u_0(t;k_1,k_2,k_3).
\]

The mean impact on name $n$, from system wide defaults up to time t, associated to type $(k_1,k_2,k_3)$ as described in Table \ref{T:jointrank3},  turns out to be characterized by the first index $k_1$
\[
Q_{1\text{approx},t}^{N,n}(k_1,k_2,k_3) \approx Q_{1\text{approx},t}(k_1) = \beta_1^{C,k_1} L_{1\text{approx},t},
\]
for any $k_2=1,2,3,4$ and $k_3=1,2$ with
\begin{align}
	L_{1\text{approx},t} = & \sum_{k_2} l_1^{k_2} \mathbb{P}(\tilde{\ell}_1=l_1^{k_2})\nonumber\\
	& - \sum_{k_1,k_2,k_3} l_1^{k_2}  u_0(t;k_1,k_2,k_3) \mathbb{P}(\tilde{\beta}_1^C=\beta_1^{C,k_1},\tilde{\ell}_1=l_1^{k_2},\tilde{\bar{\lambda}}=\bar{\lambda}^{k_3}).\nonumber
\end{align}

As with the previous examples, we truncate at the level $K=20$, and choose the time endpoint to be $T=1$. We do the numerical iteration with time step being 0.01. We run 50,000 Monte Carlo trials and plot overall limiting loss rate $D_{1\text{approx},t}$ and the limiting loss rate for different types $D_{1\text{approx},t}^{k_1}$, $k_1=1,2,\dots,6$ in Figure \ref{F:LimitingLossCorePeriphery1rank_2lambdabar}. We also plot the mean of the loss rate over time for the whole pool and for individual types  in Figure \ref{F:LimitingLossOverTimeCorePeriphery1rank_2lambdabar}. We observe due to the smaller mean-reverting value, the names from the core component of the network are less likely to default than those in the periphery part of the network. This essentially confirms and quantifies what we expect to happen in this case. At this point it is indicative to compare Figure \ref{F:LimitingLossCorePeriphery1rank_2lambdabar} with Figure  \ref{F:LimitingLossCorePeriphery1rank}, as well as Figure \ref{F:LimitingLossOverTimeCorePeriphery1rank_2lambdabar} with Figure \ref{F:LimitingLossOverTimeCorePeriphery1rank}.
\begin{figure}[h!]
	\centering
	\includegraphics[scale=0.6]{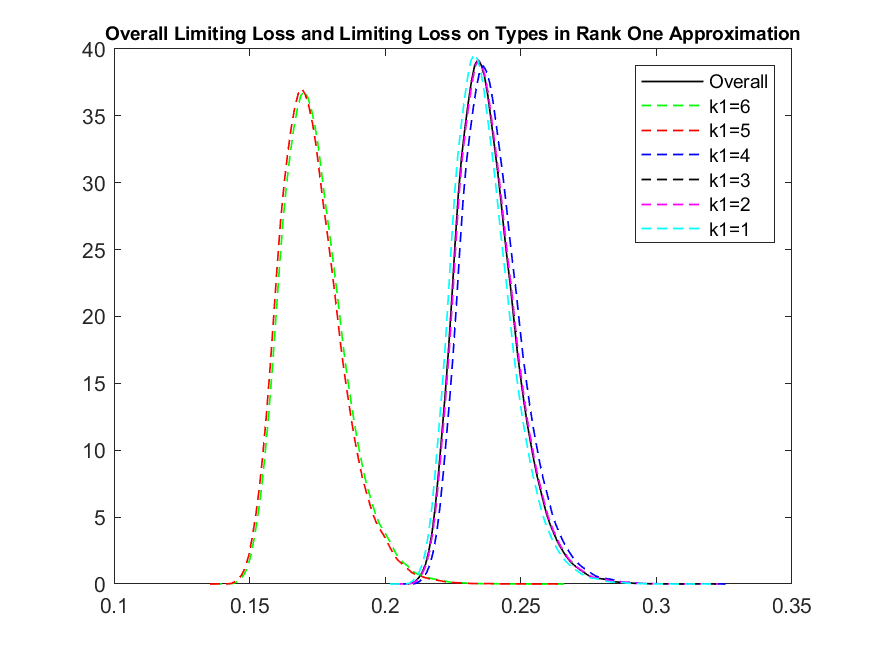}
	\small \caption{Density for overall limiting loss $D_{1\text{approx},T}$ and limiting loss for types $D_{1\text{approx},T}(k_1)$ at $T=1$.}\label{F:LimitingLossCorePeriphery1rank_2lambdabar}
\end{figure}

\begin{figure}[h!]
	\centering
	\includegraphics[scale=0.6]{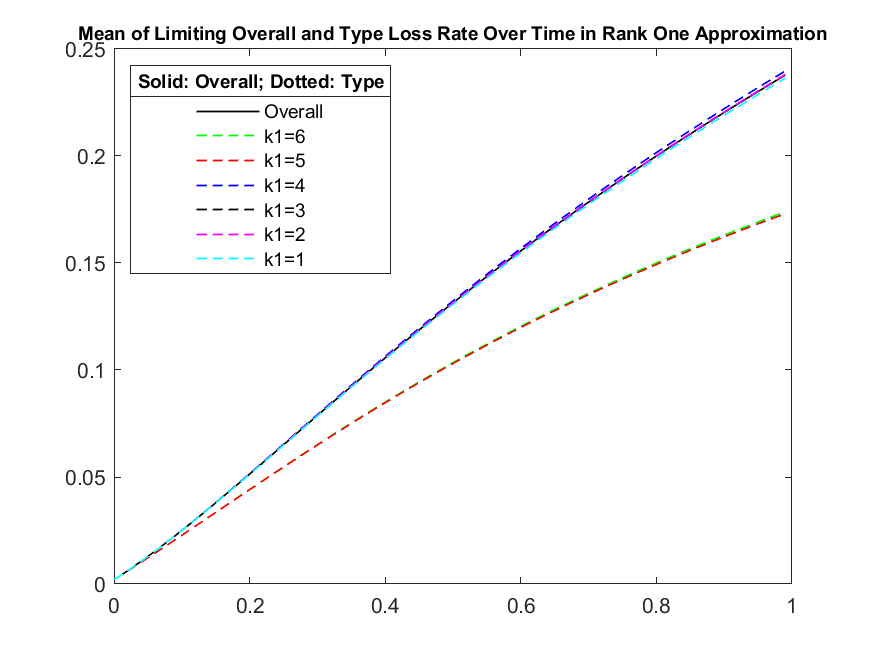}
	\small \caption{Empirical mean of overall limiting loss $D_{1\text{approx},T}$ and empirical mean of limiting loss for types $D_{1\text{approx},T}(k_1)$  up to time $T=1$}\label{F:LimitingLossOverTimeCorePeriphery1rank_2lambdabar}
\end{figure}

In Figure \ref{F:MeanImpactCorePeri1rankApprox_2lambdabar}, we plot the mean impact on a name from system wide defaults up to time $t$. As we discussed before, there are totally 6 different categories indexed by $k_1$ the choice of $\tilde{\beta}_1^C$.
\begin{figure}[h!]
	\centering
	\includegraphics[scale=0.6]{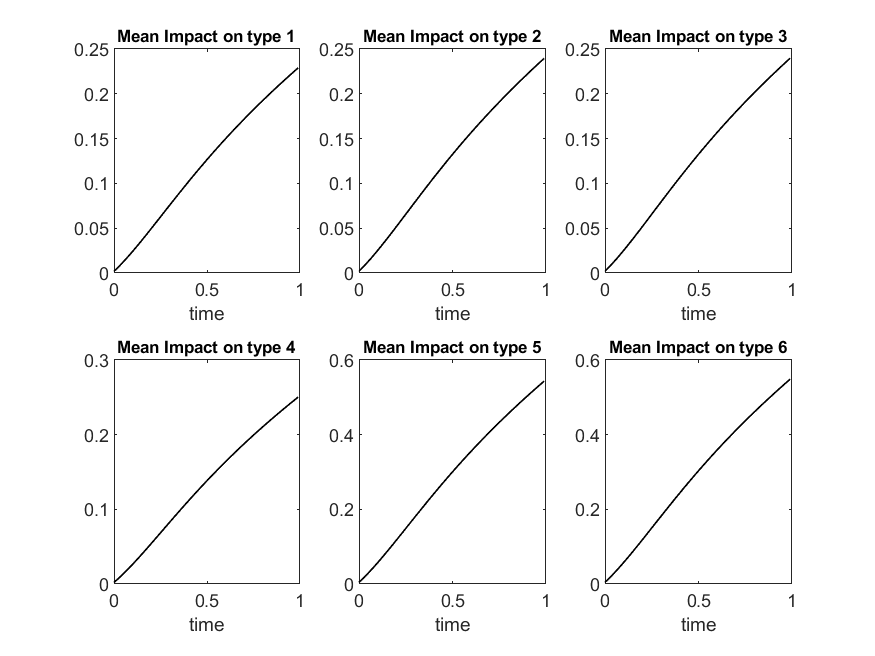}
	\small \caption{Mean impact on names of different types from system wide default by time t for the Core-Periphery case approximated by the first rank.}\label{F:MeanImpactCorePeri1rankApprox_2lambdabar}
\end{figure}

\section{Tightness and Characterization of the limit}\label{S:LimitCharacterization}
Let us now discuss relative compactness of the family $\{\mu^{N}\}_{N\in\mathbb{N}}$ and characterize its limit as $N\rightarrow\infty$.

\begin{lemma}\label{L:MuMeasureTight} The family $\{\mu^N\}_{N\in \mathbb{N}}$ is relatively compact as a $D_{E}[0,\infty)-$valued random variable.\end{lemma}
\begin{proof}
Due to Lemma \ref{lambda} proven in Appendix \ref{A:Appendix}, the proof of the lemma is as in Section 6 of \cite{GSS}. Hence, the details are omitted.
\end{proof}

Next, we want to use the martingale problem to identify the limit of $\mu^N$'s as $N$ grows. Let $\mathcal{S}$ be the collection of elements $\Phi$ in $B(\mathbb{R}\times\mathscr{P}(\hat{\mathcal{P}}))$ of the form
	$$\Phi(x,\mu)=\varphi_1(x)\varphi_2(\langle f_1,\mu \rangle_E,\ \langle f_2,\mu \rangle_E, \ldots, \langle f_M,\mu \rangle_E) $$
	for some $M \in \mathbb{N}$, some $\varphi_1 \in C^\infty(\mathbb{R})$, $\varphi_2 \in C^\infty(\mathbb{R}^M)$ and some $\{f_m\}_{m=1}^M$ in $C^\infty(\hat{\mathcal{P}})$. Then $\mathcal{S}$ separates the probability measure space $\mathcal{P}(\hat{\mathcal{P}})$. Then it is enough to consider the martingale convergence problem on $\mathcal{S}$.

Let's fix $f\in C^\infty(\hat{\mathcal{P}})$ and understand what happens to $\la f,\mu^N\ra_E$ when one of the firms defaults.  Suppose that the $n$-th firm defaults at time $t$ and that none of the other names defaults at time $t$ (defaults occur simultaneously
with probability zero).  We have that
\begin{align*} \la f,\mu^N_t\ra_E &= \frac{1}{N} \sum_{\substack{1\leqslant n^\prime \leqslant N \nonumber\\ n^\prime \neq n}} f\big( p^{N,n^\prime},\ \lambda_{t-}^{N,n^\prime} +  \frac{1}{N} \sum_{j=1}^r \xi_j^2 \ u_{n^\prime, j} \ l_{n,j} \big)  \ M_{t}^{N,n^\prime},\nonumber
	\end{align*}
	where we used the fact that the jump size in $\lambda^{N,n^\prime}$ at time t when there is a default in the $n-$th firm is $\frac{1}{N}\sum_{j=1}^r \xi_j^2 \ u_{n^\prime, j}  \ l_{n,j}$. In addition, noting that $M_t^{N,n}=0$ (since $n$-th firm defaults at time t means $\int_{0}^t \lambda_s^{N,n} ds =  \mathfrak{e}_n$), gives
	\begin{align}
	\langle f,\mu^{N}_{t-} \rangle_E
	&=\frac{1}{N} \sum_{\substack{1\leqslant n^\prime \leqslant N \\ n^\prime \neq n}} f(p^{N,n^\prime},\ \lambda_{t-}^{N,n^\prime})  M_{t}^{N,n^\prime} + \frac{1}{N} f(p^{N,n},\ \lambda_{t-}^{N,n}).\nonumber
	\end{align}
	
Therefore, we have that	
	$$\langle f,\mu^{N}_{t} \rangle_E - \langle f,\mu^{N}_{t-} \rangle_E= \mathcal{J}_{N,n}^f(t)$$
	where
	\begin{eqnarray}
	\mathcal{J}_{N,n}^f(t)&=&\frac{1}{N} \sum_{1\leqslant n^\prime \leqslant N } \Big( f\big( p^{N,n^\prime},\ \lambda_{t-}^{N,n^\prime} + \frac{1}{N} \sum_{j=1}^r \xi_j^2 \ u_{n^\prime, j}  \ell_{n,j} \big)- f(p^{N,n^\prime},\ \lambda_{t-}^{N,n^\prime}) \Big)  \ M_{t}^{N,n^\prime}\nonumber \\
	&-& \frac{1}{N} f(p^{N,n},\ \lambda_{t-}^{N,n}). \nonumber
	\end{eqnarray}

For $f\in C^2(\mathbb{R})$ define the operator
$$\mathcal{G}(f)(x)=b_0(x)\frac{\partial f}{\partial x} (x) +\frac{1}{2} \sigma_0^2(x) \frac{\partial^2 f}{\partial x^2}(x).$$

In addition, define the operators
	\begin{align*}
		&(\mathcal{A}\Phi)(x,\mu) = \mathcal{G}(\varphi_1)(x) \   \varphi_2\big(\langle f_1,\mu \rangle_E,\ \langle f_2,\mu \rangle_E, \ldots, \langle f_M,\mu \rangle_E \big)\\
		&\quad + \sum_{m=1}^M \varphi_1(x) \frac{\partial \varphi_2}{\partial x_m}\big(\langle f_1,\mu \rangle_E,\ \langle f_2,\mu \rangle_E, \ldots, \langle f_M,\mu \rangle_E \big)\\
		& \qquad \times \big[ \big\langle \mathcal{L}_1 f_m, \mu  \big\rangle_E + \big\langle \mathcal{L}_2^x f_m, \mu  \big\rangle_E + \big\langle \iota \nu, \mu  \big\rangle_E \cdot \big\langle \mathcal{L}_4^x f_m, \mu  \big\rangle_E   \big]\\
		& \quad + \frac{1}{2}\  \sum_{m=1}^M \frac{\partial \varphi_1}{\partial x}(x)\ \frac{\partial \varphi_2}{\partial x_m}\big(\langle f_1,\mu \rangle_E,\ \langle f_2,\mu \rangle_E, \ldots, \langle f_M,\mu \rangle_E \big)  \langle \sigma_0(x) \mathcal{L}_3^{x} f_m,\mu \rangle_E\\
		& \quad +\frac{1}{2} \sum_{p,q=1}^M \varphi_1(x) \frac{\partial^2 \varphi}{\partial x_p \partial x_q} \big(\langle f_1,\mu \rangle_E,\ \langle f_2,\mu \rangle_E, \ldots, \langle f_M,\mu \rangle_E \big)\\
		&\qquad \times \big( \langle \mathcal{L}_3 f_p,\mu \rangle_E \langle \mathcal{L}_3 f_q,\mu \rangle_E  \big).\\
	\end{align*}
and	
	\begin{align*}
		(\mathcal{B}\Phi)(x,\mu) &= \sigma_0(x)\frac{\partial \varphi_1}{\partial x} (x) \varphi_2\big(\langle f_1,\mu \rangle_E,\ \langle f_2,\mu \rangle_E, \ldots, \langle f_M,\mu \rangle_E \big)\\
		& + \varphi_1(x) \sum_{m=1}^M \frac{\partial \varphi_2}{\partial x_m} \big(\langle f_1,\mu \rangle_E,\ \langle f_2,\mu \rangle_E, \ldots, \langle f_M,\mu \rangle_E \big) \langle \mathcal{L}_3^{x} f_m,\mu \rangle_E.\\
	\end{align*}

Then, Theorem \ref{mm} characterizes the possible limit points.
\begin{thm}\label{mm}
We have that
\begin{align}
\lim_{N \to \infty}&\mathbb{E} \left[ \left\{ \Phi(X_{t_2},\mu_{t_2}^N)- \Phi(X_{t_1},\mu_{t_1}^N)- \int_{t_1}^{t_2} (\mathcal{A}\Phi)(X_s,\mu_s^N)ds- \right.\right.\nonumber\\
&\quad\left.\left. \int_{t_1}^{t_2} (\mathcal{B}\Phi)(X_s,\mu_s^N) dV_s \right\} \prod_{j=1}^J \psi_j(x_{r_j},\mu_{r_j}^N) \right]=0\nonumber
\end{align}
	for any $\Phi \in \mathcal{S}$ and $0\leq r_1 \leq r_2 \leq \cdots \leq r_J=t_1 < t_2 <T$ and $\{ \psi_j \}_{j=1}^J \in B(\mathbb{R}\times E)$.	
\end{thm}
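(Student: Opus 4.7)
The plan is the standard martingale-problem strategy: apply It\^o's formula for semi-martingales to $\Phi(X_t,\mu^{N}_t) = \varphi_1(X_t)\varphi_2(\langle f_1,\mu^N_t\rangle_E,\ldots,\langle f_M,\mu^N_t\rangle_E)$, using the semi-martingale decomposition (\ref{itomut}) for each $\langle f_m,\mu^N\rangle_E$ together with the SDE $dX_t = b_0(X_t)dt+\sigma_0(X_t)dV_t$. The resulting expansion of $\Phi(X_{t_2},\mu^N_{t_2})-\Phi(X_{t_1},\mu^N_{t_1})$ splits into three pieces: (a) a finite-variation drift which I claim equals $\int_{t_1}^{t_2}(\mathcal{A}\Phi)(X_s,\mu^N_s)\,ds$ plus a remainder vanishing in $L^{1}$; (b) a $dV_s$-stochastic integral which I claim equals $\int_{t_1}^{t_2}(\mathcal{B}\Phi)(X_s,\mu^N_s)\,dV_s$ exactly; and (c) additional martingales driven by the idiosyncratic Brownian motions $\{W^n\}$ and by the compensated default-counting processes, all having vanishing $L^2$-norm as $N\to\infty$.

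For the drift, each contribution matches a term in $\mathcal{A}\Phi$. The $\mathcal{G}(\varphi_1)$ piece comes from the generator of $X$; the $\langle\mathcal{L}_1 f_m,\mu^N\rangle_E$ (minus its $-\lambda f$ portion) and $\langle\mathcal{L}_2^{X_s} f_m,\mu^N\rangle_E$ pieces come from the drift and It\^o terms in $d\lambda^{N,n}$; the bracket between $X$ and $\langle f_m,\mu^N\rangle_E$ through the common driver $V$ yields the $\langle\sigma_0\mathcal{L}_3^x f_m,\mu^N\rangle_E$ cross term; and the mutual bracket between $\langle f_p,\mu^N\rangle_E$ and $\langle f_q,\mu^N\rangle_E$ through $V$ yields $\langle\mathcal{L}_3^x f_p,\mu^N\rangle_E\langle\mathcal{L}_3^x f_q,\mu^N\rangle_E$. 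Both the $-\lambda f$ portion of $\mathcal{L}_1$ and the contagion term $\langle\iota\nu,\mu\rangle_E\cdot\langle\mathcal{L}_4 f,\mu\rangle_E$ arise from compensating $\sum_n\int_0^{\cdot}\mathcal{J}^{f}_{N,n}(s)\,d[1-M^{N,n}_s]$: writing $d[1-M^{N,n}_s] = \lambda^{N,n}_s M^{N,n}_s\,ds + d\tilde N^{n}_s$, the mass-loss piece $-\frac{1}{N}f(\hat p^{N,n}_{s-})$ of $\mathcal{J}^{f}_{N,n}$ integrates to $-\int_0^{\cdot}\langle\iota f,\mu^N_s\rangle_E\,ds$, while a first-order Taylor expansion of the contagion kick $\frac{1}{N}\sum_j\xi^2_j u_{n',j}\ell_{n,j}$ produces
\[
\frac{1}{N^{2}}\sum_{n,n'}\sum_j \lambda^{N,n}_s M^{N,n}_s \xi^2_j u_{n',j}\ell_{n,j} \frac{\partial f}{\partial\lambda}(\hat p^{n'}_s) M^{N,n'}_s = \sum_j \langle\iota\nu_j,\mu^N_s\rangle_E\, \langle\beta^{C}_{\cdot,j}\tfrac{\partial f}{\partial\lambda},\mu^N_s\rangle_E,
\]
which equals $\langle\iota\nu,\mu^N_s\rangle_E\cdot\langle\mathcal{L}_4 f,\mu^N_s\rangle_E$ by the factorization $\xi^2_j u_{n',j}\ell_{n,j} = \beta^{C}_{n',j}\nu_j(\hat p^n)$.

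For the $dV_s$-piece, collecting $\sigma_0(X_s)\frac{\partial\varphi_1}{\partial x}\varphi_2$ from $dX$ together with $\varphi_1\frac{\partial\varphi_2}{\partial x_m}\langle\mathcal{L}_3^{X_s}f_m,\mu^N_s\rangle_E$ from each $d\langle f_m,\mu^N\rangle$ reproduces $\mathcal{B}\Phi$ verbatim. The leftover martingales are $(i)$ the idiosyncratic stochastic integrals $\frac{1}{N}\int_0^{\cdot}\sum_n\sigma_n(\lambda^{N,n}_s)^\rho\frac{\partial f_m}{\partial\lambda}(\hat p^{N,n}_s)M^{N,n}_s\,dW^n_s$, orthogonal by independence of $\{W^n\}$ and of total quadratic variation $O(1/N)$ by Lemma \ref{lambda} and Assumption \ref{bdd}, and $(ii)$ the compensated jump martingale $\sum_n\int_0^{\cdot}\mathcal{J}^{f}_{N,n}(s)\,d\tilde N^{n}_s$, whose jumps are uniformly $O(1/N)$ by the bound $|\mathcal{J}^{f}_{N,n}|\leq(C_{\ref{bdd}}\|\partial_\lambda f\|+\|f\|)/N$ already established in Section \ref{S:Tightness}, again giving quadratic variation $O(1/N)$. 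The Taylor remainder from the contagion expansion is $O(1/N^2)$ per jump, hence $O(1/N)$ in integrated expectation by Lemma \ref{lambda}. Multiplying by the uniformly bounded test function $\prod_{j=1}^J\psi_j(X_{r_j},\mu^N_{r_j})$ and applying Cauchy--Schwarz, each error term contributes $O(N^{-1/2})$ to the expectation and the claimed limit follows.

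The principal obstacle is the exact matching of the contagion jump compensator to the form $\langle\iota\nu,\mu\rangle_E\cdot\langle\mathcal{L}_4 f,\mu\rangle_E$. One must carefully separate the coupled double sum $\sum_{n,n'}$ above using the rank-one structure of each SVD component, so that both factors become integrals against the same empirical measure $\mu^N_s$, and simultaneously control the second-order Taylor remainder uniformly in $N$ via the moment estimate of Lemma \ref{lambda}. The remaining terms (drift decomposition, common-driver brackets, and vanishing of the $W^n$- and compensated-jump martingales) are then routine once the jump accounting has been settled.
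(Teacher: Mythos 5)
Your plan follows the same route as the paper's proof: It\^o's formula for $\Phi(X_t,\mu^N_t)$ built on the semimartingale decomposition of $\langle f_m,\mu^N\rangle_E$, exact identification of the $\mathcal{B}\Phi$ stochastic integral, compensation of the default indicators via $d[1-M^{N,n}_s]=\lambda^{N,n}_sM^{N,n}_s\,ds+d\mathcal{M}^{N,n}_s$ with the same first-order linearization of the contagion kick yielding $\langle\iota\nu,\mu^N\rangle_E\cdot\langle\mathcal{L}_4 f,\mu^N\rangle_E$ through the rank-one SVD factorization, and $O(1/N)$ control of the idiosyncratic and compensated-jump martingales and of the Taylor remainders. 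The argument is correct and matches the paper's decomposition essentially term by term.
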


\begin{proof}
First, we notice that,
\[
\mathcal{M}_t^{N,n} \ = \ 1-M_t^{N,n} - \int_{0}^t \lambda_s^{N,n} M_s^{N,n} ds
\]
is a martingale. This means that we can write
	$$d(1-M_t^{N,n}) \ = \ d\mathcal{M}_t^{N,n} + \lambda_t^{N,n} M_t^{N,n} dt$$

By It\^{o}'s formula we obtain
	\begin{align}
	d \langle f,\mu^{N}_{t} \rangle_E &= \frac{1}{N} \sum_{n=1}^N \Big[ \mathcal{L}_1 f(\hat{p}_t^{N,n}) +\lambda^{\NN}_{t}f(\hat{p}_t^{N,n}) \Big]  \ M_{t}^{N,n} \ dt+\frac{1}{N} \sum_{n=1}^N \Big[ \mathcal{L}_2^{X_t} f(\hat{p}_t^{N,n}) \Big]  \ M_{t}^{N,n} \ dt\nonumber\\
	&\quad+\frac{1}{N} \sum_{n=1}^N \ \sigma_{n} \ (\lambda_t^{N,n})^{\rho} \ \frac{\partial f}{\partial \lambda} (\hat{p}_t^{N,n})  \ M_{t}^{N,n} \ d W_t^{n}+\frac{1}{N} \sum_{n=1}^N  \  \mathcal{L}_3^{X_t} f(\hat{p}_t^{N,n}) \ M_{t}^{N,n} \ dV_t\nonumber\\
	&\quad +\sum_{n=1}^N \mathcal{J}_{N,n}^f(t) \  d \ [1-M_t^{N,n}]\nonumber\\
&=  \left[\langle \mathcal{L}_1 f,\mu^{N}_{t} \rangle_E+  \big\langle \iota f, \mu^{N}_t  \big\rangle \right] dt + \langle \mathcal{L}_2^{X_t} f,\mu^{N}_{t} \rangle_E\ dt\nonumber\\
	&\quad +\frac{1}{N} \sum_{n=1}^N \ \sigma_{n} \ (\lambda_t^{N,n})^{\rho} \ \frac{\partial f}{\partial \lambda} (\hat{p}_t^{N,n})  \ M_{t}^{N,n} \ d W_t^{n}\nonumber\\
	&\quad + \langle \mathcal{L}_3^{X_t} f,\mu^{N}_{t} \rangle_E \ dV_t+\sum_{n=1}^N \mathcal{J}_{N,n}^f(t) \  d \ [1-M_t^{N,n}].\nonumber\\\label{itomut}
	\end{align}
	
	Again, by It\^{o}'s formula for $\Phi(X_t, \mu_t^N)$ we subsequently obtain that	
\begin{align*}
		\Phi(X_t, \mu_t^N) &= \Phi(X_0, \mu_0^N) + \int_{0}^{t} \mathcal{G}(\varphi_1)(X_s) \   \varphi_2\big(\langle f_1,\mu_s^{N} \rangle_E,\ \langle f_2,\mu_s^{N} \rangle_E, \ldots, \langle f_M,\mu_s^{N} \rangle_E \big)\ ds\\
		& \quad + \int_{0}^t \varphi_1(X_s) \sum_{m=1}^M \frac{\partial \varphi_2}{\partial x_m} \big(\langle f_1,\mu_s^{N} \rangle_E,\ \langle f_2,\mu_s^{N} \rangle_E, \ldots, \langle f_M,\mu_s^{N} \rangle_E \big)\\
		& \qquad  \qquad \times \Big\{  \langle \mathcal{L}_1 f_m,\mu^{N}_{s} \rangle_E\ + \big\langle \iota f_m, \mu^{N}_t  \big\rangle + \langle \mathcal{L}_2^{X_s} f_m,\mu^{N}_{s} \rangle_E\  \Big\}\ ds\\
		& \quad + \frac{1}{2}\ \int_{0}^t \sum_{m=1}^M \frac{\partial \varphi_1}{\partial x}(X_s)\ \frac{\partial \varphi_2}{\partial x_m}\big(\langle f_1,\mu_s^{N} \rangle_E,\ \langle f_2,\mu_s^{N} \rangle_E, \ldots, \langle f_M,\mu_s^{N} \rangle_E \big)\\
		& \qquad \qquad \times \sigma_0(X_s) \langle \mathcal{L}_3^{X_s} f_m,\mu^{N}_{s} \rangle_E\ ds\\
		& \quad + \int_{0}^t \varphi_1(X_s) \sum_{n=1}^N \lambda_s^{N,n} \ \Big\{ \varphi_2 \big( \langle f_1,\mu_s^{N} \rangle_E \ +\mathcal{J}_{N,n}^{f_1}(s),\langle f_2,\mu_s^{N} \rangle_E \ +\mathcal{J}_{N,n}^{f_2}(s),\nonumber\\
			& \qquad \qquad \ldots, \langle f_M,\mu_s^{N} \rangle_E \ +\mathcal{J}_{N,n}^{f_M}(s) \big) - \varphi_2 \big(\langle f_1,\mu_s^{N} \rangle_E,\ \langle f_2,\mu_s^{N} \rangle_E, \ldots, \langle f_M,\mu_s^{N} \rangle_E \big) \Big\}\ M_s^{N,n}\  ds\\
		& \quad + \int_{0}^t  \sigma_0(X_s)\frac{\partial \varphi_1}{\partial x} (X_s) \varphi_2\big(\langle f_1,\mu_s^{N} \rangle_E,\ \langle f_2,\mu_s^{N} \rangle_E, \ldots, \langle f_M,\mu_s^{N} \rangle_E \big) \ dV_s\\
		& \quad + \int_{0}^t \varphi_1(X_s) \sum_{m=1}^M \frac{\partial \varphi_2}{\partial x_m} \big(\langle f_1,\mu_s^{N} \rangle_E,\ \langle f_2,\mu_s^{N} \rangle_E, \ldots, \langle f_M,\mu_s^{N} \rangle_E \big) \langle \mathcal{L}_3^{X_s} f_m,\mu^{N}_{s} \rangle_E\ dV_s\\
		& \quad + \frac{1}{2N^2} \sum_{p,q=1}^M \int_{0}^t \varphi_1(X_s)\ \frac{\partial^2 \varphi}{\partial x_p \partial x_q} \big(\langle f_1,\mu_s^{N} \rangle_E,\ \langle f_2,\mu_s^{N} \rangle_E, \ldots, \langle f_M,\mu_s^{N} \rangle_E \big)\\
		&\qquad \qquad \times \Big( \sum_{n=1}^N \sigma_n^2 \ (\lambda_s^{N,n})^{2\rho} \ \frac{\partial f_p}{\partial \lambda} (\hat{p}_s^{N,n}) \  \frac{\partial f_q}{\partial \lambda} (\hat{p}_s^{N,n})  \ M_s^{N,n} \Big)\ ds
\end{align*}	
	\begin{align*}
		& \quad +\frac{1}{2} \sum_{p,q=1}^M  \int_{0}^t \varphi_1(X_s)\ \frac{\partial^2 \varphi}{\partial x_p \partial x_q} \big(\langle f_1,\mu_s^{N} \rangle_E,\ \langle f_2,\mu_s^{N} \rangle_E, \ldots, \langle f_M,\mu_s^{N} \rangle_E \big)\\
		&\qquad \qquad \times \big( \langle \mathcal{L}_3^{X_s} f_p,\mu^{N}_{s} \rangle_E \langle \mathcal{L}_3^{X_s} f_q,\mu^{N}_{s} \rangle_E  \big) \ ds\\
		&\quad + \frac{1}{N} \sum_{n=1}^N \sum_{m=1}^M \int_{0}^t \varphi_1(X_s) \frac{\partial \varphi_2}{\partial x_m} \big(\langle f_1,\mu_s^{N} \rangle_E,\ \langle f_2,\mu_s^{N} \rangle_E, \ldots, \langle f_M,\mu_s^{N} \rangle_E \big)\\
		& \qquad \qquad \times \big(  \ \sigma_{n} \ (\lambda_s^{N,n})^{\rho} \ \frac{\partial f_m}{\partial \lambda} (\hat{p}_s^{N,n})  \ M_{s}^{N,n}\big) \ d W_s^{n}\\
		&\quad + \quad \int_{0}^t \varphi_1(X_s) \sum_{n=1}^N  \ \Big\{ \varphi_2 \big( \langle f_1,\mu_s^{N} \rangle_E \ +\mathcal{J}_{N,n}^{f_1}(s),\langle f_2,\mu_s^{N} \rangle_E \ +\mathcal{J}_{N,n}^{f_2}(s),\\
		& \qquad \qquad \ldots, \langle f_M,\mu_s^{N} \rangle_E \ +\mathcal{J}_{N,n}^{f_M}(s) \big) - \varphi_2 \big(\langle f_1,\mu_s^{N} \rangle_E,\ \langle f_2,\mu_s^{N} \rangle_E, \ldots, \langle f_M,\mu_s^{N} \rangle_E \big) \Big\} \  d \mathcal{M}_s^{N,n}\\
		&\qquad= \sum_{i=1}^{11}J_i^N,
	\end{align*}
where, for $i=1,\cdots, 11$, $J_i^N$ represents the $i^{\textrm{th}}$ term in the right hand side of the last display.  Notice that,
	
$$J_6^N+J_7^N=\int_0^t (\mathcal{B}\Phi) (X_s,\mu_s^N) dV_s, $$
	and		
$$J_1^N+J_2^N+J_3^N+J_4^N+J_9^N=\Phi(X_0, \mu_0^N) + \int_0^t (\mathcal{A} \Phi) (X_s,\mu_s^N) - \tilde{A}_s^N ds$$

where the $\tilde{A}_t^N$ is defined as
\begin{align*}
	\tilde{A}_t^N = &\sum_{m=1}^M \varphi_1(X_t) \frac{\partial \varphi_2}{\partial x_m}\big(\langle f_1,\mu_t^{N} \rangle_E,\ \langle f_2,\mu_t^{N} \rangle_E, \ldots, \langle f_M,\mu_t^{N} \rangle_E \big)\\
	& \qquad \times \frac{1}{N} \sum_{n=1}^N \lambda_t^{N,n} \ \tilde{\mathcal{J}}_{N,n}^{f_m} (t) M_t^{N,n}.
\end{align*}

and $\tilde{\mathcal{J}}_{N,n}^f(t)$ is defined as
$$\tilde{\mathcal{J}}_{N,n}^f(t)=\frac{1}{N} \sum_{1\leqslant n^\prime \leqslant N } \Big( \sum_{j=1}^r \xi_j^2 \ u_{n^\prime, j} \ell_{n,j} \frac{\partial f}{\partial \lambda} (\hat{p}_t^{N,n^\prime}) \big)  \ M_{t}^{N,n^\prime} - f(\hat{p}_t^{N,n}). $$

Notice that we have
$$\sum_{j=1}^r \xi_j^2 \ u_{n^\prime, j} \ell_{n,j} = \beta_{N,n^\prime}^C \cdot l^n,$$
where $\beta_{N,n^\prime}^C=(\xi_1^2 u_{n^\prime,1}, \xi_2^2 u_{n^\prime,2},\ldots, \xi_r^2 u_{n^\prime,r})$ and $l^n=(l_{n,1},l_{n,2},\ldots,l_{n,r})$.

Recalling that
$$\mathcal{L}_4 f = \beta^C \frac{\partial f}{\partial \lambda} (\hat{p}).$$

where $\beta^C=(\xi_1^2 u_1, \xi_2^2 u_2, \ldots \xi_r^2 u_r)$, we get that

$$\tilde{\mathcal{J}}_{N,n}^f(t) = l^n \cdot \big\langle \mathcal{L}_4 f, \mu_t^{N}  \big\rangle_E \ - \ f(\hat{p}_t^{N,n}). $$

Therefore we obtain that
\begin{align*}
	\tilde{A}_t^N = &\sum_{m=1}^M \varphi_1(X_t) \frac{\partial \varphi_2}{\partial x_m}\big(\langle f_1,\mu_t^{N} \rangle_E,\ \langle f_2,\mu_t^{N} \rangle_E, \ldots, \langle f_M,\mu_t^{N} \rangle_E \big)\\
	& \qquad \times \frac{1}{N} \sum_{n=1}^N \lambda_t^{N,n} \Big[ l^n \cdot \big\langle \mathcal{L}_4 f_m, \mu_t^{N}  \big\rangle_E \ - \ f_m(\hat{p}_t^{N,n}) \Big] M_t^{N,n}\\
	&=\sum_{m=1}^M \varphi_1(X_t) \frac{\partial \varphi_2}{\partial x_m}\big(\langle f_1,\mu_t^{N} \rangle_E,\ \langle f_2,\mu_t^{N} \rangle_E, \ldots, \langle f_M,\mu_t^{N} \rangle_E \big)\\
	& \qquad \times \big[ \frac{1}{N} \sum_{n=1}^N \lambda_t^{N,n} l^n \cdot  \big\langle \mathcal{L}_4 f_m, \mu_t^{N}  \big\rangle_E M_t^{N,n} \ - \ \frac{1}{N} \sum_{n=1}^N \lambda_t^{N,n} f_m(\hat{p}_t^{N,n}) M_t^{N,n} \big]\\
	&=\sum_{m=1}^M \varphi_1(X_t) \frac{\partial \varphi_2}{\partial x_m}\big(\langle f_1,\mu_t^{N} \rangle_E,\ \langle f_2,\mu_t^{N} \rangle_E, \ldots, \langle f_M,\mu_t^{N} \rangle_E \big)\\
	& \qquad  \times \big[ \big\langle \iota \nu, \mu_t^{N}  \big\rangle_E \cdot \big\langle \mathcal{L}_4 f_m, \mu_t^{N}  \big\rangle_E \ - \ \big\langle \iota f_m, \mu_t^{N}  \big\rangle_E \big].
\end{align*}

Now we prove that  $\left|J_5^N-\int_0^t \tilde{A}_s^N ds \right| \to 0$ as $N \to \infty$. Denote the operator
$$\mathcal{L}_5 f(\hat{p})=\sigma \lambda^{\rho} \frac{\partial f}{\partial \lambda}$$

Denote the jump term $J_{5}^N$ in the expression $\Phi(X_t,\mu_t^{N})$ as
$\int_{0}^t A_s^N ds$.
Now we look at the limit of this term as $N \to \infty$.

Hence there exists a constant $K$ which depends on the uppper bound of the coefficients such that
$$\Big| \mathcal{J}_{N,n}^f(t) \ -\ \frac{1}{N} \tilde{\mathcal{J}}_{N,n}^f(t) \Big| \leqslant \frac{K^2}{N^2} \lVert \frac{\partial^2 f}{\partial \lambda^2}\rVert.$$

Hence, we get that
	$$\lim_{N \to \infty} \mathbb{E} \big[ \int_{0}^{t} |A_s^N-\tilde{A}_s^N| \ ds \big] = 0.$$

Let us next show that $J_8^N \to 0$. The term $J_{8}^N$ above can be written as,
\begin{align*}
	& \frac{1}{2N} \sum_{p,q=1}^M \int_{0}^t \varphi_1(X_s) \frac{\partial^2 \varphi}{\partial x_p \partial x_q} \big(\langle f_1,\mu_s^{N} \rangle_E,\ \langle f_2,\mu_s^{N} \rangle_E, \ldots, \langle f_M,\mu_s^{N} \rangle_E \big)\\
	& \qquad \times \Big\{ \frac{1}{N} \sum_{n=1}^N (\mathcal{L}_5 f_p) \ (\hat{p}_s^{N,n}) (\mathcal{L}_5 f_q) \ (\hat{p}_s^{N,n})  \ M_s^{N,n} \Big\} ds\\
	&= \frac{1}{2N} \sum_{p,q=1}^M \int_{0}^t \varphi_1(X_s) \frac{\partial^2 \varphi}{\partial x_p \partial x_q} \big(\langle f_1,\mu_s^{N} \rangle_E,\ \langle f_2,\mu_s^{N} \rangle_E, \ldots, \langle f_M,\mu_s^{N} \rangle_E \big)\\
	& \qquad \times \big\langle (\mathcal{L}_5 f_p \ \mathcal{L}_5 f_q),\mu_s^{N} \big\rangle_E \  ds.
\end{align*}

This term goes to zero as $N$ goes to infinity. Indeed, for the given $M$ and $\{f_m\}_{m=1}^M$ and $t$ there exists a constant $C$ depending on $\max_{\{p,q=1,\ldots,M\}}\lVert \frac{\partial^2 \varphi}{\partial x_p \partial x_q} \rVert$ and $\max_{\{m=1\ldots M\}}\lVert f_m \rVert$ and the upper bound of the coefficients such that,
\begin{align*}
	&\Big|\frac{1}{2N} \sum_{p,q=1}^M \int_{0}^t \varphi_1(X_s) \frac{\partial^2 \varphi}{\partial x_p \partial x_q} \big(\langle f_1,\mu_s^{N} \rangle_E,\ \langle f_2,\mu_s^{N} \rangle_E, \ldots, \langle f_M,\mu_s^{N} \rangle_E \big)\\
	& \qquad \times \big\langle (\mathcal{L}_5 f_p \ \mathcal{L}_5 f_q),\mu_s^{N} \big\rangle_E \  ds \Big| \leqslant \frac{C}{N} \longrightarrow 0.
\end{align*}

Lastly, we treat the terms $J_{10}^N$ and $J_{11}^N$. Notice that the second to the last term $J_{10}^N$ is a Brownian martingale and the term $J_{11}^N$ is also a martingale. Denote their sum as a martingale $\mathcal{M}_t^N$. Calculations similar to the ones done above yield that
\[
\lim_{N\rightarrow\infty}\sup_{t\in[0,T]}\mathbb{E}|\mathcal{M}_t^N|^{2}=0,
\]
	and the proof of the theorem is complete.

\end{proof}

\section{Identification of the unique limit point}\label{S:Uniqueness}

The uniqueness of the solution to the limiting martingale problem implied by Theorem \ref{mm} is analogous to the  duality argument of Lemma 7.1 of \cite{GSS}  and the proof will not be repeated here.

Let us now identify this unique solution in the following two lemmas. Lemma \ref{uniqueQ} will give us the existence of a unique solution to a certain stochastic differential equation  which will then be used in identifying the unique limiting  solution in Lemma \ref{uni}.

   \begin{lem}
	Let $W^{*}$ be a reference Brownian motion and $T<\infty$. For each $\hat{p} \in \hat{\mathcal{P}}$, with $\hat{p}=(p,\lambda_{0})$, each $t \leq T$ there is a unique pair of $(Q_i(t),\lambda_t^{*}(\hat{p}),i=1, \ldots,r)$
	$$Q_i(t)=\int_{\hat{p} \in \hat{\mathcal{P}}} l_i \mathbb{E}_{\mathcal{V}_t} \left\{\lambda_t^*(\hat{p})\exp\left[-\int_{0}^t \lambda_s^*(\hat{p})ds\right]\right\}\pi(dp)\Lambda_0(d\lambda_0).$$
	$$\lambda_t^*(\hat{p})=\lambda_0+\int_{0}^t b(\lambda_s^*(\hat{p}),a)ds + \sigma\cdot(\lambda_s^*(\hat{p}))^{\rho} dW^*_s + \int_{0}^t \sum_{i=1}^r \beta_i^C Q_i(s) ds + \beta^S \int_{0}^t \lambda_s^*(\hat{p}) dX_s.$$ \label{uniqueQ}
\end{lem}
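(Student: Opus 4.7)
The plan is to set up a Banach fixed-point argument on the space $\mathcal{C}_T$ of $\mathcal{V}$-adapted, continuous $\mathbb{R}^r$-valued processes on $[0,T]$, equipped with the supremum norm. Given a candidate $Q=(Q_1,\dots,Q_r)\in\mathcal{C}_T$, the SDE specifying $\lambda_t^*(\hat{p})$ has the structure of the SDE solved in Lemma \ref{sol}, with the monotone bounded driver $\xi$ replaced by the additive, bounded, absolutely continuous drift $t\mapsto \int_0^t \sum_i\beta_i^C Q_i(s)\,ds$. A routine adaptation of Lemma \ref{sol} (the monotonicity there is used only to control the total variation of $\xi$, which is controlled here by $\|Q\|_\infty$) furnishes a unique nonnegative strong solution $\lambda_t^*(\hat{p})$, and Lemma \ref{lambda} provides uniform moment bounds for it. This defines the map
\[
\Psi[Q]_i(t) \;=\; \int_{\hat{p}\in\hat{\mathcal{P}}} l_i\,\mathbb{E}_{\mathcal{V}_t}\!\left[\lambda_t^*(\hat{p})\exp\!\left(-\int_0^t \lambda_s^*(\hat{p})\,ds\right)\right]\pi(dp)\,\Lambda_0(d\lambda).
\]

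First I would check that $\Psi$ leaves invariant a closed ball $\{\|Q\|_\infty\le R\}$ of $\mathcal{C}_T$ for some sufficiently large $R$: Assumption \ref{A:Assumption2} bounds $|l_i|$, the exponential factor lies in $[0,1]$, and $\mathbb{E}|\lambda_t^*(\hat{p})|$ is uniformly bounded (in $t\in[0,T]$ and in $\hat{p}$ after integration against $\pi\otimes\Lambda_0$) by Lemma \ref{lambda}. Joint $\mathcal{V}_t$-measurability and continuity of $\Psi[Q](t)$ follow from standard dominated-convergence and stochastic Fubini arguments.

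The core step is the stability estimate: for two inputs $Q^{(1)},Q^{(2)}$ with associated solutions $\lambda^{*,1},\lambda^{*,2}$, I would prove, for each $\hat{p}$,
\[
\mathbb{E}\sup_{s\le t}\bigl|\lambda_s^{*,1}(\hat{p})-\lambda_s^{*,2}(\hat{p})\bigr| \;\le\; C(T)\int_0^t \|Q^{(1)}-Q^{(2)}\|_{\infty,[0,s]}\,ds.
\]
To obtain this I would apply It\^o's formula to a Yamada--Watanabe smooth approximation of $|\cdot|$ for the SDE satisfied by $\lambda^{*,1}-\lambda^{*,2}$. The CEV diffusion difference $\sigma((\lambda^{*,1})^\rho-(\lambda^{*,2})^\rho)$ is absorbed via the usual Yamada--Watanabe bound (which requires $\rho\ge 1/2$, matching the standing assumption); the drift difference $b(\lambda^{*,1},a)-b(\lambda^{*,2},a)$ is controlled via the local Lipschitz property in Assumption \ref{assumptionb} together with the moment control of $\lambda^*$ from Lemma \ref{lambda}; the multiplicative term $\beta^S(\lambda^{*,1}-\lambda^{*,2})\,dX_t$ is handled by an equivalent-measure change licensed by Assumptions \ref{s0}, \ref{Xt}, and \ref{unbdd_b}, which trade off the unbounded $dX$-driven factor for $L^p$ bounds under an auxiliary measure. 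Combining these and invoking Gronwall yields the displayed bound. Passing expectations inside $\Psi$ and using $|xe^{-A}-ye^{-B}|\le |x-y|+|x|\,|A-B|$ to absorb the difference of the exponentials (the extra $\int_0^t|\lambda^{*,1}-\lambda^{*,2}|\,ds$ is reabsorbed by Gronwall) gives
\[
\bigl|\Psi[Q^{(1)}]_i(t)-\Psi[Q^{(2)}]_i(t)\bigr| \;\le\; C\,T_0\,\|Q^{(1)}-Q^{(2)}\|_{\infty,[0,T_0]}.
\]

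For $T_0$ sufficiently small this is a strict contraction; Banach's theorem produces a unique fixed point $Q$ on $[0,T_0]$ and, pairing it with the associated $\lambda^*$, proves the lemma on short horizons. The construction is then iterated on $[T_0,2T_0],\ldots$, using Lemma \ref{lambda} to guarantee uniform estimates on the shifted initial data, yielding uniqueness and existence on arbitrary $[0,T]$. The hard part will be the stability inequality for $\lambda^{*,1}-\lambda^{*,2}$: the diffusion coefficient is only H\"older when $\rho<1$, the drift $b$ is only locally Lipschitz with superlinear growth, and the $\beta^S\lambda\,dX$ term is multiplicative with a possibly unbounded $b_0(X)$; keeping all three effects compatible with a \emph{linear} bound in $\|Q^{(1)}-Q^{(2)}\|_{\infty}$ is where the Yamada--Watanabe machinery, the a priori moment bounds of Lemma \ref{lambda}, and the Girsanov transformation enabled by Assumption \ref{unbdd_b} must all come together.
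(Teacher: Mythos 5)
Your proposal follows essentially the same route as the paper: a Picard/Banach fixed-point argument (the paper iterates on the intensity process $U\mapsto\lambda(\xi(U))$ in the norm $\sup_{t\le T}\mathbb{E}|\cdot|$ rather than directly on $Q$, but the two formulations are equivalent), with the stability estimate obtained from It\^o's formula applied to the Yamada--Watanabe function $\psi_\eta$, truncation via $b_M$ and the stopping times $\tau_M$ (removed by Lemma \ref{tauM}) to handle the locally Lipschitz superlinear drift, Gronwall plus a short-horizon contraction iterated to cover $[0,T]$, and a Girsanov change of measure under Assumption \ref{unbdd_b} for unbounded $b_0$. The only adjustment you would need is to run the contraction in $\sup_{t}\mathbb{E}|\cdot|$ rather than $\mathbb{E}\sup_{t}|\cdot|$, since the H\"older-$\rho$ diffusion term prevents closing a linear $\mathbb{E}\sup$ bound through BDG, while taking expectations kills the martingale term and the weaker norm suffices for the fixed point.
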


Lemma \ref{uniqueQ} is proven in the Appendix.
\begin{lem}
	Let $(Q_i(t),\lambda_t^{*}(\hat{p}),i=1, \ldots,r)$, with $\hat{p}=(p,\lambda_{0})$, be the unique pair from Lemma \ref{uniqueQ} with $\mathcal{V}_t$ the filtration generated by the limiting $X$. For any $A \in \mathfrak{B}(\mathcal{P})$ and $B \in \mathfrak{B}({\mathbb{R}_{+}})$, $\bar{\mu}$ is given by
	$$\bar{\mu_t}(A \times B) = \int_{\hat{p}\in \hat{\mathcal{P}}} \chi_A(p) \mathbb{E}_{\mathcal{V}_t} \left[ \chi _B(\lambda_t^*(\hat{p}))\exp\left[-\int_{0}^t\lambda_s^*(\hat{p})ds \right] \right] \pi(dp) \Lambda_0(d\lambda_0).$$    \label{uni}
\end{lem}

\begin{proof}
	For any $f \in C^{\infty}(\mathcal{\hat{P}})$, define a $\mathcal{V}_t-$adapted random element $\bar{\mu}$ of $D_{E}$ by the action
	$$\langle f,\bar{\mu}_t \rangle_E=\int_{\hat{p}\in \hat{\mathcal{P}}} \mathbb{E}_{\mathcal{V}_t} \left[f(p,\lambda_t^*(\hat{p})) \exp\left[ -\int_0^t \lambda_s^*(\hat{p})ds\right] \right] \pi(dp) \Lambda_0(d\lambda_0).$$

By It\^{o}'s formula, we obtain, using Lemmas B.1 and B.2 in \cite{GSSS}, that
	\begin{align}
	&d  \langle f,\bar{\mu}_t\rangle_E =\nonumber\\
&= \left\{\int_{\hat{p}\in \hat{\mathcal{P}}} \mathbb{E}_{\mathcal{V}_t} \left[ \left[ (\mathcal{L}_1 f)(p,\lambda_t^*(\hat{p}))+ (\mathcal{L}_2^{X_t} f)(p,\lambda_t^*(\hat{p}))\right] \exp\left[ -\int_0^t \lambda_s^*(\hat{p})ds \right]\right]\right.\nonumber\\
&\hspace{8cm}\left. \pi(dp) \Lambda_0(d\lambda_0) \right\} dt\nonumber\\
	&\quad +\left\{\int_{\hat{p}\in \hat{\mathcal{P}}} \mathbb{E}_{\mathcal{V}_t} \left[ ( \mathcal{L}_3^{X_t} f)(p,\lambda_t^*(\hat{p})) \exp\left[ -\int_0^t \lambda_s^*(\hat{p})ds \right]\right]\pi(dp)\Lambda_0(d\lambda_0)\right\}dV_t \nonumber\\
	&\quad + \left\{ \int_{\hat{p}\in \hat{\mathcal{P}}} \mathbb{E}_{\mathcal{V}_t} \left[  \sum_{i=1}^r Q_i(t) (\mathcal{L}_{4} f)_i(p,\lambda_t^*(\hat{p})) \exp \left[ -\int_0^t \lambda_s^*(\hat{p})ds \right]\right] \pi(dp)\Lambda_0(d\lambda_0) \right\} dt \nonumber\\
	&\quad=\left\{ \langle\mathcal{L}_1 f, \bar{\mu_t}\rangle_E + \langle\mathcal{L}_2^{X_t} f, \bar{\mu_t}\rangle_E + \sum_{i=1}^r Q_i(t) \langle(\mathcal{L}_{4} f)_i, \bar{\mu_t}\rangle_E\right\}dt+ \left\{\langle\mathcal{L}_3^{X_t} f, \bar{\mu_t}\rangle_E\right\} dV_t. \nonumber
	\end{align}
	where $\iota(\lambda,p)=\lambda$. Define now
\[
G_i(t)= \int_{\hat{p}\in \hat{\mathcal{P}}}  \mathbb{E}_{\mathcal{V}_t} l_i \left[  \exp\left[-\int_{0}^t\lambda_s^*(\hat{p})ds \right] \right] \pi(dp) \Lambda_0(d\lambda_0).
\]
	
Then, we have that
\[
G'_i(t)= - \int_{\hat{p}\in \hat{\mathcal{P}}}  \mathbb{E}_{\mathcal{V}_t} l_i \left[ \lambda_t^*(\hat{p}) \exp\left[-\int_{0}^t\lambda_s^*(\hat{p})ds \right] \right] \pi(dp) \Lambda_0(d\lambda_0)=-\langle l_i\iota,\bar{\mu_t} \rangle_E.
\]
	
On the other hand by Lemma \ref{uniqueQ}, we have $G'_i(t)=-Q_i(t)$, concluding the proof of the lemma due to uniqueness.	
\end{proof}

\section{Conclusions and further research work}\label{S:Conclusions}

We consider a general point process model of correlated default timing in a pool of components (e.g. firms or names) interacting via a weighted directed graph which determines the impact of default among the different components. The model is empirically motivated and incorporates contagion effects, common systematic risk factors as well as  idiosyncratic effects.

We prove a law of large numbers for the empirical survival distribution. This is then used to study the behavior of dynamic quantities of interest, such as mean loss rate in the pool or mean impact on given names from system wide defaults. The presence of the network structure  enlarges the set of interesting questions that we can ask and at the same time allows via singular value decomposition arguments to reduce the computational burden via low rank approximations.

One of the interesting questions that we did not address here is that of the effect of choices such as bistability in the idiosyncratic component of the intensity-to-default process. Questions motivated by such choices, as well as others including the study of most likely paths to default, are more suitable for large deviations analysis in the spirit of \cite{SpiliopoulosSowers2013}, which will be done in a follow up work.  In the present work we focus on establishing mathematical well-posedness of such models and on numerically exploring the effects of the network structure and low rank approximations on the typical behavior of quantities of interest.

Another potential interesting question is what happens when one wants to allow the rank of the low-rank approximation to $\Delta$ to increase with $N$, say $r=r(N)\rightarrow\infty$. In such a case, we expect that the term $Q^{N,n}_{t}=\beta^{C}_{n}\cdot L^{N}_{t}$ in equation (\ref{Eq:MainModel}) should be scaled by $\frac{1}{r(N)}$ and thus be replaced  by $\frac{1}{r(N)} \beta^{C}_{n}\cdot L^{N}_{t}$. We do not study this question in this paper, but we believe that the techniques developed in this paper will be useful in order to address this question.

\appendix

\section{Appendix}\label{A:Appendix}
In this appendix we prove lemmas used throughout the paper. We remark here that most of the technical difficulties arising from dropping the affine structure in the idiosyncratic part of the intensity process are encountered in the proofs of the results in this Appendix.

Let $\xi$ be a vector of processes having $r$ components, predictable, bounded, right continuous, monotone with $\xi_{0}=0$. Define the process
$$Z_t=\lambda_0+\beta^C \cdot \int_0^t e^{\Gamma_s} d\xi_s.$$

\begin{lem}\label{Zt}
	Let  $p\geq1$ be such that Assumptions \ref{s0} and \ref{Xt} hold. Then we have that
\begin{align}
&\mathbb{E}[Z_t^{2p}]^{1/(2p)} \leq \lambda_0 + ||\beta^C||_1\mathbb{E}[e^{2p\Gamma_t}]^{1/(2p)}\nonumber\\
	&+t^{1-1/{2p}} \left[\left(\int_0^t\mathbb{E} \left[e^{4p\Gamma_s}\right]ds\right)\right]^{1/{4p}} \left\{\left(\int_0^t ||\beta^C||_1^{1/{4p}} (\beta^S)^{1/{4p}} \mathbb{E}\left(\left[b_0(X_s)\right]^{4p}\right)ds\right)\right\}^{1/{4p}}.\nonumber
\end{align}
In particular, we have that there is a finite constant $0<K<\infty$ such that $\mathbb{E}[Z_t^{2p}]\leq K$.
\end{lem}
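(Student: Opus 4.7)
The plan is to apply Minkowski's inequality first, then integrate by parts to move the $d\xi_s$ integrator onto $e^{\Gamma_s}$, and finally dispatch the resulting time integral with Hölder's and Cauchy--Schwarz inequalities.

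\textbf{Step 1: Minkowski.} Writing $Z_t = \lambda_0 + \sum_{j=1}^r \beta^C_j \int_0^t e^{\Gamma_s}\, d\xi^j_s$, the triangle inequality in $L^{2p}(\Omega)$ gives
\[
\mathbb{E}[Z_t^{2p}]^{1/(2p)} \le \lambda_0 + \sum_{j=1}^{r} |\beta^C_j|\,\Bigl\| \textstyle\int_0^t e^{\Gamma_s}\, d\xi^j_s \Bigr\|_{L^{2p}}.
\]

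\textbf{Step 2: Integration by parts.} Since each $\xi^j$ is predictable, right-continuous, monotone and bounded, it has finite variation, and $\Gamma_\cdot = -\beta^S \int_0^\cdot b_0(X_s)\, ds$ is continuous of finite variation. A pathwise Riemann--Stieltjes integration by parts yields
\[
\int_0^t e^{\Gamma_s}\, d\xi^j_s \;=\; e^{\Gamma_t}\xi^j_t \;-\; \int_0^t \xi^j_s\, e^{\Gamma_s}\, d\Gamma_s \;=\; e^{\Gamma_t}\xi^j_t \;+\; \beta^S \int_0^t \xi^j_s\, e^{\Gamma_s}\, b_0(X_s)\, ds.
\]
Using boundedness of $\xi^j$ (which we absorb into the constants; note that monotonicity and $\xi_0=0$ force $|\xi^j_s|\le |\xi^j_t|$ along each path, and by hypothesis $\xi$ is bounded) I obtain
\[
\Bigl\|\textstyle\int_0^t e^{\Gamma_s}\, d\xi^j_s\Bigr\|_{L^{2p}} \;\le\; \|e^{\Gamma_t}\|_{L^{2p}} \;+\; |\beta^S|\, \Bigl\|\textstyle\int_0^t e^{\Gamma_s}|b_0(X_s)|\, ds\Bigr\|_{L^{2p}},
\]
which already produces the boundary term $\|\beta^C\|_1\, \mathbb{E}[e^{2p\Gamma_t}]^{1/(2p)}$ after summing in $j$.

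\textbf{Step 3: Hölder on the time integral.} For the remaining time integral, Jensen's (or Hölder's with exponents $2p$ and $2p/(2p-1)$) gives
\[
\Bigl(\int_0^t e^{\Gamma_s} |b_0(X_s)|\, ds\Bigr)^{2p} \;\le\; t^{2p-1} \int_0^t e^{2p\Gamma_s} |b_0(X_s)|^{2p}\, ds.
\]
Taking expectation, applying Cauchy--Schwarz inside the integral to split $\mathbb{E}[e^{2p\Gamma_s}|b_0(X_s)|^{2p}] \le \mathbb{E}[e^{4p\Gamma_s}]^{1/2}\mathbb{E}[|b_0(X_s)|^{4p}]^{1/2}$, then a second Cauchy--Schwarz on the $ds$ integral, and finally raising to the $(1/2p)$-th power yields exactly the second term of the claimed bound.

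\textbf{Step 4: Finiteness.} The uniform boundedness $\mathbb{E}[Z_t^{2p}] \le K$ for $t\in[0,T]$ follows directly from Assumption \ref{Xt} (which gives $\sup_{t\le T}\mathbb{E}[e^{4p|\Gamma_t|}]<\infty$) combined with Assumption \ref{s0} ($\sup_{t<\infty}\mathbb{E}|b_0(X_t)|^{4p}<\infty$), since each factor in the derived bound is then finite on $[0,T]$.

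The only delicate step is Step 2: justifying the pathwise integration by parts for the vector Stieltjes integral against a monotone, right-continuous but possibly discontinuous $\xi^j$. I would handle any jump contributions by the standard identity $\int_0^t e^{\Gamma_s} d\xi^j_s + \int_0^t \xi^j_{s-} d(e^{\Gamma_s}) = e^{\Gamma_t}\xi^j_t - e^{\Gamma_0}\xi^j_0$, noting $\Gamma$ is continuous so no bracket term appears; all other steps are direct applications of classical inequalities.
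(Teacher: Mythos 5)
Your proposal is correct and follows essentially the same route as the paper: integration by parts to rewrite $\int_0^t e^{\Gamma_s}\,d\xi_s$ as a boundary term plus $\beta^S\int_0^t \xi_s e^{\Gamma_s} b_0(X_s)\,ds$, Minkowski's inequality in $L^{2p}$, boundedness of $\xi$, and then Hölder/Cauchy--Schwarz to produce the $t^{1-1/(2p)}$ factor and split $e^{4p\Gamma_s}$ from $|b_0(X_s)|^{4p}$. The only cosmetic differences are the order of Minkowski versus integration by parts and the exact sequencing of Hölder and Cauchy--Schwarz on the time integral, neither of which changes the substance.
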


\begin{proof}[Proof of Lemma \ref{Zt}]
Notice that $Z_t$ can be written as
	\begin{eqnarray}
	Z_t&=&\lambda_0+\beta^C \cdot \{e^{\Gamma_t} \xi_t+\int_0^t e^{\Gamma_s}\xi_s\beta^Sb_0(X_s) ds\}\nonumber\\
	&=&\lambda_0+\int_0^t e^{\Gamma_s} \beta^C \cdot \xi_s \beta^S b_0(X_s) ds+ \beta^C \cdot \xi_t e^{\Gamma_t}\nonumber
	\end{eqnarray}
	Next given that  $\beta^C \cdot \xi_t \leq \sum_{j=1}^r |\beta_j^C|=||\beta^C||_1$, we obtain
	$$\{\mathbb{E}[Z_t^{2p}]\}^{1/(2p)} \leq \lambda_0 + ||\beta^C||_1\mathbb{E}[e^{2p\Gamma_t}]^{1/(2p)}+\Bigg\{\mathbb{E}\left[\left(\int_0^t \beta^C\cdot \xi_s e^{\Gamma_s}\beta^Sb_0(X_s) ds\right)^{2p}\right]\Bigg\}^{1/(2p)}.$$

	By Cauchy-Schwartz inequality and H\"{o}lder inequality, we have
	\begin{eqnarray}
	& &\left\{\mathbb{E}\left[\left(\int_0^t \beta^C\cdot \xi_s e^{\Gamma_s}\beta^Sb_0(X_s) ds\right)^{2p}\right]\right\}^{1/(2p)}\nonumber\\
	&\leq & \left\{\mathbb{E}\left[\left(\int_0^t e^{2\Gamma_s} ds\right)^p \left(\int_0^t \left[\beta^C \cdot \xi_s \beta^S b_0(X_s)\right]^2ds\right)^p\right]\right\}^{1/(2p)}\nonumber\\
	&\leq& \left[\mathbb{E}\left(\int_0^t e^{2\Gamma_s} ds\right)^{2p}\right]^{1/4p} \left[\mathbb{E}\left(\int_0^t [\beta^C \cdot \xi_s \beta^S b_0(X_s)]^2ds\right)^{2p}\right]^{1/4p}.\nonumber
	\end{eqnarray}
	By Holder inequality,
	$$\int_0^t e^{2\Gamma_s}ds \leq \left[\int_0^t\left(e^{2\Gamma_s}\right)^{2p}\right]^{1/(2p)} \left[\int_0^t 1 ds\right]^{1-1/{2p}}=t^{1-1/{2p}}\left(\int_0^t e^{4p\Gamma_s}ds\right)^{1/{2p}}.$$
	So, we have that
	$$\left[\mathbb{E}\left(\int_0^t e^{2\Gamma_s} ds\right)^{2p}\right]^{1/4p}\leq\left[t^{2p-1}\mathbb{E}(\int_0^t e^{4p\Gamma_s}ds)\right]^{1/{4p}}.$$
	Similarly, we get
	$$\left[\mathbb{E}\left[\left(\int_0^t \left[\beta^C \cdot \xi_s \beta^S b_0(X_s)\right]^2ds\right)^{2p}\right]\right]^{1/4p}\leq \left[t^{2p-1}\mathbb{E}\left(\int_0^t \left[||\beta^C||_1 \beta^S b_0(X_s)\right]^{4p}ds\right) \right]^{1/4p}.$$
	Therefore, we have
	\begin{eqnarray}
	& &\left\{\mathbb{E}\left[\left(\int_0^t \beta^C\cdot \xi_s e^{\Gamma_s}\beta^Sb_0(X_s) ds\right)^{2p}\right]\right\}^{1/(2p)}\nonumber\\
	&\leq& t^{1-1/{2p}} \left[\left(\int_0^t\mathbb{E} \left[e^{4p\Gamma_s}\right]ds\right)\right]^{1/{4p}} \left\{\mathbb{E}\left(\int_0^t \left[||\beta^C||_1 \beta^S b_0(X_s)\right]^{4p}ds\right)\right\}^{1/{4p}}\nonumber\\
	& = & t^{1-1/{2p}} \left[\left(\int_0^t\mathbb{E} \left[e^{4p\Gamma_s}\right]ds\right)\right]^{1/{4p}} \left\{\left(\int_0^t ||\beta^C||_1^{4p} (\beta^S)^{4p} \mathbb{E}\left(\left[b_0(X_s)\right]^{4p}\right)ds\right)\right\}^{1/{4p}},\nonumber
	\end{eqnarray}
concluding the proof of the lemma.
\end{proof}

\begin{proof}[Proof of Lemma \ref{sol}]
	The proof of this lemma will be given in several steps. Let us first discuss existence and uniqueness of the equation for $\lambda_{t}$ assuming that $b(\lambda,\alpha)$ is uniformly bounded.

	The existence and uniqueness of the solution $\lambda_t$ follows along similar lines as in chapter V.11 in \cite{RW}.  However, due to the peculiarities of the model considered here, the derivation of the bounds for the necessary norms are more complicated. Below we mention the adjustments needed for the proof of uniqueness as the adjustments needed for the proof of existence are basically the same.

For any $M>0$, let us set
\[
 b_M(\lambda,a)=b(\lambda,a), \text{ for all } |\lambda|\leq M.
\]

Let $Y^{M}$ satisfy the equation
\begin{align}
    Y_{t}^{M} &=  \int_0^{t\wedge\tau_M} e^{\Gamma_s}[ b_M(e^{-\Gamma_s}((Y_s^{M}+Z_s)\vee 0),a)]ds +\sigma \int_0^{t\wedge\tau_M} e^{\Gamma_s(1-\rho)} ((Y_s^{M}+Z_s)\vee 0)^{\rho} dWs\nonumber\\
    &+  \beta^S \int_0^{t\wedge\tau_M} \sigma_0(X_s) ((Y_s^{M}+Z_s)\vee 0) dV_s,\nonumber
 \end{align}
 where $\tau_{M}$ is the random time defined via
\begin{equation}
\tau_M=\inf\left\{t\geq 0:|e^{-\Gamma_t}((Y_{t}^{M}+Z_t)\vee 0)|>M\right\} \wedge M.\label{Eq:randomTrancationTime}
\end{equation}

It is clear that  up to time $\tau_M$, the process $Y_{t}^{M}$ will be the same as the process $Y_t$, which has $b$ in place of $b_{M}$ as its corresponding drift coefficient.

 Now, we  assume that the equation for $Y_t^{M}$ has one more solution, potentially different than $Y_t^{M}$, denoted by $Y'^{M}_{t}$, and we denote by $\tau'_M$ the corresponding random time.

Let us consider $0<\eta\ll 1$ and define the function	
\begin{equation}\label{psi}
	\psi_\eta(x)=\frac{2}{\ln \eta^{-1}} \int_0^{|x|} \left\{\int_0^y \frac{1}{z} \chi_{[\eta,\eta^{1/2}]}(z)dz\right\}dy.
\end{equation}

Notice that $\psi_\eta$ is an even function. In addition, its first and second derivatives satisfy
\begin{equation*}  \psi'_\eta(x) = \frac{2}{\ln \eta^{-1}} \int_{z=0}^x \frac{1}{z}\chi_{[\eta,\eta^{1/2}]}(z) dz \qquad \text{and}\qquad  \psi''_\eta(x) = \frac{2}{\ln \eta^{-1}} \frac{1}{x}\chi_{[\eta,\eta^{1/2}]}(x) \end{equation*}
for all $x>0$.  Monotonicity arguments then show that for all $x\in \mathbb{R}$ and $\eta> 0$, $| \psi'_\eta(x)|\leq 1$, and
\begin{equation*} |x|\le \psi_\eta(x)+\sqrt{\eta}. \end{equation*}

Additionally, we note that
\begin{equation}\left| \psi''_\eta(x)\right|\le \frac{2}{\ln \eta^{-1}}\frac{1}{|x|}\chi_{[\eta,\sqrt{\eta})}(|x|)\le \frac{2}{\ln \eta^{-1}}\min\left\{ \frac{1}{|x|},\frac{1}{\eta}\right\},\label{Eq:SecondDerivativeBound}
 \end{equation}
and that $x {\psi'}_{\eta}(x)\geq 0$ for  all $x\in \mathbb{R}$.

We have
\[
|Y_t^{M}-Y_t^{'M}| \leq \psi_{\eta}(Y_t^{M}-Y_t^{'M})+\sqrt{\eta} \leq D_t^{1,M}+\sigma^2 D_t^{2,M} + (\beta^S)^2 D_t^{3,M} + \mathcal{M}_{t}+\sqrt{\eta}
\]
where $\mathcal{M}_{t}$ is a martingale, and
\begin{align}
	D_t^{1,M}&=\int_0^{t\wedge \tau_M\wedge\tau'_M} \psi_{\eta}^{'}(Y_s^{M}-Y_s^{'M})e^{\Gamma_s}\nonumber\\
	& \left|b_M(e^{-\Gamma_s}((Y_s^{M}+Z_s)\vee 0),a) -b_M(e^{-\Gamma_s}((Y_s^{'M}+Z_s)\vee 0),a) \right|ds\nonumber\\
	&\leq \int_0^{t\wedge \tau_M\wedge\tau'_M} \psi_{\eta}^{'}(Y_s^{M}-Y_s^{'M}) C_{M,1} |Y_s^{M}-Y_s^{'M}|ds,\nonumber
\end{align}
where $C_{M,1}$ is the Lipschitz constant for  the truncated function $b_M(\cdot,a)$. Also,

\begin{align}
D_t^{2,M}&= 1/2 \int_0^{t\wedge \tau_M\wedge\tau'_M} {\psi''}_{\eta}(Y_s^{M}-Y_s^{'M})e^{2\Gamma_s(1-\rho)}\nonumber\\
    & \quad\times\left[((Y_s^{M}+Z_s)\vee 0)^{\rho}-((Y_s^{'M}+Z_s)\vee 0)^{\rho}\right]^2ds\nonumber\\
    & \leq 1/2 \int_0^{t\wedge \tau_M\wedge\tau'_M} {\psi''}_{\eta}(Y_s^{M}-Y_s^{'M})e^{2\Gamma_s(1-\rho)}\nonumber\\
	& \quad\times\left[((Y_s^{M}+Z_s)\vee 0)^{2\rho}-((Y_s^{'M}+Z_s)\vee 0)^{2\rho}\right]ds\nonumber\\
	&\leq  1/2 \int_0^{t\wedge \tau_M\wedge\tau'_M} {\psi''}_{\eta}(Y_s^{M}-Y_s^{'M})e^{2\Gamma_s}\nonumber\\
	& \quad\times \left[\left(e^{-\Gamma_s}((Y_s^{M}+Z_s)\vee 0)\right)^{2\rho}-\left(e^{-\Gamma_s}((Y_s^{'M}+Z_s)\vee 0)\right)^{2\rho}\right]ds\nonumber\\
	&\leq  1/2 \int_0^{t\wedge \tau_M\wedge\tau'_M} {\psi''}_{\eta}(Y_s^{M}-Y_s^{'M})e^{\Gamma_s} C_{M,2} |Y_s^{M}-Y_s^{'M}|ds\nonumber\\
	&\leq  \frac{C_{M,2}}{\ln \eta^{-1}}\int_0^{t\wedge \tau_M\wedge\tau'_M} e^{\Gamma_s} ds \nonumber
	\end{align}
for some constant $K_2$, where (\ref{Eq:SecondDerivativeBound}) was used. Here $C_{M,2}$ is the Lipschitz coefficient of the locally Lipschitz function $f(x)=x^{2\rho}$ for $|x|\leq M$. Similarly, using (\ref{Eq:SecondDerivativeBound}) and Assumption \ref{s0}  we can show
	\begin{align}
	D_t^{3,M}& =  1/2 \int_0^{t\wedge \tau_M\wedge\tau'_M} {\psi''}_{\eta}(Y_s^{M}-Y_s^{'M}) \sigma^{2}_{0}(X_{s})\left[((Y_s^{M}+Z_s)\vee 0)-((Y_s^{'M}+Z_s)\vee 0)\right]^{2}ds\nonumber\\
&\leq 1/2 \int_0^{t\wedge \tau_M\wedge\tau'_M} {\psi''}_{\eta}(Y_s^{M}-Y_s^{'M}) \sigma^{2}_{0}(X_{s})\left[((Y_s^{M}+Z_s)\vee 0)^{2}-((Y_s^{'M}+Z_s)\vee 0)^{2}\right]ds\nonumber\\
	&\leq  K_{3} \int_0^{t\wedge \tau_M\wedge\tau'_M} {\psi''}_{\eta}(Y_s^{M}-Y_s^{'M}) \left|Y_s^{M}-Y_s^{'M}\right|e^{\Gamma_{s}}\left|e^{-\Gamma_{s}}(Y_{s}^{M}+Z_{s})+e^{-\Gamma_{s}}(Y_{s}^{'M}+Z_{s})\right|ds\nonumber\\
	&\leq  K_{3} C_{M,3} \int_0^{t\wedge \tau_M\wedge\tau'_M} {\psi''}_{\eta}(Y_s^{M}-Y_s^{'M}) \left|Y_s^{M}-Y_s^{'M}\right|e^{\Gamma_{s}}ds\nonumber\\
&\leq  \frac{K_3 C_{M,3}}{\ln \eta^{-1}} \int_0^{t\wedge \tau_M\wedge\tau'_M} e^{\Gamma_{s}}ds\nonumber
	\end{align}

Therefore, we get that
\begin{align}
\sup_{t\leq T\wedge \tau_M\wedge\tau'_M}\mathbb{E} |Y_t^{M}-Y_t^{'M}|&\leq \sqrt{\eta} +(\sigma^2+\beta^S)^2 T\frac{K_2 C_{M,2}+K_3 C_{M,3}}{\ln \eta^{-1}}\nonumber\\
&\qquad + C_{M,1}\int_0^T \sup_{s\leq t\wedge \tau_M\wedge\tau'_M}\mathbb{E}|Y_s^{M}-Y_s^{'M}| dt.
\end{align}

By Gronwall's lemma, we obtain that	
\[
\sup_{t\leq T\wedge \tau_M\wedge\tau'_M}\mathbb{E} |Y_t^{M}-Y_t^{'M}|\leq \left(\sqrt{\eta} +(\sigma^2+\beta^S)^2 T \frac{K_2 C_{M,2}+K_3 C_{M,3}}{\ln \eta^{-1}} \right) \exp\{ C_{M,1} T\}.
\]

Let $\eta \downarrow 0$, we have for any $T>0$.
\[
\sup_{t\leq T\wedge \tau_M\wedge\tau'_M}\mathbb{E}  |Y_t^{M}-Y_t^{'M}| = 0.
\]
	
That is $Y_t^{M}=Y_t^{'M}$ for any $M \in \mathbb{N}$ and $t\leq T\wedge \tau_M\wedge\tau'_M$. Then let $M \to \infty$ and together with the observation that $\tau_M, \tau^{'}_M$ increase to infinity almost surely, which follows by Lemma \ref{tauM}, we obtain uniqueness of the solution $Y_t$ to the following SDE
	\begin{align}
	Y_t &= \int_0^{t} e^{\Gamma_s}[b(e^{-\Gamma_s}((Y_s+Z_s)\vee 0),a)]ds+ \sigma \int_0^{t} e^{\Gamma_s(1-\rho)} ((Y_s+Z_s)\vee 0)^{\rho} dWs\nonumber\\
	&+  \beta^S \int_0^{t} \sigma_0(X_s) ((Y_s+Z_s)\vee 0) dV_s\nonumber
	\end{align}

Let us set now $\bar{Y}_t = Y_t+Z_t$. Then, $\bar{Y}_t$ satisfies
\begin{align}
	\bar{Y}_t &=  Z_t + \int_0^{t} e^{\Gamma_s}[b(e^{-\Gamma_s}(\bar{Y}_s\vee 0),a)+ \sigma \int_0^{t} e^{\Gamma_s(1-\rho)} (\bar{Y}_s\vee 0)^{\rho} dWs\nonumber\\
	&+  \beta^S \int_0^{t} \sigma_0(X_s) (\bar{Y}_s\vee 0) dV_s.\nonumber
	\end{align}

It is easy to see now that $\lambda_t = e^{-\Gamma_t}\bar{Y}_t$ is the unique solution defined in the lemma \ref{sol}. Next we show that $\lambda_t\geq 0$. 	First, we notice that $\bar{Y}_0=Z_0=\lambda_0>0$.

 By It\^{o}'s formula for the function $\psi_\eta(\cdot)$
	\begin{eqnarray}
	\psi_{\eta}(\bar{Y}_t) \chi_{\mathbb{R}_{-}}(\bar{Y}_t) & = & \psi_{\eta}(\bar{Y}_0) \chi_{\mathbb{R}_{-}}(\bar{Y}_0)\nonumber\\
	& + & \int_0^{t} {\psi'}_{\eta}(\bar{Y}_s) \chi_{\mathbb{R}_{-}}(\bar{Y}_s) b(e^{-\Gamma_s}(\bar{Y}_s\vee 0),a) ds \nonumber\\
	&+& (1/2) \sigma^2 \int_0^{t} {\psi''}_{\eta}(\bar{Y}_s) \chi_{\mathbb{R}_{-}}(\bar{Y}_s) e^{2\Gamma_s(1-\rho)} (\bar{Y}_s\vee 0)^{2\rho} ds\nonumber\\
	&+& 1/2 (\beta^S)^2 \int_0^{t} {\psi'}_{\eta}(\bar{Y}_s) \chi_{\mathbb{R}_{-}}(\bar{Y}_s) \sigma_0(X_s) (\bar{Y}_s\vee 0)^{2} ds + \mathcal{M}_t\nonumber
	\end{eqnarray}
	
	where $\mathcal{M}_t$ is a martingale. Notice that for $s>0$ at least one of $\chi_{\mathbb{R}_{-}}(\bar{Y}_s)$ and $(\bar{Y}_s\vee 0)$ have to be zero, then taking expectation for both sides:
	$$\mathbb{E}[\psi_{\eta}(\bar{Y}_t) \chi_{\mathbb{R}_{-}}(\bar{Y}_t)]=\mathbb{E}[\int_0^{t} \psi'_{\eta}(\bar{Y}_s) \chi_{\mathbb{R}_{-}}(\bar{Y}_s) b(e^{-\Gamma_s}(\bar{Y}_s\vee 0),a)] ds$$
	
	Notice that $\chi_{\mathbb{R}_{-}}(\bar{Y}_s) b(e^{-\Gamma_s}(\bar{Y}_s\vee 0),a)$ can only take the nonzero value $b(0,a)>0$ when $\bar{Y}_s \leq 0$. Also notice $\psi'_{\eta}(x)$ takes non-positive values when $x \leq 0$  and is 0 when $|x|<\eta$. Thus, if we let $\eta \to 0$ the right hand side of the above equation is no greater than zero. On the left hand side, recall that as $\eta \to 0$, $\psi_{\eta}(x)$ goes to $|x|$. Therefore, letting $\eta \to 0$, we have	
\[
\mathbb{E}[\bar{Y}_t^{-}]=\mathbb{E}[|\bar{Y}_t| \chi_{\mathbb{R}_{-}}(\bar{Y}_t)] \leq 0
\]
	
Hence, we get that
\[
\mathbb{E}[\bar{Y}_t^{-}]=0,
\]
i.e., $\bar{Y}_t$ is nonnegative and as a consequence $\lambda_{t}=e^{-\Gamma_t}\bar{Y}_t$  is also nonnenative. This concludes the proof of the lemma.
\end{proof}

\begin{proof}[Proof of Lemma \ref{lambda}]
	For each $N \in \mathbb{N}$ and $n \in {1,2,\ldots,N}$ define
	$$\Gamma_t^{N,n}=-\beta^S_{N,n} \int_0^t b_0(X_s) ds$$
	$$Z_t^{N,n}=\lambda_{0,N,n}+\beta^C_{N,n} \cdot \int_0^t e^{\Gamma_s^{N,n}} dL^N_s$$
	\begin{align}
	Y_t^{N,n} &=  \int_0^t e^{\Gamma_s^{N,n}}[b(e^{-\Gamma_s^{N,n}}(	Y_s^{N,n}+Z_s^{N,n}),a_n)] ds +  \sigma^{N,n} \int_0^t e^{\Gamma_s^{N,n}(1-\rho)} (	Y_s^{N,n}+Z_s^{N,n})^{\rho} dW^n_s \nonumber\\
	&+ \beta^S_{N,n} \int_0^t \sigma_0(X_s) (	Y_s^{N,n}+Z_s^{N,n}) dV_s.\nonumber
	\end{align}
	Then $\lambda_t^{N,n}=e^{-\Gamma_t^{N,n}}(	Y_s^{N,n}+Z_t^{N,n})$. 	So, we have
	$$|\lambda_t^{N,n}|^p \leq \frac{1}{2} \left[e^{-2p\Gamma_t^{N,n}}+(Y_t^{N,n}+Z_t^{N,n})^{2p}\right]. $$
	
Hence, due to Assumption \ref{Xt}, it is enough to show that $\sup_{t\leq T}\mathbb{E}|Y_t^{N,n}+Z_t^{N,n}|^{2p}\leq K$ for some appropriate finite constant $K$.
	
Apply It\^{o}'s formula to $|Y_t^{N,n}+Z_t^{N,n}|^{2p}$. We claim that without loss of generality the martingale terms that appear in the It\^{o} formula can be considered to be true martingales and thus have zero expectation. With this in mind,  $ 1-M_t^{N,n} - \int_{0}^t \lambda_s^{N,n} M_s^{N,n} ds$ is a martingale and we write $\lambda_t^{N,n}=e^{-\Gamma_t^{N,n}}(	Y_s^{N,n}+Z_t^{N,n})$.

Then, we can write down
	\begin{align}
	&\mathbb{E}|Y_t^{N,n}+Z_t^{N,n}|^{2p} \label{a1}\\
	&= \mathbb{E}\int_0^t 2p |Y_s^{N,n}+Z_s^{N,n}|^{2p-1} e^{\Gamma^{N,n}_s}[b(e^{-\Gamma^{N,n}_s}((Y_s^{N,n}+Z_s^{N,n})\vee 0),a_n)]ds\nonumber\\
	&+ \mathbb{E}\frac{{\sigma^{N,n}}^2}{2} \int_0^t 2p(2p-1) |Y_s^N + Z_s|^{2p-2} e^{2\Gamma^{N,n}_s(1-\rho)} ((Y_s^{N,n}+Z_s^{N,n})\vee 0)^{2\rho} ds\nonumber\\
	&+ \mathbb{E}\frac{(\beta^S_{\NN})^2}{2} \int_0^t 2p(2p-1) |Y_s^{N,n}+Z_s^{N,n}|^{2p-2} (\sigma_0(X_s))^2(Y_s^{N,n}+Z_s^{N,n})\vee 0)^2 ds\nonumber\\
	&+ \mathbb{E}\int_0^t 2p |Y_s^{N,n}+Z_s^{N,n}|^{2p-1} e^{\Gamma_s^{N,n}} \beta^{C}_{N,n} \cdot \frac{1}{N}\sum_{i=1}^{N}\ l^{i}\left(e^{-\Gamma_s^{N,i}}(	Y_s^{N,i}+Z_s^{N,i}) M_s^{N,i}\right) d s,\nonumber	
\end{align}

 By	Assumption \ref{assumptionb}, we have that there is some $K>0$ such that $\lambda b(\lambda,a)\leq-\gamma(a)|\lambda|^{d}$ for $|\lambda|\geq K$. Without loss of generality, we can assume that the dissipativity condition holds everywhere (if not we just consider separately the cases $|\lambda|<K$ and $|\lambda|\geq K$). Then, we have the estimate
	\begin{eqnarray}
	&& \mathbb{E}\int_0^t 2p |Y_s^{N,n}+Z_s^{N,n}|^{2p-1} e^{\Gamma^{N,n}_s}[b(e^{-\Gamma^{N,n}_s}((Y_s^{N,n}+Z_s^{N,n})\vee 0),a_n)]ds\label{a2}\\
	&\leq& - \mathbb{E}\int_0^t 2p |Y_s^{N,n}+Z_s^{N,n}|^{2p-2} e^{2\Gamma^{N,n}_s}  \gamma (a_n) |e^{-\Gamma^{N,n}_s}(Y_s^{N,n}+Z_s^{N,n})|^d ds.\nonumber\\
	&\leq& 0\nonumber
	\end{eqnarray}

		For the second term, we have	
	\begin{eqnarray}
	&& \mathbb{E}\int_0^t e^{2\Gamma^{N,n}_s(1-\rho)}|Y_s^{N,n}+Z_s^{N,n}|^{2p-2}  ((Y_s^{N,n}+Z_s^{N,n})\vee 0)^{2\rho} ds\label{a3}\\
	&\leq& \mathbb{E}\int_0^t e^{2\Gamma^{N,n}_s(1-\rho)}|Y_s^{N,n}+Z_s^{N,n}|^{2p-2}  |Y_s^{N,n}+Z_s^{N,n}|^{2\rho} ds\nonumber\\
	&\leq& 2^{2(p-1+\rho)-1}\mathbb{E}\int_0^t e^{2\Gamma^{N,n}_s(1-\rho)} \left(|Y_s^{N,n} |^{2p-2+2\rho} +|Z_s^{N,n}|^{2p-2+2\rho} \right)  ds\nonumber\\
	&\leq& 2^{2(p-1+\rho)-1}\mathbb{E}\int_0^t \left[ \frac{p-1+\rho}{p}|Y_s^{N,n}|^{2p} + \frac{1-\rho}{p}e^{2p\Gamma^{N,n}_s}\right] ds\nonumber\\
	&& + 2^{2(p-1+\rho)-1}\mathbb{E}\int_0^t \left[\frac{1-\rho}{p} e^{2p\Gamma^{N,n}_s} + \frac{p-1+\rho}{p} |Z_s^{N,n}|^{2p} \right]ds.\nonumber
	\end{eqnarray}
	
	The third term is similar with the second term with the help of Assumption \ref{s0} on the bound for $\sigma_0$.
	\begin{eqnarray}
	&& \mathbb{E}\int_0^t |Y_s^{N,n}+Z_s^{N,n}|^{2p-2} \left(\sigma(X_s)\right)^2 ((Y_s^{N,n}+Z_s^{N,n})\vee 0)^2 ds\label{a4}\\
	&\leq& \mathbb{E}\int_0^t \left(\sigma(X_s)\right)^2 |Y_s^{N,n}+Z_s^{N,n}|^{2p} ds\nonumber\\
	&\leq& 2^{2p-1} K_{\ref{s0}}^2 \mathbb{E}\int_0^t   \left(|Y_s^{N,n}|^{2p}+|Z_s^{N,n}|^{2p}\right) ds\nonumber
	\end{eqnarray}
	
For the fourth term we apply subsequently Young's inequality, use Assumption \ref{Xt} and we get
	\begin{align}
  &\mathbb{E}\int_0^t 2p |Y_s^{N,n}+Z_s^{N,n}|^{2p-1} e^{\Gamma_s^{N,n}} \beta^{C}_{N,n} \cdot \frac{1}{N}\sum_{i=1}^{N}\ l^{i}\left(e^{-\Gamma_s^{N,i}}(	Y_s^{N,i}+Z_s^{N,i}) M_s^{N,i}\right) d s\nonumber\\
  &\leq C_{0}K_{\ref{bdd}} \mathbb{E} \frac{1}{N}\sum_{i=1}^{N}\int_0^t  |Y_s^{N,n}+Z_s^{N,n}|^{2p-1} e^{\Gamma_s^{N,n}}  \left(e^{-\Gamma_s^{N,i}}|	Y_s^{N,i}+Z_s^{N,i}| \right) d s\nonumber\\
&\leq C_{1} \left(1+ \frac{1}{N}\sum_{n=1}^{N}\mathbb{E} \int_0^t  \left[ |Y_s^{N,n}|^{2p} +  |Z_s^{N,n}|^{2p} \right]ds \right)\label{a4_2}
   \end{align}
  for appropriate constants $C_{0},C_{1}<\infty$.

	Notice now that
	\begin{eqnarray}
	|Y_t^{N,n}|^{2p} &=&  |Y_t^{N,n}+Z_t^{N,n}-Z_t^{N,n}|^{2p}\label{aa}\\
	&\leq& 2^{2p-1} |Y_t^{N,n}+Z_t^{N,n}|^{2p} + 2^{2p-1} |Z_t^{N,n}|^{2p}.\nonumber
	\end{eqnarray}

	Next step is to bound (\ref{aa}) using  (\ref{a2}), (\ref{a3}), (\ref{a4}), (\ref{a4_2}). First we average equations (\ref{a2}), (\ref{a3}), (\ref{a4}), (\ref{a4_2}) over $n\in\{1,\cdots,N\}$ and together with Assumption \ref{bdd}, Assumption \ref{s0}, Assumption \ref{Xt}, Lemma \ref{Zt}, we have that there is a constant $K$ such that
	
\[
 \frac{1}{N}\sum_{n=1}^{N}\mathbb{E}[|Y_t^{N,n}|^{2p}] \leq K + K \int_0^t  \frac{1}{N}\sum_{n=1}^{N} \mathbb{E}[|Y_s^{N,n}|^{2p}] ds.
\]
	
	By Gronwall lemma, we obtain that
	\begin{equation}
	\sup_{0\leq t \leq T}\frac{1}{N}\sum_{n=1}^{N}\mathbb{E}[|Y_t^{N,n}|^{2p}] \leq K e^{KT}.  \label{prop0}
	\end{equation}

In addition, notice that using 	(\ref{prop0}) now, (\ref{aa}) together with   (\ref{a2}), (\ref{a3}), (\ref{a4}), (\ref{a4_2}), also gives that for any $n\in\{1,\cdots, N\}$
\begin{equation}
	\sup_{0\leq t \leq T}\mathbb{E}[|Y_t^{N,n}|^{2p}] \leq K,  \label{prop1}
	\end{equation}
for an appropriate constant $K<\infty$ with the upper bounds being independent of $N$.

Together with Assumption \ref{Xt} and Lemma \ref{Zt} we can finally get from (\ref{prop0}) the bound advertised in the lemma.

It remains to address the claim on the martingale property of the stochastic integrals. Indeed, using the  same truncation argument as in the proof of Lemma \ref{sol} we get that for each fixed $M>0$ the terms in question are true martingales. Then, because the corresponding upper bound in (\ref{prop0}) turns out to be uniform with respect to $M>0$ and due to  Lemma \ref{tauM} the claim is proven, concluding the proof of the lemma.
\end{proof}

\begin{proof}[Proof of Lemma \ref{uniqueQ}]
	As in the proof of Lemma \ref{sol}, if we can prove that the result holds for the truncated processes which has $b_{M}$ in place of $b$, then, due to Lemma \ref{tauM}, the result will be true for the limit as $M\rightarrow \infty$ as well. Therefore, we can restrict attention to the case where $b(\lambda,\alpha)$ is replaced by $b_{M}(\lambda,\alpha)$ for an arbitrary constant $M<\infty$.

In addition, let $S(\mathbb{R}_{+})$ be the set of $\mathbb{R}_{+}$ valued,
adapted, continuous processes  $\{\lambda_t\}_{t\in[0,T]}$ such that
\begin{equation*}
 \left\Vert \lambda\right\Vert_{T,1}=\sup_{0\leq t\leq T}\mathbb{E}|\lambda_t|<\infty.
\end{equation*}
The space $S(\mathbb{R}_{+})$ endowed with the norm $\left\Vert \cdot\right\Vert_{T,1}$ is a Banach space.

Consider a nonnegative process $U_t(\hat{p}) \in S(\mathbb{R}_+)$ and set $\xi(U)_t=(\xi_1(U)_t, \ldots, \xi_r(U)_t)$
\[
\xi_i(U)_t= \int_{\hat{p}\in \mathcal{P}} \left(1-\mathbb{E}_{\mathcal{V}_t} \left\{\exp\left[-\int_{0}^t U_s(\hat{p})ds \right] \right\}\right) \pi(dp) \Lambda_0(d\lambda_0).
\]
	
For given $U_t(\hat{p}), U'_t(\hat{p}) \in S(\mathbb{R}_+)$, we consider $\xi_t=\xi(U)_t=(\xi_1(U)_t, \ldots, \xi_r(U)_t)$, $\xi'_t=\xi(U')_t=(\xi_1(U')_t, \ldots, \xi_r(U')_t)$. Define the map $\Phi:S(\mathbb{R}_{+})\mapsto S(\mathbb{R}_{+})$ by letting $\Phi(U)$ denoting the unique solution $\lambda=\Phi(U)$ to the SDE
\begin{align}
\lambda_{t}&= \lambda_{0}+\int_0^{t}b_M(\lambda_s,a) ds + \int_0^{t} \sigma \lambda_s^{\rho}  dW_s+  \beta^C\cdot\xi_t + \beta^S \int_0^{t} \lambda_s dX_s.\nonumber
 \end{align}

Similarly, we define $\lambda'=\Phi(U')$ for the solution of the equation with $\xi'$ in place of $\xi$. Then, the process $R_{t\wedge \tau_{M}\wedge \tau'_{M}}=\lambda_{t\wedge \tau_{M}\wedge \tau'_{M}}-\lambda'_{t\wedge \tau_{M}\wedge \tau'_{M}}$ satisfies
\begin{align}
R_{t\wedge \tau_{M}\wedge \tau'_{M}}&= \int_0^{t\wedge \tau_{M}\wedge \tau'_{M}}\left(b_M(\lambda_s,a)-b_M(\lambda'_s,a)\right) ds + \int_0^{t\wedge \tau_{M}\wedge \tau'_{M}} \sigma \left( \lambda_s^{\rho} - {\lambda'_s}^{\rho} \right) dW_s\nonumber\\
 &+ \sum_{i=1}^r \beta^C_i \int_0^{t\wedge \tau_{M}\wedge \tau'_{M}} (d\xi_s-d\xi'_s) + \beta^S \int_0^{t\wedge \tau_{M}\wedge \tau'_{M}} R_s dX_s.\nonumber
 \end{align}
	
Apply It\^{o}'s formula to $\psi_{\eta}(R_t)$ where $\psi$ is defined in Equation (\ref{psi}), and get	
\begin{align}
	&\psi_{\eta}(R_{t\wedge \tau_M\wedge\tau'_M})  = \int_0^{t\wedge \tau_M\wedge\tau'_M} \left(b_{M}(\lambda_s,a)-b_{M}(\lambda'_s,a)\right) \psi'_{\eta}(R_s) ds \nonumber\\
&\quad+ \sum_{i=1}^r \beta^C_i \int_0^{t\wedge \tau_M\wedge\tau'_M} \psi'_{\eta}(R_s) (d\xi_s-d\xi'_s)\nonumber\\
	&\quad + \frac{\sigma^2}{2} \int_0^{t\wedge \tau_M\wedge\tau'_M} \left( \lambda_s^{\rho} - {\lambda'_s}^{\rho} \right)^{2} \psi^{''}_{\eta}(R_s) ds + \sigma \int_0^{t\wedge \tau_M\wedge\tau'_M} \left( \lambda_s^{\rho} - {\lambda'_s}^{\rho} \right)\psi'_{\eta}(R_s) dW_s\nonumber\\
	& \quad+ \beta^S \int_0^{t\wedge \tau_M\wedge\tau'_M} b_0(X_s)R_s \psi'_{\eta}(R_s) ds + \beta^S \int_0^{t\wedge \tau_M\wedge\tau'_M} \sigma_0(X_s) R_s \psi'_{\eta}(R_s) dVs\nonumber\\
	& \quad+ \int_0^{t\wedge \tau_M\wedge\tau'_M} \frac{1}{2} \left(\beta^S \sigma_0(X_s)R_s\right)^2\psi^{''}_{\eta}(R_s)ds.\nonumber
	\end{align}
	
Taking expectation of $\psi_{\eta}(R_t)$ we get	
	\begin{align}
&	\mathbb{E}\psi_{\eta}(R_{t\wedge \tau_M\wedge\tau'_M})  = \mathbb{E}\int_0^{t\wedge \tau_M\wedge\tau'_M} \left(b_M(\lambda_s,a)-b_M(\lambda'_s,a)\right) \psi'_{\eta}(R_s) ds\nonumber\\
 &\quad + \sum_{i=1}^r \beta^C_i \mathbb{E}\int_0^{t\wedge \tau_M\wedge\tau'_M} \psi'_{\eta}(R_s) (d\xi_s-d\xi'_s)\nonumber\\
	&\quad+ \frac{\sigma^2}{2} \mathbb{E}\int_0^{t\wedge \tau_M\wedge\tau'_M} \left( \lambda_s^{\rho} - {\lambda'_s}^{\rho} \right)^{2} \psi^{''}_{\eta}(R_s) ds +\beta^S \mathbb{E}\int_0^{t\wedge \tau_M\wedge\tau'_M} b_0(X_s)R_s \psi'_{\eta}(R_s) ds \nonumber\\
	&\quad + \mathbb{E}\int_0^{t\wedge \tau_M\wedge\tau'_M} \frac{1}{2} \left(\beta^S \sigma_0(X_s)R_s\right)^2\psi^{''}_{\eta}(R_s)ds.\nonumber
	\end{align}
	
As in Lemma A.2. in \cite{GSSS}, the latter expression yields
\[
\mathbb{E} |\xi_t-\xi_t'| \leq K_{\ref{bdd}} t\int_{\hat{p}\in \mathcal{P}} ||U_.(\hat{p})-U_.'(\hat{p})||_t \pi(dp) \Lambda_0(d\lambda_0).
\]
	
Therefore, we have	
\[
\left|\sum_{i=1}^r \beta^C_i \mathbb{E}\int_0^{t\wedge \tau_M\wedge\tau'_M} \psi'_{\eta}(R_s) (d\xi_s-d\xi'_s)\right| \leq t C_0   K_{\ref{bdd}} \int_{\hat{p}\in \mathcal{P}} ||U_.(\hat{p})-U_.'(\hat{p})||_t \pi(dp) \Lambda_0(d\lambda_0).
\]

At the same time, we have

$$\left|\mathbb{E}\int_0^{t\wedge \tau_M\wedge\tau'_M} \left(b_M(\lambda_s,a)-b_M(\lambda'_s,a)\right) \psi'_{\eta}(R_s) ds \right| \leq C_{1,M} \int_0^{t\wedge \tau_M\wedge\tau'_M} \mathbb{E} |R_s|ds.$$

For the third term we obtain
\begin{align}
\left|\frac{\sigma^2}{2} \mathbb{E}\int_0^{t\wedge \tau_M\wedge\tau'_M} \left( \lambda_s^{\rho} - {\lambda'_s}^{\rho} \right)^2 \psi^{''}_{\eta}(R_s) ds \right|
& \leq \left|\frac{\sigma^2}{2} \mathbb{E}\int_0^{t\wedge \tau_M\wedge\tau'_M} \left|\lambda_s^{2\rho} - {\lambda'_s}^{2\rho} \right| \psi^{''}_{\eta}(R_s) ds \right|\nonumber\\
& \leq \left|\frac{\sigma^2}{2} \mathbb{E}\int_0^{t\wedge \tau_M\wedge\tau'_M} C_{2,M} |R_s| \psi^{''}_{\eta}(R_s) ds \right|\nonumber\\
& \leq C_{2,M} {K_{\ref{bdd}}}^2 \frac{2t}{\ln \eta^{-1}}\nonumber\\
& = C_2(\eta,t,M).\nonumber
\end{align}

Now, let us assume $b_0$ is bounded. Then we have
\[
\left| \beta^S \mathbb{E}\int_0^{t\wedge \tau_M\wedge\tau'_M} b_0(X_s)R_s \psi'_{\eta}(R_s) ds \right| \leq K_{\ref{bdd}} K \int_0^{t\wedge \tau_M\wedge\tau'_M} \mathbb{E} |R_s|ds.
\]

For the last term
\begin{align}
&\left|\mathbb{E}\int_0^{t\wedge \tau_M\wedge\tau'_M} \frac{1}{2}\left(\beta^S \sigma_0(X_s)R_s\right)^2\psi^{''}_{\eta}(R_s)ds \right|\nonumber\\
& \leq \frac{1}{2} (\beta^S)^2 \left|\mathbb{E}\int_0^{t\wedge \tau_M\wedge\tau'_M}  (\sigma_0 (X_s))^2 (\lambda_s^2-{\lambda_s^{'}}^2) \psi^{''}_{\eta}(R_s)ds \right| \nonumber\\
& \leq \frac{1}{2} (\beta^S)^2 \left|\mathbb{E}\int_0^{t\wedge \tau_M\wedge\tau'_M} (\sigma_0(X_s))^2 C_{3,M} |R_s| \psi^{''}_{\eta}(R_s)ds\right|\nonumber\\
& \leq \frac{1}{2} C_{3,M} {K_{\ref{bdd}}}^2 \frac{1}{\ln \eta^{-1}} \mathbb{E}\int_0^{t\wedge \tau_M\wedge\tau'_M} (\sigma_0(X_s))^2 ds\nonumber\\
& = C_3(\eta,t,M).\nonumber
\end{align}

For any $0<M<\infty$, we have that the both terms $C_2(\eta,t,M)$ and $C_3(\eta,t,M)$  go to zero as  $\eta \downarrow 0$ or $t \downarrow 0$.
	
Thus, for any $M<\infty$, we have
\begin{align}
		\mathbb{E}\psi_{\eta}(R_{t\wedge \tau_M\wedge\tau'_M})&\leq (C_1+K K_{\ref{bdd}}) \int_0^{t\wedge \tau_M\wedge\tau'_M} \mathbb{E} |R_s|ds\nonumber\\
		&+ t C_0  K_{\ref{bdd}} \int_{\hat{p}\in \hat{\mathcal{P}}} ||U_.(\hat{p})-U_.'(\hat{p})||_t \pi(dp) \Lambda_0(d\lambda_0) + C_2(\eta,t,M)+C_3(\eta,t,M).\nonumber
	\end{align}

Then applying $|x| \leq \psi_{\eta}(x)+\sqrt{\eta}$ and using Gronwall's Lemma we have
	\begin{align}
	\mathbb{E} |R_{t\wedge \tau_M\wedge\tau'_M}| \leq & \left[ t C_0  K_{\ref{bdd}} \int_{\hat{p}\in \hat{\mathcal{P}}} ||U_.(\hat{p})-U_.'(\hat{p})||_t \pi(dp) \Lambda_0(d\lambda_0) + C_2(\eta,t,M)\right.\nonumber\\
&\quad\left.+C_3(\eta,t,M) + \sqrt{\eta} \right]\cdot e^{ (C_1+K K_{\ref{bdd}})t }.\nonumber	
	\end{align}

Send $\eta\downarrow 0$ and notice that we can pick $t$ small enough such that  $C(t)=t C_0  K_{\ref{bdd}} e^{ (C_1+K K_{\ref{bdd}})t } <1$. Hence, we obtain
	$$\mathbb{E} |R_{t\wedge \tau_M\wedge\tau'_M}| \leq C(t) \int_{\hat{p} \in \hat{\mathcal{P}}} ||U(\hat{p})-U'(\hat{p})||_{t,1} \pi(dp) \Lambda_0(d\lambda_0),$$
where $C(t)<1$.

Hence, we have obtained that the map $\Phi$ defined by $\lambda=\Phi(U)$ with $U\in S(\mathbb{R}_{+})$ is a contraction on $S(\mathbb{R}_{+})$ equipped with the $L^{1}$ norm. Standard Picard iteration shows that there is a fixed point $\lambda^{*}$ such that $\lambda^{*}_{t}=\Phi_{t}(\lambda^{*})$ for $0\leq t\leq t_{1}\wedge \tau_M\wedge\tau'_M$ with $C(t_{1})<1$. This fixed point is unique, since
\begin{align}
\sup_{t\leq t_1\wedge \tau_M\wedge\tau'_M}\left|\lambda_t^*(\hat{p})-\lambda_t^{'*}(\hat{p})\right| & \leq C(t) \int_{\hat{p}} \sup_{t\leq t_1\wedge \tau_M\wedge\tau'_M}\left|\lambda_t^*(\hat{p})-\lambda_t^{'*}(\hat{p})\right| \pi(dp)\Lambda_0(d\lambda_0)\nonumber
\end{align}

So, we have that
$$\sup_{t\leq t_1\wedge \tau_M\wedge\tau'_M}\left|\lambda_t^*(\hat{p})-\lambda_t^{'*}(\hat{p})\right| =0$$

Thus, we have proven  uniqueness of $\lambda_t^*$ on $[0,t_1\wedge \tau_M\wedge\tau'_M]$. Then, starting from $t_1$ we obtain uniqueness on $[t_1 \wedge \tau_M\wedge\tau'_M, (2t_1)\wedge \tau_M\wedge\tau'_M]$ in the same way and we conclude by filling in the whole interval $[0,T\wedge \tau_M\wedge\tau'_M]$.

Next, letting $M \to \infty$, and using Lemma \ref{tauM} which implies that $\tau_M, \tau'_M$ converge to infinity almost surely, we have the proof of the lemma for bounded $b_0$.

For the case of general $b_0$, Assumption \ref{unbdd_b} guarantees that
\[
M_T=e^{-\int_0^T u(X_s) dV_s - 1/2 \int_0^T |u(X_s)|^2 ds},
\]
is a martingale by Novikov's condition. Assumption \ref{unbdd_b} also assumes $\mathbb{E}|M_T|^p < \infty$. Then the result follows from the proof of Lemma A.6. in \cite{GSSS}. \end{proof}

\begin{lem}\label{tauM}
		For any $T>0$ and for $\tau_{M}$ defined via (\ref{Eq:randomTrancationTime}), we have that $$\lim_{M\to \infty}\mathbb{P} [\tau_M <T]=0.$$
	\end{lem}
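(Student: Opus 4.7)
The plan is to reduce the claim to a uniform-in-$M$ moment bound on $\lambda^M$, then apply Markov's inequality. Write $\lambda^M_t \myeq e^{-\Gamma_t}((Y_t^M+Z_t)\vee 0)$; this is nonnegative (as in the proof of Lemma \ref{sol}) and continuous on $[0,\tau_M)$. Since $\tau_M \leq M$ by definition, for any $M>T$ the event $\{\tau_M<T\}$ forces the infimum in \eqref{Eq:randomTrancationTime} to be strictly less than $T$, and by continuity this implies $\sup_{t\leq T\wedge\tau_M}\lambda^M_t \geq M$. Consequently, for $M>T$,
\[
\mathbb{P}[\tau_M<T] \leq \mathbb{P}\Big[\sup_{t\leq T\wedge \tau_M}\lambda^M_t \geq M\Big] \leq \frac{1}{M^p}\,\mathbb{E}\Big[\sup_{t\leq T\wedge\tau_M}(\lambda^M_t)^{p}\Big]
\]
by Markov's inequality. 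Hence it suffices to prove that the right-hand expectation is bounded by a constant independent of $M$.

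To obtain this bound, I would mimic the proof of Lemma \ref{lambda} applied to the single-name SDE that $Y^M$ satisfies. Since $b_M(\lambda,a)=b(\lambda,a)$ for $|\lambda|\leq M$ and $|\lambda^M_t|\leq M$ for all $t\leq \tau_M$, the full dissipativity from Assumption \ref{assumptionb} is available throughout the stopped interval $[0,t\wedge\tau_M]$. Apply It\^o's formula to $|Y^M_t+Z_t|^{2p}$, localize by $\tau_M$ so the local-martingale terms become true martingales, and use exactly the estimates in lines \eqref{a1}--\eqref{a4_2} of the proof of Lemma \ref{lambda}: the dissipative term is negative, the diffusion and exogenous-risk terms are controlled by $|Y^M|^{2p}$ plus moments of $Z$ and $e^{\Gamma}$, and Lemma \ref{Zt} together with Assumption \ref{Xt} supplies the latter uniformly in $M$. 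A Gronwall argument then yields $\sup_{t\leq T}\mathbb{E}|Y_{t\wedge\tau_M}^M+Z_{t\wedge\tau_M}|^{2p}\leq K$ with $K$ independent of $M$, and hence $\sup_{t\leq T}\mathbb{E}(\lambda^M_{t\wedge\tau_M})^{p}\leq K'$.

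To pull the supremum inside the expectation, take the supremum over $t\leq T\wedge\tau_M$ before taking expectations and apply the Burkholder--Davis--Gundy (or Doob $L^{p}$) inequality to each stochastic integral piece; this produces only integrals of the same type already controlled above, so a final application of Gronwall gives
\[
\mathbb{E}\Big[\sup_{t\leq T\wedge\tau_M}(\lambda^M_t)^{p}\Big] \leq K'',
\]
with $K''$ independent of $M$. Combined with the Markov step, this yields $\mathbb{P}[\tau_M<T] \leq K''/M^p \to 0$ as $M\to\infty$.

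The main obstacle is the interplay between the superlinear drift $b$ (growth of order $q>1$ from Assumption \ref{assumptionb}) and the multiplicative diffusion $\beta^S\lambda\,dX_t$: a naive Gronwall would blow up. This is exactly where the dissipativity $\lambda b(\lambda,a)\leq -\gamma(a)|\lambda|^d$ together with $d>1$ must absorb the quadratic contribution of the systematic-risk diffusion after the exponential change of variables $\lambda=e^{-\Gamma}\bar Y$, so the calculation must be done on $\bar Y=Y^M+Z$ (not $\lambda$ directly), mirroring Lemma \ref{lambda}. Once this is in place, uniform-in-$M$ boundedness follows and the lemma is immediate.
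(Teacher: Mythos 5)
Your proposal is correct and follows essentially the same route as the paper: Markov's inequality reduces the claim to a uniform-in-$M$ bound on $\mathbb{E}\bigl[\sup_{t\wedge\tau_M\leq T}|Y_t^M+Z_t|^{2p}\bigr]$, which is then obtained by It\^{o}'s formula on $\bar{Y}=Y^M+Z$, the dissipativity of $b$ to discard the drift contribution, Burkholder--Davis--Gundy and Young's inequalities for the stochastic integrals, Lemma \ref{Zt} and Assumption \ref{Xt} for the $Z$ and $e^{\Gamma}$ moments, and Gronwall. The only cosmetic differences are that the paper works with the exponent $1/M$ in Markov (handling the second and fourth moments of $\bar{Y}$ via Cauchy--Schwarz, Fatou and Jensen) rather than $1/M^p$, and bounds the drift term directly by $\lambda_0$ before squaring; neither affects the argument.
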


\begin{proof}[Proof of Lemma \ref{tauM}]
	
	For any $T>0$,
	$$\mathbb{P}[\tau_M <T] \leq \frac{1}{M} \mathbb{E} \left[\sup_{t \wedge \tau_M <T} |e^{-\Gamma_t} ((Y_t^{M} + Z_t)\vee 0)|\right].$$
	
	Due to Assumption \ref{Xt} and Lemma \ref{Zt}, it is enough to prove that
	$$\mathbb{E} \sup_{t \wedge \tau_M \leq T} |Y_t^{M}+ Z_t|^{2} \leq \tilde{K}$$
	where $\tilde{K}$ is independent of $M$.	Now $\sup_{t \leq T} |Y_t^M+ Z_t|^{2}$ can be estimated similarly as before. Indeed, applying It\^{o}'s formula to $|Y_t^M+ Z_t|^{2}$, we get
	\begin{align}
		  |Y_t^M+Z_t|^2&= \lambda_0 + \int_0^t 2 |Y_s^M+Z_s| \ \ e^{\Gamma_s} [b(e^{-\Gamma_s}((Y_s^M+Z_s)\vee0),a)] ds\nonumber\\
		& \quad + \frac{1}{2} \sigma^2 \int_0^t 2 e^{2\Gamma_s (1-\rho)} ((Y_s^M+Z_s)\vee0)^{2\rho} ds\nonumber\\
		& \quad + \frac{1}{2} (\beta^S)^2 \int_0^t 2 (\sigma_0(X_s))^2 ((Y_s^M+Z_s)\vee0)^2 ds\nonumber\\
		& \quad + \sigma \int_0^t 2 |Y_s^M+Z_s| \ \ e^{\Gamma_s (1-\rho)} ((Y_s^M+Z_s)\vee0)^{\rho} dW_s\nonumber\\
		& \quad + \beta^S \int_0^t 2 |Y_s^M+Z_s| \ \ \sigma_0(X_s) ((Y_s^M+Z_s)\vee0) dV_s\nonumber\\
		& \quad + \int_0^t 2 |Y_s^M+Z_s| \ \ e^{\Gamma_s} \beta^C \cdot d {\xi}_s\nonumber
	\end{align}
	
	The first line in the right hand side of the expression above is bounded due to Assumption \ref{assumptionb} (similarly to (\ref{a2}) we can assume without loss of generality that the dissipativity condition holds everywhere) and we have
\begin{align}
		& \lambda_0 + \int_0^t 2 |Y_s^M+Z_s| \ \ e^{\Gamma_s} [b(e^{-\Gamma_s}((Y_s^M+Z_s)\vee0),a)] ds\nonumber\\
		& \qquad \leq \lambda_0 - \int_0^t 2 \ \ e^{2\Gamma_s} \gamma(a) |e^{-\Gamma_s}(Y_s^M+Z_s)|^d ds \leq \lambda_0\nonumber
	\end{align}
	
	Therefore, if we square both sides in the It\^{o}'s formula expression, we will get
	\begin{align}
		\quad |Y_t^M+Z_t|^4
		& \leq 6 {\lambda_0}^2 + 6 \sigma^4 \left[\int_0^t e^{2\Gamma_s (1-\rho)} |Y_s^M+Z_s|^{2\rho} ds \right]^2\nonumber\\
		& \quad + 6 (\beta^S)^4 \left[\int_0^t (\sigma_0(X_s))^2 |Y_s^M+Z_s|^2 ds \right]^2\nonumber\\
		& \quad + 24 \sigma^2 \left[\int_0^t |Y_s^M+Z_s| \ \ e^{\Gamma_s (1-\rho)} \ \ |Y_s^M+Z_s|^{\rho} dW_s\right]^2\nonumber\\
		& \quad + 24 (\beta^S)^2 \left[\int_0^t |Y_s^M+Z_s| \ \ \sigma_0(X_s) \ \ |Y_s^M+Z_s| dV_s\right]^2\nonumber\\
		& \quad + 24 \left[\int_0^t |Y_s^M+Z_s| \ \ e^{\Gamma_s} \beta^C \cdot d {\xi}_s \right]^2\nonumber
	\end{align}
	
	Taking expectation of the supremum of the second term and using H\"{o}lder inequality, together with the fact that $\rho <1$ and Assumption \ref{Xt} we have
	\begin{align}
		 \mathbb{E} \sup_{t \wedge \tau_M \leq T} \left(\int_0^t e^{2\Gamma_s (1-\rho)} |(Y_s^M+Z_s)|^{2\rho} ds \right)^2
		& \leq \mathbb{E} \left[\left(\int_0^T e^{2\Gamma_s (1-\rho)} |Y_s^M+Z_s|^{2\rho} ds \right)^2\right]\nonumber\\
		& \leq \mathbb{E} \left(\int_0^T |Y_s^M+Z_s|^{4\rho} ds \int_0^T e^{4\Gamma_s (1-\rho)} ds \right)\nonumber\\
		& \leq c_{p_1} \mathbb{E} \int_0^T \sup_{u \wedge \tau_M \leq s} |Y_u^M+Z_u|^4 ds,\nonumber
	\end{align}
		for some constant $c_{p_1}>0$. Similar calculations together with Assumption \ref{s0}, gives a similar bound for the third term as well. 	Using Burkholder-Davis-Gundy inequality for the fourth term, together with Young's inequality, the fact that $\rho <1$ and Assumption \ref{Xt} we get
	\begin{align}
		& \quad \mathbb{E} \left[ \sup_{t \wedge \tau_M \leq T} \left[\int_0^t |Y_s^M+Z_s| \ \ e^{\Gamma_s (1-\rho)} \ \ |Y_s^M+Z_s|^{\rho} dW_s\right]^2 \right]\nonumber\\
		& \leq c_{p_2} \mathbb{E} \int_0^T |Y_s^M+Z_s|^{2(1+\rho)} \ \ e^{2\Gamma_s (1-\rho)} ds \leq c_{p_3} + c_{p_4} \mathbb{E} \int_0^T \sup_{u \wedge \tau_M \leq s}  |Y_u^M+Z_u|^4 ds\nonumber
	\end{align}
	
	where $c_{p_2}$, $c_{p_3}$ and $c_{p_4}$ are some positive constants. The stochastic integral term with respect to the $V-$ Brownian motion is treated analogously using Assumption \ref{s0}.	For the last term, we use Young's inequality and Assumptions \ref{bdd} and \ref{s0}. We obtain	
	\begin{align}
		& \quad \mathbb{E} \sup_{t \wedge \tau_M \leq T} \left( 24 \int_0^t |Y_s^M+Z_s| \ \ e^{\Gamma_s} \beta^C \cdot d {\xi}_s \right)^2\nonumber\\
		& \leq c_{p_5} \ \ \mathbb{E} \left[\left( \sup_{t \wedge \tau_M \leq T} \int_0^t |Y_s^M+Z_s| \ \ e^{\Gamma_s} \beta^C \cdot d {\xi}_s \right)^2\right]\nonumber\\
		& \leq c_{p_5} \ \ \mathbb{E} \left[\left( \sup_{t \wedge \tau_M \leq T} |Y_t^M+Z_t| \sup_{t \wedge \tau_M <T} \int_0^t e^{\Gamma_s} \beta^C \cdot d {\xi}_s\right)^2\right]\nonumber\\
		& \leq c_{p_5} \ \ \mathbb{E} \left[ \frac{\epsilon}{2} \sup_{t \wedge \tau_M \leq T} |Y_t^M+Z_t|^4 + \frac{1}{2\epsilon} [\sup_{t \wedge \tau_M \leq T} \int_0^t e^{\Gamma_s} \beta^C \cdot d {\xi}_s]^4\right]\nonumber\\
		& \leq c_{p_5}\epsilon \ \ \mathbb{E} \sup_{t \wedge \tau_M \leq T} |Y_t^M+Z_t|^4 + c_{p_{\epsilon}}\nonumber
	\end{align}
	
	for any $\epsilon>0$ and correspondent constant $c_{p_{\epsilon}}>0$. Therefore we can choose $\epsilon$ small enough so $c_{p_5} \epsilon <1$ and we can move this term to the left hand side.	

Thus, combining all the terms together with Assumption \ref{bdd} leads to the estimate	
	\begin{align*}
		\mathbb{E} \sup_{t \wedge \tau_M \leq T} |Y_t^M+Z_t|^{4}& \leq  K\left(1+ \int_0^T  \mathbb{E} \sup_{u \wedge \tau_M \leq s} |Y_u^M+Z_u|^{4} ds\right).
	\end{align*}
	
	Then by Gronwall lemma, the term $\mathbb{E} \sup_{t \wedge \tau_M \leq T} |Y_t^M+Z_t|^{4}$ is bounded by a constant which is independent of $M$ and we conclude by Fatou's lemma followed Jensen's inequality.	
\end{proof}

\end{document}